\newcommand {\dif}{\mathrm{d}}
\newcommand {\I}{\mathrm{i}}
\newtheorem{prop}{Proposition}
\newtheorem{rmk}{Remark}
\theoremstyle{definition}
\newtheorem{theo}{Theorem}
\begin{document}
	\title{\bf Integrable variant Blaszak-Szum lattice equation}
	\author{Wei-Kang Xie$^{1}$ and Guo-Fu Yu$^{1}\footnote{Corresponding author. Email address: gfyu@sjtu.edu.cn}$ \\	 
	 $^{1}$	 School of Mathematical Sciences, CMA-Shanghai,\\ Shanghai Jiao Tong University,Shanghai 200240, PR China\\
}
	\date{}
	\maketitle
	
\begin{abstract}
We derive a novel variant of the Blaszak-Szum lattice equation by introducing a new class of trigonometric-type bilinear operators. By employing Hirota's bilinear method, we obtain the Gram-type determinant solution of the variant Blaszak-Szum lattice equation. One-soliton and two-soliton solutions are constructed, with a detailed analysis of the asymptotic behaviors of the two-soliton solution. A B\"acklund transformation for the variant Blaszak-Szum lattice equation is established. By virtue of this B\"acklund transformation, multi-lump solutions of the equation are further constructed.  Rational solutions are derived by introducing two differential operators applied to the determinant elements; in particular, lump solutions derived via these differential operators can be formulated in terms of Schur polynomials. Through parameter variation, three types of breather solutions are obtained, including the Akhmediev breather, Kuznetsov-Ma breather, and a general breather that propagates along arbitrary oblique trajectories. Finally, numerical three-periodic wave solutions to the variant Blaszak-Szum lattice equation are computed using the Gauss-Newton method.
\end{abstract}
	
	{\bf Keywords}: Blaszak-Szum lattice, Bilinear method, Breather, Lump solution.
	
\section{Introduction}
A rogue wave refers to a type of abnormally large surface wave that occurs in the ocean, characterized by its sudden appearance and disappearance. The emergence of rogue waves causes significant threats to maritime vessels. As a result, the investigation of rational solutions in nonlinear systems, such as rogue waves and lump solutions, has become a major focus of current research. The study of rational solutions has extended from oceanography into other branches of physics, encompassing fields such as nonlinear optics, plasma physics, and Bose-Einstein condensates. There are many mathematical approaches to study rational solutions of integrable systems, such as Darboux Transformation \cite{Gu}, Hirota's bilinear method \cite{Hirota}, inverse scattering method \cite{Ablowitz1991} and so on. In recent years, Yang and Yang studied the rogue wave patterns in the nonlinear Schr\"odinger equation and presented that the patterns are analytically determined by the root structures of the Yablonskii-Vorob’ev polynomial \cite{Yang,Yang1}. Later, they disclosed the relationship between pattern of Kadomtsev-Petviashvili (KP) I equation and Yablonskii-Vorob’ev polynomial \cite{Yang2}. Based on their work, Chakravarty and Zowada investigated the connection between the theory of integer partitions and high-order lump solutions of the KP I equation \cite{Chakravarty,Chakravarty2}.

Recently, Liu et al. proposed an effective method to generate novel discrete integrable systems \cite{Liu1,Liu2,Liu3} by introducing a new class of  trigonometric-type bilinear operators $\sin(\delta D_z)$ and $\cos(\delta D_z)$ as
\begin{align}\label{tri}
    \sin(\delta D_z)=\sum_{j=0}^{\infty}\frac{(-1)^j(\delta D_z)^{2j+1}}{(2j+1)!},\quad \cos(\delta D_z)=\sum_{j=0}^{\infty}\frac{(-1)^j(\delta D_z)^{2j}}{(2j)!}.
\end{align}
With the help of new bilinear operators, Liu et al. derived novel variant discrete equations, such as the Toda lattice equation, the Leznov lattice equation and the differential-difference KP equation. They also investigated the dynamics of the variant discrete equations, including soliton, breather, lump and numerical three-periodic wave solutions.  It is well-documented that both the Leznov lattice equation and the Blaszak-Szum (BS) lattice equation originate from the two-dimensional Toda lattice. From this perspective, they share certain analogous algebraic structures of solutions and relevant solution properties. Inspired by these pioneering studies, the present work is devoted to investigating a novel variant of the BS lattice equation.

In 2001, Blaszak and Szum presented an integrable special (2+1)-dimensional lattice equation \cite{BS-1}
\begin{align}
    \frac{\partial u(n)}{\partial t} &= u(n)\mathcal{H}^{-1}p(n-1),\label{BS1} \\
    \frac{\partial v(n)}{\partial t} &= u(n+1) - u(n) + (E+1)^{-1}\frac{\partial p(n)}{\partial y}, \\
    \frac{\partial p(n)}{\partial t} &= v(n+1) - v(n) - p(n)\mathcal{H}^{-1}p(n),\label{BS3}
\end{align}
where $Eu(n)=u(n+1)$ and $\mathcal{H}=(E+1)/(E-1)$. Putting $w(n)=(E-1)^{-1}p(n)$, the BS lattice (\ref{BS1})-(\ref{BS3}) becomes \cite{TamHuQian}
\begin{align}
    &\frac{\partial}{\partial t} u(n)=u(n)(w(n)-w(n-1)),\label{gBS-1}\\
    &\frac{\partial}{\partial t} v(n)=u(n+1)-u(n)+\frac{\partial}{\partial y}w(n),\label{gBS-2}\\
    &\frac{\partial}{\partial t}w(n+1)+\frac{\partial}{\partial t}w(n)=v(n+1)-v(n)-w(n+1)^2+w(n)^2.\label{gBS-3}
\end{align}
Through the dependent variable transformation
\begin{align}
    u(n)=\frac{\tau(n+1)\tau(n-1)}{\tau(n)^2},\quad v(n)=\frac{D_t^2\tau(n)\cdot\tau(n+1)}{\tau(n)\tau(n+1)},\quad w(n)=\left(\ln\frac{\tau(n+1)}{\tau(n)}\right)_t,
\end{align}
and introducing an auxiliary variable $z$, we have bilinear BS lattice equation
\begin{align}
    &\left(D_ze^{(1/2)D_n}-D_t^2e^{(1/2)D_n}\right)\tau(n)\cdot\tau(n)=0,\label{BSb-1}\\
    &(D_tD_z-D_tD_y)\tau(n)\cdot\tau(n)=4\sinh^2\left(\frac{1}{2}D_n\right)\tau(n)\cdot\tau(n).\label{BSb-2}
\end{align}
The Hirota's bilinear differential operators are defined by
\begin{align}
    &D_{x}^{k}D_{y}^{m}f \cdot g \equiv \left( \frac{\partial}{\partial x} - \frac{\partial}{\partial x^{\prime}} \right)^{k}\left( \frac{\partial}{\partial y} - \frac{\partial}{\partial y^{\prime}} \right)^{m} f(x, y) g(x^{\prime}, y^{\prime}) \bigg|_{ x^{\prime}=x, y^{\prime}=y},\\
    &e^{\delta D_n}a(n)\cdot b(n)=a(n+\delta)b(n-\delta),\\
    &\sinh(\delta D_n)=\frac{1}{2}\left(e^{\delta D_n}-e^{-\delta D_n}\right)=\sum_{j=0}^{\infty}\frac{(\delta D_n)^{2j+1}}{(2j+1)!}.
\end{align}
In 2005, Yu et al. first presented the Casorati and Grammian determinant solutions to the bilinear equations (\ref{BSb-1}) and (\ref{BSb-2}) \cite{Yu}. Li et al. gave the Wronskian solutions to the B\"acklund transformation of (\ref{BSb-1}) and (\ref{BSb-2}) \cite{BT}. Sheng et al. derived solitons, breathers and rational solutions to the BS lattice equation both on the constant and periodic background by Hirota's bilinear method \cite{Sheng}.

This paper is structured as follows. In Section \ref{sec2}, we derive the integrable variant of the BS lattice equation and its Gram-type determinant solution. In Section \ref{sec3}, we investigate the dynamics of one-soliton and two-soliton solutions, alongside an analysis of the asymptotic behaviors of the two-soliton solution. In Section \ref{sec4}, bright lump, dark lump, and fundamental lump solutions are constructed via the B\"acklund transformation, with their formulations expressed in terms of Schur polynomials. In Section \ref{sec5}, breather solutions to the variant BS lattice equation are derived, including the Kuznetsov-Ma breather, Akhmediev breather, and a general breather solution. In Section \ref{sec6}, numerical three-periodic wave solutions are computed using the Gauss-Newton method. Finally, conclusions and discussions are presented.

\section{The derivation of variant BS lattice equation}\label{sec2}
In this section, we derive the variant BS lattice equation through the transformation
\begin{align}\label{trans}
    (D_t,D_y,D_z,D_n)\rightarrow\I(D_t,D_y,D_z,D_n),
\end{align}
where $\I=\sqrt{-1}$.
With the help of (\ref{trans}), we get the novel bilinear equations from \eqref{BSb-1}-\eqref{BSb-2}
\begin{align}
    &\left(D_z\sin\left(\frac{1}{2}D_n\right)-D_t^2\cos\left(\frac{1}{2}D_n\right)\right)\tau(n)\cdot\tau(n)=0,\label{bilinear-1}\\
    &\left(D_tD_z-D_tD_y\right)\tau(n)\cdot\tau(n)=4\sin^2\left(\frac{1}{2}D_n\right)\tau(n)\cdot\tau(n),\label{bilinear-2}
\end{align}
which can be viewed as the integrable variants of (\ref{BSb-1}) and (\ref{BSb-2}). By the dependent variable transformation

\begin{align}\label{solution}
    u(n)=\frac{\cos(D_n)\tau(n)\cdot\tau(n)}{\tau(n)^2},\quad v(n)=\frac{D_t^2\cos(\frac{1}{2}D_n)\tau(n)\cdot\tau(n)}{\cos(\frac{1}{2}D_n)\tau(n)\cdot\tau(n)},\quad w(n)=-2\frac{\partial}{\partial t}\sin\left(\frac{1}{2}\frac{\partial}{\partial n}\right)\ln\tau(n),
\end{align}
we obtain the variant BS lattice equation
\begin{align}
           &\frac{\partial}{\partial t} u(n)=2 u(n)\sin\left(\frac{1}{2}\frac{\partial}{\partial n}\right)w(n),\label{variants-1}\\
        &\frac{\partial}{\partial t} v(n)=-2\sin\left(\frac{1}{2}\frac{\partial}{\partial n}\right)u(n)-\frac{\partial}{\partial y}w(n),\label{variants-2}\\
        &\frac{\partial}{\partial t}\cos\left(\frac{1}{2}\frac{\partial}{\partial n}\right)w(n)=-\sin\left(\frac{1}{2}\frac{\partial}{\partial n}\right)(v(n)+w(n)^2).\label{variants-3}
 \end{align}
It is easy to see that $u(n)$, $v(n)$ and $w(n)$ are real-valued functions of real variables $y$, $t$, $z$ and $n$. Based on the trigonometric-type bilinear operators (\ref{tri}), the variant BS lattice equation can be rewritten as
\begin{align}
    &\frac{\partial}{\partial t} u(n)=\I u(n)\left(w(n-\frac{\I}{2})-w(n+\frac{\I}{2})\right),\\
        &\frac{\partial}{\partial t} v(n)=\I u(n+\frac{\I}{2})-\I u(n-\frac{\I}{2})-\frac{\partial}{\partial y}w(n),\\
        &\frac{\partial}{\partial t}w(n+\frac{\I}{2})+\frac{\partial}{\partial t}w(n-\frac{\I}{2})=\I\left(v(n+\frac{\I}{2})-v(n-\frac{\I}{2})+w(n+\frac{\I}{2})^2-w(n-\frac{\I}{2})^2\right),
\end{align}
through the dependent variable transformation
\begin{align}
    u(n)=\frac{\tau(n+\I)\tau(n-\I)}{\tau(n)^2},\quad v(n)=\frac{D_t^2\tau(n+\frac{\I}{2})\cdot\tau(n-\frac{\I}{2})}{\tau(n+\frac{\I}{2})\tau(n-\frac{\I}{2})},\quad w(n)=\I\frac{\partial}{\partial t}\left(\ln\frac{\tau(n+\frac{\I}{2})}{\tau(n-\frac{\I}{2})}\right).
\end{align}

\begin{theo}
The bilinear equation (\ref{bilinear-1}) and (\ref{bilinear-2}) admit the Gram-type determinant solution
\begin{align}
    \tau(n)=\left|c_{jk}+\int^t\phi_j(n)\psi_k(n)\dif t\right|_{1\leq j,k\leq N},
\end{align}
where $\phi_j(n)$, $\psi_k(n)$ satisfy the following relations
\begin{align}
    &\partial_t\phi_j(n)=-\I\phi_j(n+\I),\quad \partial_y\phi_j(n)=-\I\big(\phi_j(n-\I)+\phi_j(n+2\I)\big),\quad \partial_z\phi_j(n)=-\I\phi_j(n+2\I), \label{dde-1}\\
    &\partial_t\psi_k(n)=\I\psi_k(n-\I),\quad \partial_y\psi_k(n)=\I\big(\psi_k(n+\I)+\psi_k(n-2\I)\big),\quad \partial_z\psi_k(n)=\I\psi_k(n-2\I). \label{dde-2}
\end{align}
\end{theo}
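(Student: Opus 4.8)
The plan is to prove the theorem by the standard Gram-type determinant (Grammian) technique: represent $\tau(n)$ as an $N\times N$ determinant whose entries obey simple differential–difference formulas, and then show that each bilinear equation collapses onto a Jacobi (Plücker) identity for determinants. Writing the entries as $m_{jk}(n)=c_{jk}+\int^{t}\phi_j(n)\psi_k(n)\,\dif t$, so that $\tau(n)=\det\big(m_{jk}(n)\big)_{1\le j,k\le N}$, the basic observation is that $\partial_t m_{jk}(n)=\phi_j(n)\psi_k(n)$ is rank one; hence $\partial_t\tau$ is a single bordered determinant obtained by fringing $m$ with the column $(\phi_j(n))$ and the row $(\psi_k(n))$. (As a consistency check, the variant equations (\ref{bilinear-1})–(\ref{bilinear-2}) arise from the classical BS bilinear system through $D\mapsto\I D$, so one expects the known Grammian of Yu et al.\ to map onto the present one; I would nevertheless carry out the verification directly, since the dispersion relations (\ref{dde-1})–(\ref{dde-2}) fix the imaginary shifts explicitly.)

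First I would establish the differential–difference formulas for $m_{jk}$, which is the technical heart of the argument. Using (\ref{dde-1})–(\ref{dde-2}), the key point is that the combinations of $\partial_y,\partial_z$ and of the shift $n\mapsto n\pm\I$ applied to $\phi_j(n)\psi_k(n)$ are \emph{exact} $t$-derivatives and may therefore be integrated under $\int^t$. Concretely, from $\partial_t\phi_j(n)=-\I\phi_j(n+\I)$ and $\partial_t\psi_k(n)=\I\psi_k(n-\I)$ one finds
\begin{align*}
 &m_{jk}(n+\I)=m_{jk}(n)+\I\,\phi_j(n)\psi_k(n+\I),\\
 &\partial_z m_{jk}(n)=\phi_j(n+\I)\psi_k(n)+\phi_j(n)\psi_k(n-\I),\\
 &(\partial_z-\partial_y)m_{jk}(n)=\phi_j(n-\I)\psi_k(n+\I),
\end{align*}
together with the companion relation for $n\mapsto n-\I$, the integration constants being fixed consistently with the $c_{jk}$. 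Each formula exhibits a shift or derivative of $m$ as a low-rank update, so that $\tau(n\pm\I)$, $\partial_z\tau$ and $(\partial_z-\partial_y)\tau$ become $\tau(n)$ plus bordered determinants. Locating the correct exact-derivative combinations — in particular that $\partial_z-\partial_y$ (rather than either flow separately) yields a single rank-one term — is the step I expect to demand the most care, and it is the main obstacle of the proof.

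Next I would reduce the two bilinear equations to Hirota forms involving only integer shifts. Expanding the operators via $\cos(\tfrac12 D_n)=\tfrac12\big(e^{\frac{\I}{2}D_n}+e^{-\frac{\I}{2}D_n}\big)$ and $\sin(\tfrac12 D_n)=-\tfrac{\I}{2}\big(e^{\frac{\I}{2}D_n}-e^{-\frac{\I}{2}D_n}\big)$, the products that appear are $\tau(n\pm\frac{\I}{2})$ and $\tau(n\pm\I)$. Since there is no dispersion relation for a half-integer shift, I would read each equation as an identity in $n$ and recenter by $n\mapsto n+\frac{\I}{2}$. This turns (\ref{bilinear-1}) into $D_t^2\,\tau(n+\I)\cdot\tau(n)+\I\,D_z\,\tau(n+\I)\cdot\tau(n)=0$, while (\ref{bilinear-2}), using $4\sin^2(\tfrac12 D_n)=2\big(1-\cos(D_n)\big)$, already reads $D_t(D_z-D_y)\tau(n)\cdot\tau(n)=2\tau(n)^2-2\tau(n+\I)\tau(n-\I)$.

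Finally, substituting the differential–difference formulas (and their $t$-derivatives, obtained by letting $\partial_t$ act on the border vectors through $\partial_t\phi_j(n)=-\I\phi_j(n+\I)$ and $\partial_t\psi_k(n)=\I\psi_k(n-\I)$) expresses every term of the two reduced identities as products of bordered determinants sharing the common core $m(n)$. Each identity then coincides with the Jacobi identity — equivalently, the Plücker relation obtained by Laplace-expanding a suitable $(N+2)\times(N+2)$ bordered determinant in two distinct ways — which holds identically and completes the proof. Along the way I would verify that $\partial_t,\partial_y,\partial_z$ and the shift commute on $m_{jk}$, which both justifies the consistent choice of integration constants and guarantees that the Gram structure is preserved under all flows.
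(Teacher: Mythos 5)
Your proposal follows essentially the same route as the paper's proof: you introduce $m_{jk}(n)=c_{jk}+\int^t\phi_j(n)\psi_k(n)\,\dif t$, derive the same rank-one differential--difference formulas (your $\partial_z m_{jk}$, $(\partial_z-\partial_y)m_{jk}=\phi_j(n-\I)\psi_k(n+\I)$ and $m_{jk}(n+\I)=m_{jk}(n)+\I\phi_j(n)\psi_k(n+\I)$ are exactly the relations the paper uses), express the resulting shifts and derivatives of $\tau$ as bordered determinants, recenter the half-integer shifts so the bilinear equations become identities in $\tau(n)$, $\tau(n\pm\I)$, and reduce both equations to Jacobi identities. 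This matches the paper's argument step for step, so the proposal is correct.
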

\begin{proof}
    We set
    \begin{align}
        m_{jk}=c_{jk}+\int^t\phi_j(n)\psi_k(n)\dif t.
    \end{align}
    Upon using the differential-difference relations (\ref{dde-1}) and (\ref{dde-2}), we obtain
    \begin{align}
        &\frac{\partial}{\partial t}m_{jk}=\phi_j(n)\psi_k(n),\\
        &\frac{\partial}{\partial y}m_{jk}=-\phi_j(n-\I)\psi_k(n+\I)+\frac{\partial}{\partial z}m_{jk},\\
        &\frac{\partial}{\partial z}m_{jk}=\phi_j(n+\I)\psi_k(n)+\phi_j(n)\psi_k(n-\I),\\
        &\frac{\partial^2}{\partial t^2}m_{jk}=-\I\phi_j(n+\I)\psi_k(n)+\I\phi_j(n)\psi_k(n-\I),\\
        &\frac{\partial^2}{\partial z\partial t}m_{jk}=-\I\phi_j(n+2\I)\psi_k(n)+\I\phi_j(n)\psi_k(n-2\I),\\
        &\frac{\partial^2}{\partial y\partial t}m_{jk}=-\I\phi_j(n-\I)\psi_k(n)-\I\phi_j(n+2\I)\psi_k(n)+\I\phi_j(n)\psi_k(n+\I)+\I\phi_j(n)\psi_k(n-2\I).
    \end{align}
    Therefore, the derivatives of the determinant $\tau(n)$  have the following expression,
    \begin{align*}
        &\frac{\partial}{\partial t}\tau(n)=\left|\begin{array}{cc}
            m_{jk} & \phi_j(n) \\
            -\psi_k(n) & 0
        \end{array}\right|,\\
        &\frac{\partial^2}{\partial t^2}\tau(n)=\I\left|\begin{array}{cc}
            m_{jk} & \phi_j(n) \\
            -\psi_k(n-\I) & 0
        \end{array}\right|-\I\left|\begin{array}{cc}
            m_{jk} & \phi_j(n+\I) \\
            -\psi_k(n) & 0
        \end{array}\right|,\\
        &\frac{\partial}{\partial z}\tau(n)=\left|\begin{array}{cc}
            m_{jk} & \phi_j(n+\I) \\
            -\psi_k(n) & 0
        \end{array}\right|+\left|\begin{array}{cc}
            m_{jk} & \phi_j(n) \\
            -\psi_k(n-\I) & 0
        \end{array}\right|,\\
        &\frac{\partial}{\partial y}\tau(n)=\left|\begin{array}{cc}
            m_{jk} & \phi_j(n-\I) \\
            \psi_k(n+\I) & 0
        \end{array}\right|+\left|\begin{array}{cc}
            m_{jk} & \phi_j(n+\I) \\
            -\psi_k(n) & 0
        \end{array}\right|+\left|\begin{array}{cc}
            m_{jk} & \phi_j(n) \\
            -\psi_k(n-\I) & 0
        \end{array}\right|,\\
        &\frac{\partial^2}{\partial y\partial t}\tau(n)=\I\left|\begin{array}{cc}
            m_{jk} & \phi_j(n-\I) \\
            \psi_k(n) & 0
        \end{array}\right|-\I\left|\begin{array}{cc}
            m_{jk} & \phi_j(n) \\
            \psi_k(n+\I) & 0
        \end{array}\right|-\I\left|\begin{array}{cc}
            m_{jk} & \phi_j(n+2\I) \\
            -\psi_k(n) & 0
        \end{array}\right|\\
        &~~~~~~~~~~~~~+\I\left|\begin{array}{cc}
            m_{jk} & \phi_j(n) \\
            -\psi_k(n-2\I) & 0
        \end{array}\right|+\left|\begin{array}{ccc}
            m_{jk} & \phi_j(n-\I) & \phi_j(n) \\
            \psi_k(n+\I) & 0 & 0 \\
            -\psi_k(n) & 0 & 0
        \end{array}\right|,\\
        &\frac{\partial^2}{\partial z\partial t}\tau(n)=-\I\left|\begin{array}{cc}
            m_{jk} & \phi_j(n+2\I) \\
            -\psi_k(n) & 0
        \end{array}\right|+\I\left|\begin{array}{cc}
            m_{jk} & \phi_j(n) \\
            -\psi_k(n-2\I) & 0
        \end{array}\right|,\\
        &\tau(n+\I)=\tau(n)+\I\left|\begin{array}{cc}
            m_{jk} & \phi_j(n) \\
            -\psi_k(n+\I) & 0
        \end{array}\right|,\\
        &\tau(n-\I)=\tau(n)+\I\left|\begin{array}{cc}
            m_{jk} & \phi_j(n-\I) \\
            \psi_k(n) & 0
        \end{array}\right|,\\
        &\frac{\partial}{\partial z}\tau(n+\I)=\left|\begin{array}{cc}
            m_{jk} & \phi_j(n+\I) \\
            -\psi_k(n) & 0
        \end{array}\right|+\left|\begin{array}{cc}
            m_{jk} & \phi_j(n+2\I) \\
            -\psi_k(n+\I) & 0
        \end{array}\right|+\I\left|\begin{array}{ccc}
            m_{jk} & \phi_j(n) & \phi_j(n+\I) \\
            -\psi_k(n+\I) & 0 & 0\\
            -\psi_k(n) & 0 & 0
        \end{array}\right|,\\
        &\frac{\partial}{\partial t}\tau(n+\I)=\left|\begin{array}{cc}
            m_{jk} & \phi_j(n+\I) \\
            -\psi_k(n+\I) & 0
        \end{array}\right|,\\
        &\frac{\partial^2}{\partial t^2}\tau(n+\I)=\I\left|\begin{array}{cc}
            m_{jk} & \phi_j(n+\I) \\
            -\psi_k(n) & 0
        \end{array}\right|-\I\left|\begin{array}{cc}
            m_{jk} & \phi_j(n+2\I) \\
            -\psi_k(n+\I) & 0
        \end{array}\right|-\left|\begin{array}{ccc}
            m_{jk} & \phi_j(n) & \phi_j(n+\I) \\
            -\psi_k(n+\I) & 0 & 0\\
            -\psi_k(n) & 0 & 0
        \end{array}\right|.
    \end{align*}
    Substituting the above  expressions into the equations (\ref{bilinear-1}) and (\ref{bilinear-2}), we arrive at the Jacobi identities for the determinants
    \begin{align*}
        \left|\begin{array}{cc}
            m_{jk} & \phi_j(n) \\
            -\psi_k(n) & 0
        \end{array}\right|\cdot\left|\begin{array}{cc}
            m_{jk} & \phi_j(n+\I) \\
            -\psi_k(n+\I) & 0
        \end{array}\right|-\left|\begin{array}{cc}
            m_{jk} & \phi_j(n+\I) \\
            -\psi_k(n) & 0
        \end{array}\right|\cdot\left|\begin{array}{cc}
            m_{jk} & \phi_j(n) \\
            -\psi_k(n+\I) & 0
        \end{array}\right|~~~~~~\\
        +\tau_n\cdot\left|\begin{array}{ccc}
            m_{jk} & \phi_j(n) & \phi_j(n+\I) \\
            -\psi_k(n+\I) & 0 & 0\\
            -\psi_k(n) & 0 & 0
        \end{array}\right|=0,\\
        \left|\begin{array}{cc}
            m_{jk} & \phi_j(n) \\
            \psi_k(n) & 0
        \end{array}\right|\cdot\left|\begin{array}{cc}
            m_{jk} & \phi_j(n-\I) \\
            \psi_k(n+\I) & 0
        \end{array}\right|-\left|\begin{array}{cc}
            m_{jk} & \phi_j(n-\I) \\
            \psi_k(n) & 0
        \end{array}\right|\cdot\left|\begin{array}{cc}
            m_{jk} & \phi_j(n) \\
            \psi_k(n+\I) & 0
        \end{array}\right|~~~~~~\\
        +\tau_n\cdot\left|\begin{array}{ccc}
            m_{jk} & \phi_j(n-\I) & \phi_j(n) \\
            \psi_k(n+\I) & 0 & 0\\
            -\psi_k(n) & 0 & 0
        \end{array}\right|=0.
    \end{align*}
    We complete the proof.
\end{proof}

\section{Soliton solutions to the variant BS lattice equation}\label{sec3}
In this part, we construct soliton solutions to the variant BS lattice equation (\ref{variants-1})-(\ref{variants-3}). We take
\begin{align}
    &\phi_j(n)=e^{\eta_j}, \quad \psi_k(n)=e^{\xi_k}, \quad m_{jk}(n)=\delta_{jk}+\frac{\I}{e^{\I p_j}-e^{-\I q_k}}\phi_j(n)\psi_k(n),
\end{align}
with
\begin{align}
    &\eta_j=p_jn-\I e^{\I p_j}t-\I(e^{-\I p_j}+e^{2\I p_j})y-\I e^{2\I p_j}z+\eta_{0,j},\\
    &\xi_k=q_kn+\I e^{-\I q_k}t+\I(e^{\I q_k}+e^{-2\I q_k})y+\I e^{-2\I q_k}z+\xi_{0,k},
\end{align}
where $p_j$, $q_k$, $\eta_{0,j}$ and $\xi_{0,k}$ are arbitrary complex constants.

By setting $N=1$, $p_1=a+\I b$, $q_1=p_1^*$, $c_{11}=1$, $z=0$ and $\eta_{0,1}=\xi_{0,1}=0$, $\tau(n)$ is given by
\begin{align}
    \tau(n)=1+\frac{1}{2\sin(a)}e^{b+2an+2e^{-b}\sin(a)t+2(-e^b\sin(a)+e^{-2b}\sin(2a))y}.
\end{align}
The one-soliton solutions $u(n)$ and $w(n)$ are non-singular when $\sin(a)>0$. Choosing parameters $a=1$, $b=1$, we have one-soliton solutions
\begin{align}
    &u(n)=\frac{\tau(n+\I)\tau(n-\I)}{\tau(n)^2}=1-\frac{1-\cos(2)}{1+\cosh(\zeta-\ln(2\sin(1)))},\label{solitonu}\\
    &w(n)=\I\left(\ln\frac{\tau(n+\I/2)}{\tau(n-\I/2)}\right)_t=\frac{-2e^{\zeta-1}\sin(1)}{1+\frac{1}{4\sin^2(1)}e^{2\zeta}+\cot(1)e^{\zeta}},
\end{align}
where $\zeta=1+2n+2e^{-1}\sin(1)t+2(-e\sin(1)+e^{-2}\sin(2))y$. Expression (\ref{solitonu}) describes a dark one-soliton solution of $u(n)$ with the minimum amplitude $\frac{1+\cos(2)}{2}$. The profiles of $u(n)$ and $w(n)$ are plotted in Fig. \ref{fig1}.

\begin{figure}
\begin{center}
\begin{tabular}{cc}
\includegraphics[height=0.220\textwidth,angle=0]{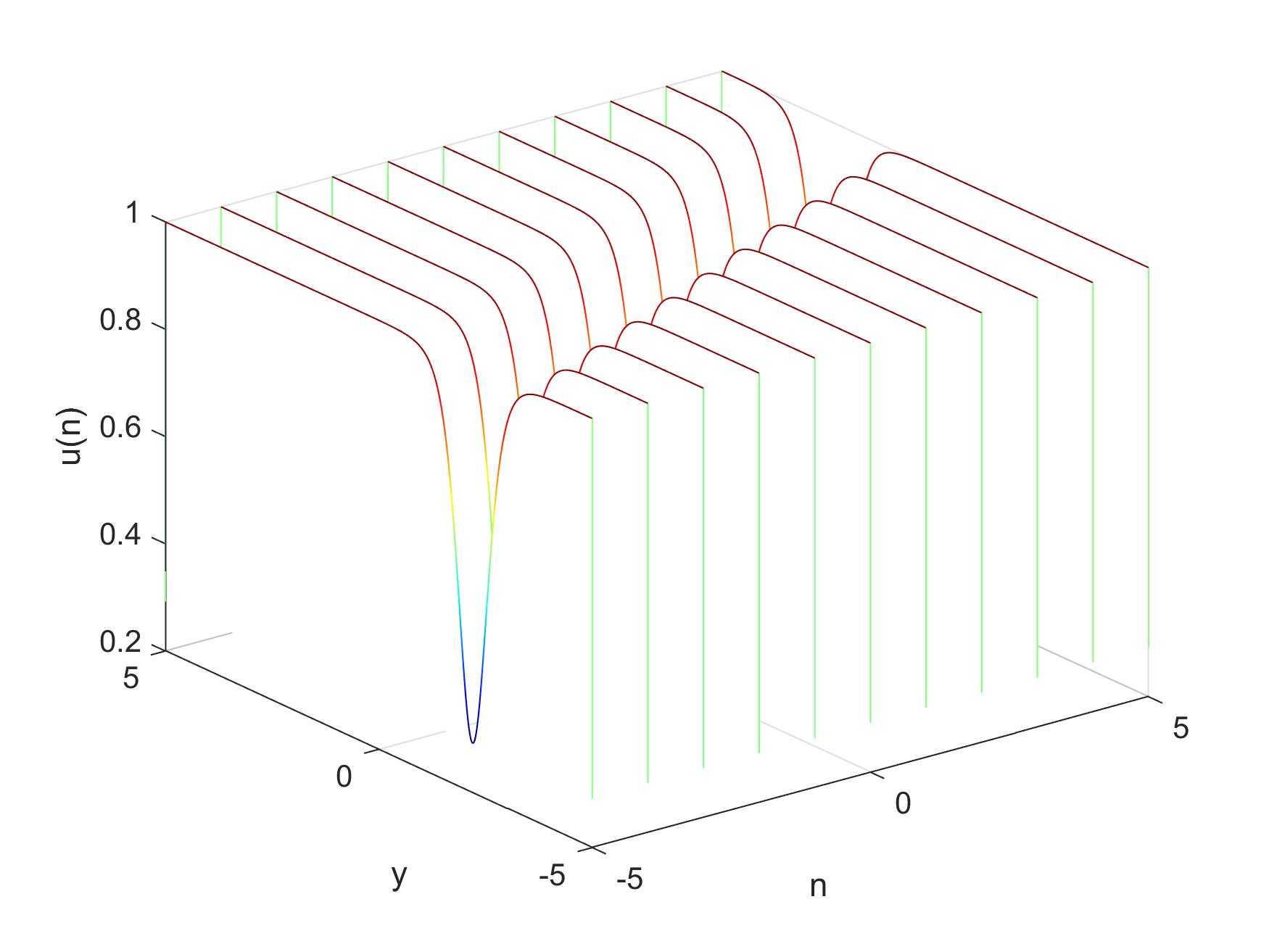} &
\includegraphics[height=0.220\textwidth,angle=0]{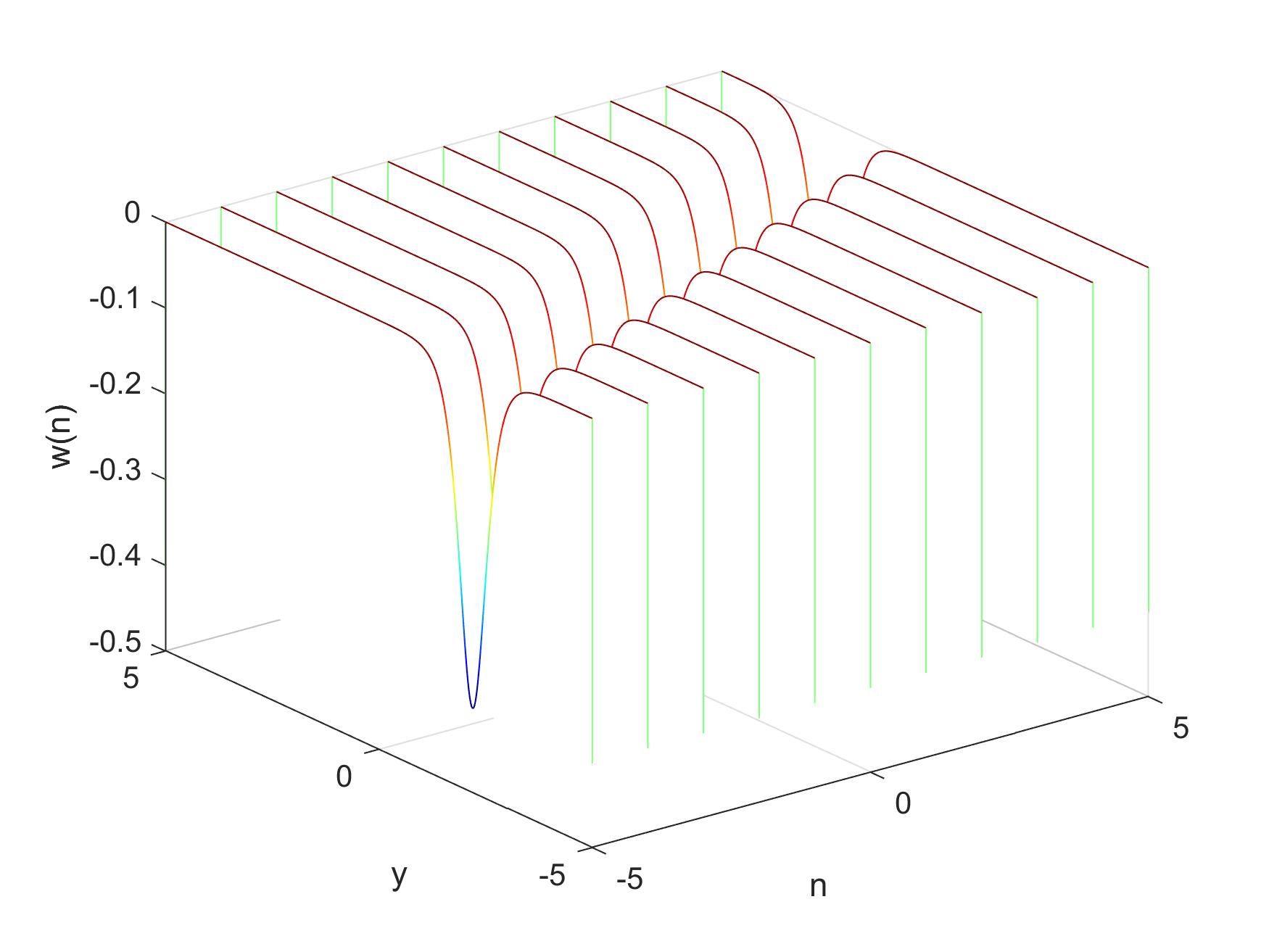} \\
(a) $u(n)$  & \quad  (b) $w(n)$ 
\end{tabular}
\end{center}
\caption{
One-soliton with parameters:
$p_1=1+\I$, $q_1=p_1^*$, $c_{11}=1$, $\eta_{0,1}=\xi_{0,1}=0$, $z=0$ and $t=0$.}\label{fig1}
\end{figure}

Now we take $N=2$, $c_{ij}=\delta_{ij}$, $p_i=a_i+\I b_i$, $q_i=p_i^*$, $(i=1,\ 2)$, which leads to
\begin{align}
    \tau(n)&=\left|\delta_{ij}+\frac{\I}{e^{\I p_j}-e^{-\I q_k}}\phi_j(n)\psi_k(n)\right|_{1\leq i,j\leq 2}\\
    &=1+\left(\frac{1}{4\sin(a_1)\sin(a_2)}+\frac{1}{|Z|^2}\right)e^{\zeta_1+\zeta_2}+\frac{e^{\zeta_1}}{2\sin(a_1)}+\frac{e^{\zeta_2}}{2\sin(a_2)},
\end{align}
where $Z=-\I e^{\I p_1}+\I e^{-\I q_2}$, and
\begin{align}
    \zeta_1=b_1+2a_1n+2e^{-b_1}\sin(a_1)t+2(-e^{b_1}\sin(a_1)+e^{-2b_1}\sin(2a_1))y,\\
    \zeta_2=b_2+2a_2n+2e^{-b_2}\sin(a_2)t+2(-e^{b_2}\sin(a_2)+e^{-2b_2}\sin(2a_2))y.
\end{align}
For $a_1=a_2=1$, $b_1=1, b_2=-\frac{1}{4}$ and $n=1$, we have a two-soliton solution $u(1)$  displayed in Fig. \ref{fig2}. The detailed expression of the two-soliton solution  is very complicated and we omit it here.

We analyze the asymptotic behaviors of the two-soliton solution $u(1)$. The left-propagating soliton along the direction $\zeta_1$ is denoted as Soliton 1, and the right-propagating one along the direction $\zeta_2$ as Soliton 2. Its asymptotic behaviors  are elaborated in the following.

(1) Before collision $(t\rightarrow-\infty)$:

Soliton 1 $(\zeta_1 =\mathcal{O}(1), \zeta_2\rightarrow-\infty)$
\begin{align}
    u(1)\sim 1-\frac{1-\cos(2)}{1+\cosh(\zeta_1-\ln(2\sin(1)))},
\end{align}

Soliton 2 $(\zeta_2=\mathcal{O}(1), \zeta_1\rightarrow+\infty)$
\begin{align}
    u(1)\sim 1-\frac{1-\cos(\frac{1}{2})}{1+\cosh(\zeta_2+\ln(2\sin(1))+\Omega)},
\end{align}
with $\Omega=\ln\left(\frac{1}{4\sin(a_1)\sin(a_2)}+\frac{1}{|Z|^2}\right)$.

(2) After collision $t\rightarrow+\infty$:

Soliton 1 $(\zeta_1 =\mathcal{O}(1), \zeta_2\rightarrow+\infty)$
\begin{align}
    u(1)\sim 1-\frac{1-\cos(2)}{1+\cosh(\zeta_1+\ln(2\sin(1))+\Omega)},
\end{align}

Soliton 2 $(\zeta_2 =\mathcal{O}(1), \zeta_1\rightarrow-\infty)$
\begin{align}
    u(1)\sim 1-\frac{1-\cos(\frac{1}{2})}{1+\cosh(\zeta_2-\ln(2\sin(1)))}.
\end{align}
We conclude that the collision between the two solitons is demonstrated to be completely elastic.
\begin{figure}
\begin{center}
\begin{tabular}{ccc}
\includegraphics[height=0.220\textwidth,angle=0]{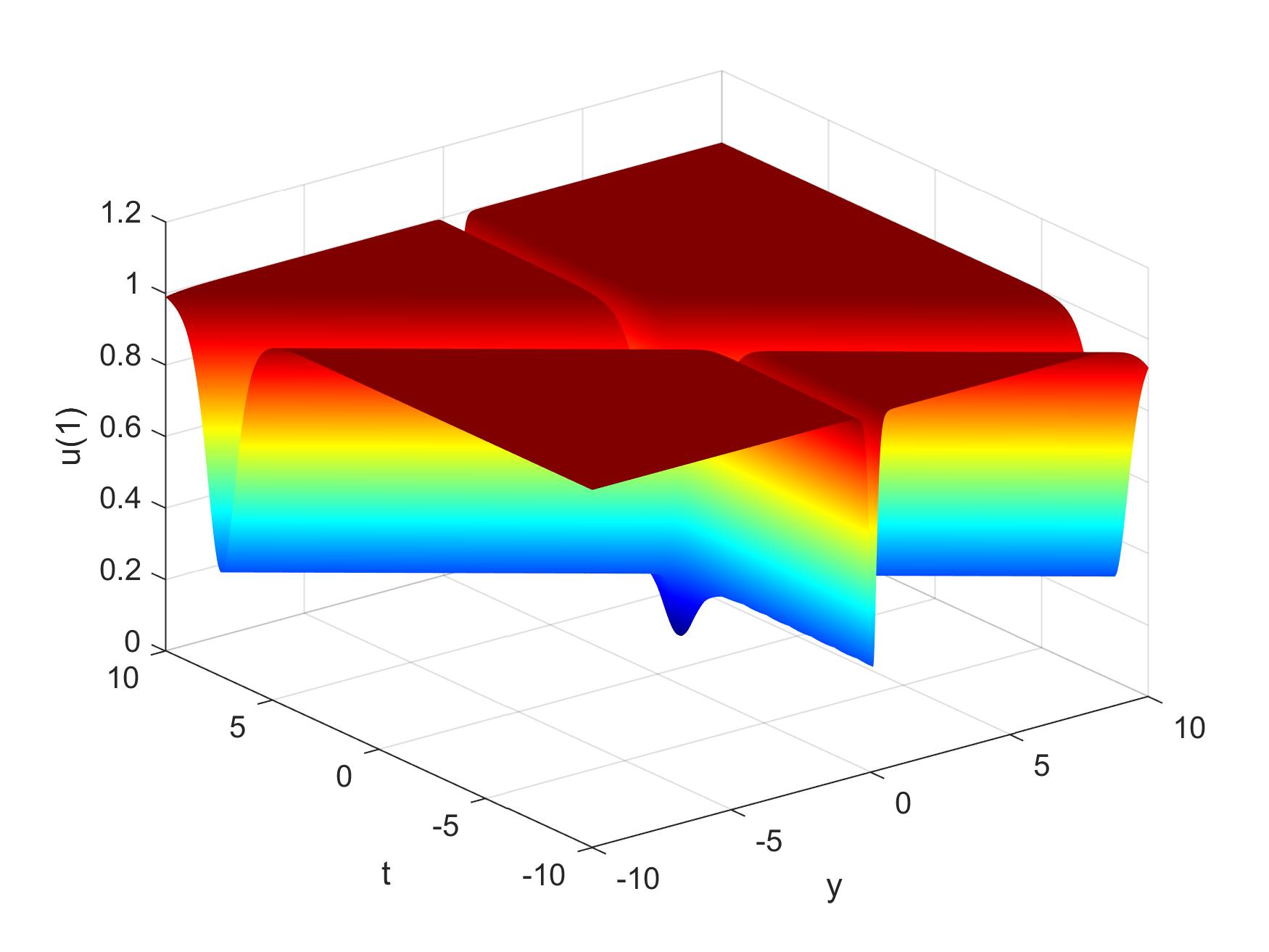} &
\includegraphics[height=0.220\textwidth,angle=0]{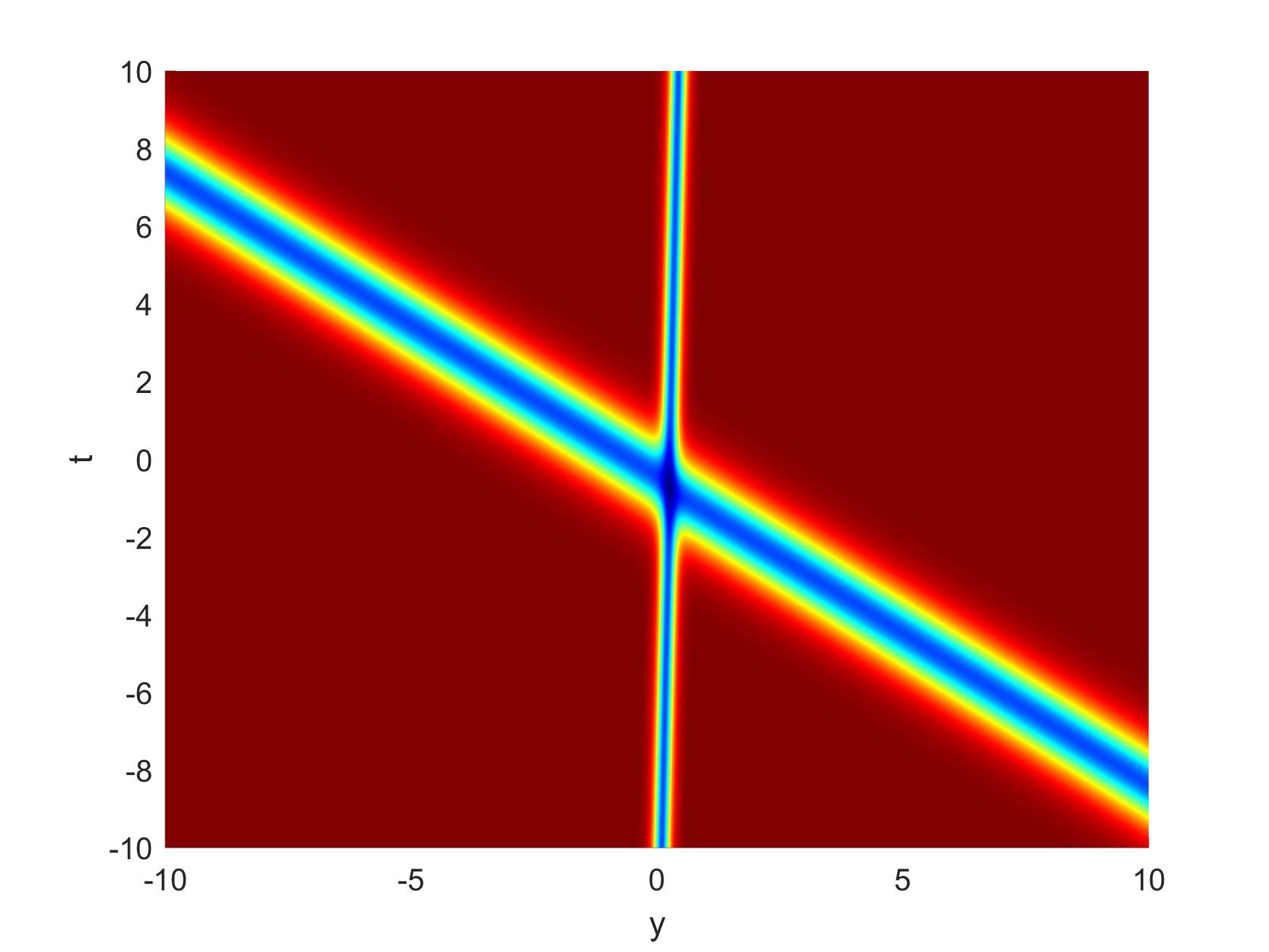} \\
(a) $u(1)$  & \quad  (b) density plot of (a)
\end{tabular}
\end{center}
\caption{
Two-soliton solution with parameters:
$p_1=1+\I$, $p_2=1-\frac{1}{4}\I$, $\eta_{0,i}=\xi_{0,i}=0$ and $z=0$.}\label{fig2}
\end{figure}

\section{Lump solutions to the variant BS lattice equation}\label{sec4}
In this section, lump solutions $u(n)$ to the variant BS lattice equations are investigated via the B\"acklund transformation (BT) and Schur polynomials.

\subsection{B\"acklund transformation and fundamental lump solutions}\label{sec4.1}
The BT to the BS lattice was presented in  \cite{BT}.  Here we  construct the BT to the variant BS lattice equation \eqref{bilinear-1}-\eqref{bilinear-2}.

\begin{prop}
The variant BS lattice equations admit the following B\"acklund transformation
\begin{align}
    &(\I D_t+\lambda^{-1}e^{-\I D_n}+\mu)\tau(n)\cdot\sigma(n)=0,\\
    &(\I D_ze^{-\frac{\I D_n}{2}}-\I D_ye^{-\frac{\I D_n}{2}}-\lambda e^{\frac{\I D_n}{2}}+\gamma e^{-\frac{\I D_n}{2}})\tau(n)\cdot\sigma(n)=0,\\
    &(\I D_z-\I\lambda^{-1}D_te^{-\I D_n}-\lambda^{-1}\mu e^{-\I D_n}-\omega)\tau(n)\cdot\sigma(n)=0,
\end{align}
where $\lambda$, $\mu$, $\gamma$ and $\omega$ are arbitrary constants.
\end{prop}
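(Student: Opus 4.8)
The plan is to verify this B\"acklund transformation by the standard Hirota scheme: assuming $\tau$ is a solution of the bilinear system \eqref{bilinear-1}--\eqref{bilinear-2}, I would show that any $\sigma$ satisfying the three displayed relations (together with $\tau$) is again a solution of \eqref{bilinear-1}--\eqref{bilinear-2}. Concretely, for each of the two original bilinear equations I form the bilinear difference
\begin{align*}
P_i := \big[\,(\text{eq }i)\ \text{applied to}\ \sigma\cdot\sigma\,\big]\,\tau^2 - \sigma^2\,\big[\,(\text{eq }i)\ \text{applied to}\ \tau\cdot\tau\,\big],\qquad i=1,2.
\end{align*}
Because $\tau$ satisfies \eqref{bilinear-1}--\eqref{bilinear-2}, the second bracket vanishes, so $P_i = \tau^2\,[(\text{eq }i)\ \text{for}\ \sigma]$. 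Since $\tau\not\equiv 0$, it then suffices to prove the identity $P_i\equiv 0$ as a consequence of the three B\"acklund relations; this delivers \eqref{bilinear-1}--\eqref{bilinear-2} for $\sigma$.

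The engine of the argument is a collection of bilinear exchange identities that rewrite a product of two diagonal forms (an $a\cdot a$ paired with a $b\cdot b$) as a sum of products of mixed forms $a\cdot b$. For the continuous variables I would use the classical identities such as
\begin{align*}
(D_t^2\,a\cdot a)\,b^2 - a^2\,(D_t^2\,b\cdot b) = 2\,D_t\big[(D_t\,a\cdot b)\cdot(ab)\big],
\end{align*}
together with its $D_tD_z$ and $D_tD_y$ analogues, arranged so that every leftover mixed form is of the shape appearing in the B\"acklund relations. To treat the $n$-dependence I would first expand the trigonometric operators into shifts via $\cos(\tfrac12 D_n)=\tfrac12(e^{\I D_n/2}+e^{-\I D_n/2})$, $\sin(\tfrac12 D_n)=\tfrac{1}{2\I}(e^{\I D_n/2}-e^{-\I D_n/2})$ and $4\sin^2(\tfrac12 D_n)=2-(e^{\I D_n}+e^{-\I D_n})$, and then apply the corresponding shift-exchange formulas for $e^{\pm\I D_n/2}$ and $e^{-\I D_n}$.

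Carrying this out, for \eqref{bilinear-1} I would group the $D_z\sin(\tfrac12 D_n)$ contribution against the first and third B\"acklund relations (which carry $D_z$ and $e^{-\I D_n}$), and the $D_t^2\cos(\tfrac12 D_n)$ contribution against a once-$t$-differentiated copy of the first relation; the differentiation is forced because the original equation is second order in $D_t$ while the first B\"acklund relation is only first order, so $D_t\,\tau\cdot\sigma$ must be eliminated through $\I D_t\,\tau\cdot\sigma=-(\lambda^{-1}e^{-\I D_n}+\mu)\,\tau\cdot\sigma$ and its $\partial_t$-derivative. For \eqref{bilinear-2} the $D_tD_z$ and $D_tD_y$ terms are matched using the second and third relations, while the $4\sin^2(\tfrac12 D_n)$ term is absorbed through the $e^{-\I D_n}$ shift identities; the constants $\lambda,\mu,\gamma,\omega$ should cancel pairwise once all mixed forms are collected.

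The main obstacle is the bookkeeping: after applying the exchange identities one must check that every surviving term is proportional to one of the three B\"acklund bilinear forms (or a $t$-derivative thereof), with coefficients that cancel across the four free constants. The imaginary shifts $n\pm\I$ and $n\pm\tfrac{\I}{2}$ together with the mixed trigonometric-plus-shift operators make these exchange identities more delicate than in the classical Blaszak--Szum setting, so I expect the shift-operator exchange formulas to be the error-prone step. As an independent consistency check I would recover the statement by applying the substitution \eqref{trans}, namely $(D_t,D_y,D_z,D_n)\to\I(D_t,D_y,D_z,D_n)$, to the B\"acklund transformation of the original lattice established in \cite{BT}, which must map precisely onto the three relations stated here.
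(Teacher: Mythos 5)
Your proposal is correct and takes essentially the same route as the paper: the paper offers no computation of its own, deferring instead to the exchange-identity verification carried out in \cite{BT} for the hyperbolic BS lattice, which is exactly the Hirota scheme you outline (forming $P_i$ and reducing it modulo the three bilinear BT forms via continuous and shift-operator exchange identities). Your closing observation---that the stated relations are the image of the B\"acklund transformation of \cite{BT} under the substitution \eqref{trans}, whose formal exchange identities survive complexification of the shift parameters---is precisely the reasoning that legitimizes the paper's one-line proof by citation.
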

The proof follows a line of reasoning analogous to that presented in \cite{BT}, and relevant details can be found therein.
Let $\tau_0(n)$ be a nonzero solution of \eqref{variants-1}-\eqref{variants-3} and suppose that $\tau_1(n)$ and $\tau_2(n)$ are solutions of \eqref{variants-1}-\eqref{variants-3} such that they are linked through the BT, i.e, $\tau_0(n)\xrightarrow{\lambda_j,\ \mu_j,\ \gamma_j,\ \omega_j}\tau_j(n)$. Then we can derive the nonlinear superposition formula
\begin{align}\label{NSF}
    e^{-\frac{\I}{2}D_n}\tau_0(n)\cdot\tau_{12}(n)=c\left(\lambda_1e^{-\frac{\I}{2}D_n}-\lambda_2e^{\frac{\I}{2}D_n}\right)\tau_1(n)\cdot\tau_2(n),
\end{align}
where $\tau_{12}(n)$ is new solution of bilinear equations (\ref{bilinear-1}) and (\ref{bilinear-2}), and $c$ is an arbitrary constant. Starting  from $\tau_0(n)=1$ and the parameter settings $\mu_j=-\lambda_j^{-1}$, $\gamma_j=\lambda_j$ $\omega_j=\lambda_j^{-2}$,  we derive the new corresponding solution via the BT,
\begin{align}\label{lumptau}
    \tau_j(n)=n+\lambda_j^{-1}t+\left(2\lambda_j^{-2}-\lambda_j\right)y+2\lambda_j^{-2}z+\beta_j\equiv\theta_j+\beta_j,\ (j=1,\ 2).
\end{align}
We take $c=\frac{1}{\lambda_1-\lambda_2}$, $\beta_1=\frac{\lambda_1\I}{\lambda_1-\lambda_2}$, $\beta_1=\frac{\lambda_2\I}{\lambda_2-\lambda_1}$ and $\lambda_1=\lambda_2^*$.  It is easy to verify that
\begin{align}\label{tau12}
    \tau_{12}(n)=\theta_1\theta_2-\frac{\lambda_1\lambda_2}{(\lambda_1-\lambda_2)^2}>0.
\end{align}
Substituting $\tau_{12}(n)$ into (\ref{solution}) leads to one-lump solution $u(n)$ expressed as
\begin{align}
    u(n)=\frac{(\tau_{12}(n)-1)^2+(\theta_1+\theta_2)^2}{\tau_{12}^2(n)}.
\end{align}
Let $\lambda_1=\lambda_2^*=a+\I b$, then the expression $\theta_1$ can be rewritten as
\begin{equation}\label{theta1}
    \begin{split}
        \theta_1&=n+\frac{a}{a^2+b^2}t+\left(\frac{2a^2-2b^2}{(a^2+b^2)^2}-a\right)y+\I\left(-\frac{b}{a^2+b^2}t-\left(\frac{2ab}{(a^2+b^2)^2}+b\right)y\right)\\
        &\equiv\mathcal{R}+\I\mathcal{I},
    \end{split}
\end{equation}
where $\mathcal{R}$ and $\mathcal{I}$ denote the real part and imaginary part of $\theta_1$, respectively.
Since $\lambda_1=\lambda^*_2$, we have $\theta_2=\mathcal{R}-\I \mathcal{I}$. We obtain one-lump solution \begin{align}\label{u-1lump}
    u(n)=1+\frac{2\left(\mathcal{R}^2-\mathcal{I}^2+\frac{b^2-a^2}{4b^2}\right)}{\left(\mathcal{R}^2+\mathcal{I}^2+\frac{a^2+b^2}{4b^2}\right)^2}.
\end{align}
When $\frac{|b|}{\sqrt{3}}<|a|<\sqrt{3}|b|$, we obtain a fundamental lump with two global maximum points and two global minimum points.
When $|a|\geq\sqrt{3}|b|$, we have the dark lump solution, which possesses two global maximum points and one global minimum point.
When $|a|\leq\frac{|b|}{\sqrt{3}}$, the bright lump solution is obtained, featuring one global maximum point and two global minimum points.
For instance, by choosing the parameter sets $a=1$, $b=4$; $a=2$, $b=1$ and $a=b=1$, the bright lump, dark lump and fundamental lump solution are obtained, which are displayed in Fig. \ref{fig3} (a) (b) and (c), respectively. Utilizing (\ref{tau12}) and (\ref{theta1}), we get  one-lump solution
\begin{align}
    w(n)=\frac{\frac{2a}{a^2+b^2}(\mathcal{R}^2-\mathcal{I}^2-\frac{a^2}{4b^2})-\frac{4b}{a^2+b^2}\mathcal{RI}}{\left(\mathcal{R}^2+\mathcal{I}^2+\frac{a^2}{4b^2}\right)^2+\mathcal{R}^2}.
\end{align}
The one-lump solution $w(n)$ is non-singular when $a\neq 0$. The dark lump solution is obtained under the condition $a>0$, whereas the bright lump solution is constructed by $a<0$. Visualization of these two types of one-lump solutions are presented in Fig. \ref{wlump}.
\begin{figure}
\begin{center}
\begin{tabular}{ccc}
\includegraphics[height=0.220\textwidth,angle=0]{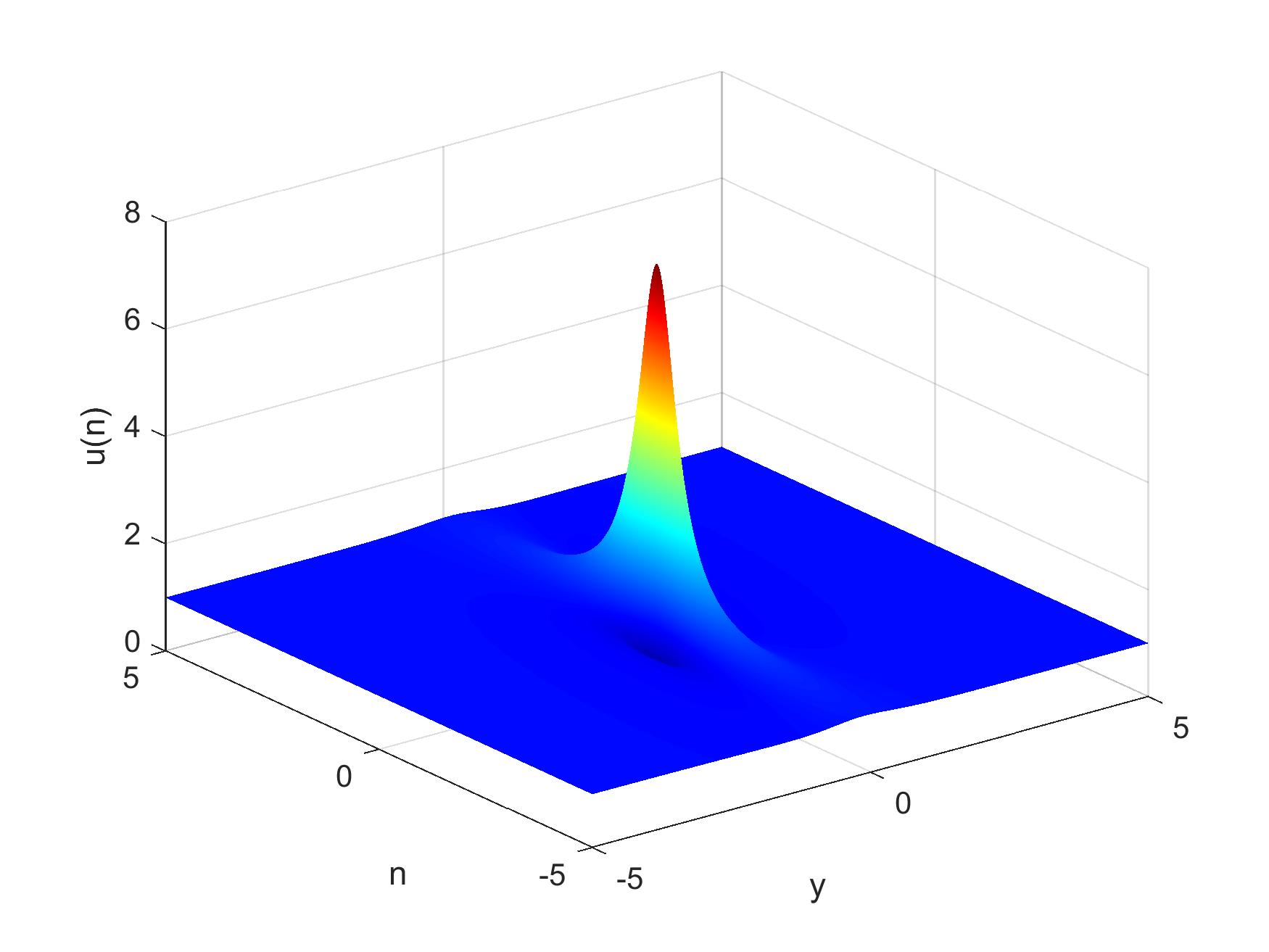} &
\includegraphics[height=0.220\textwidth,angle=0]{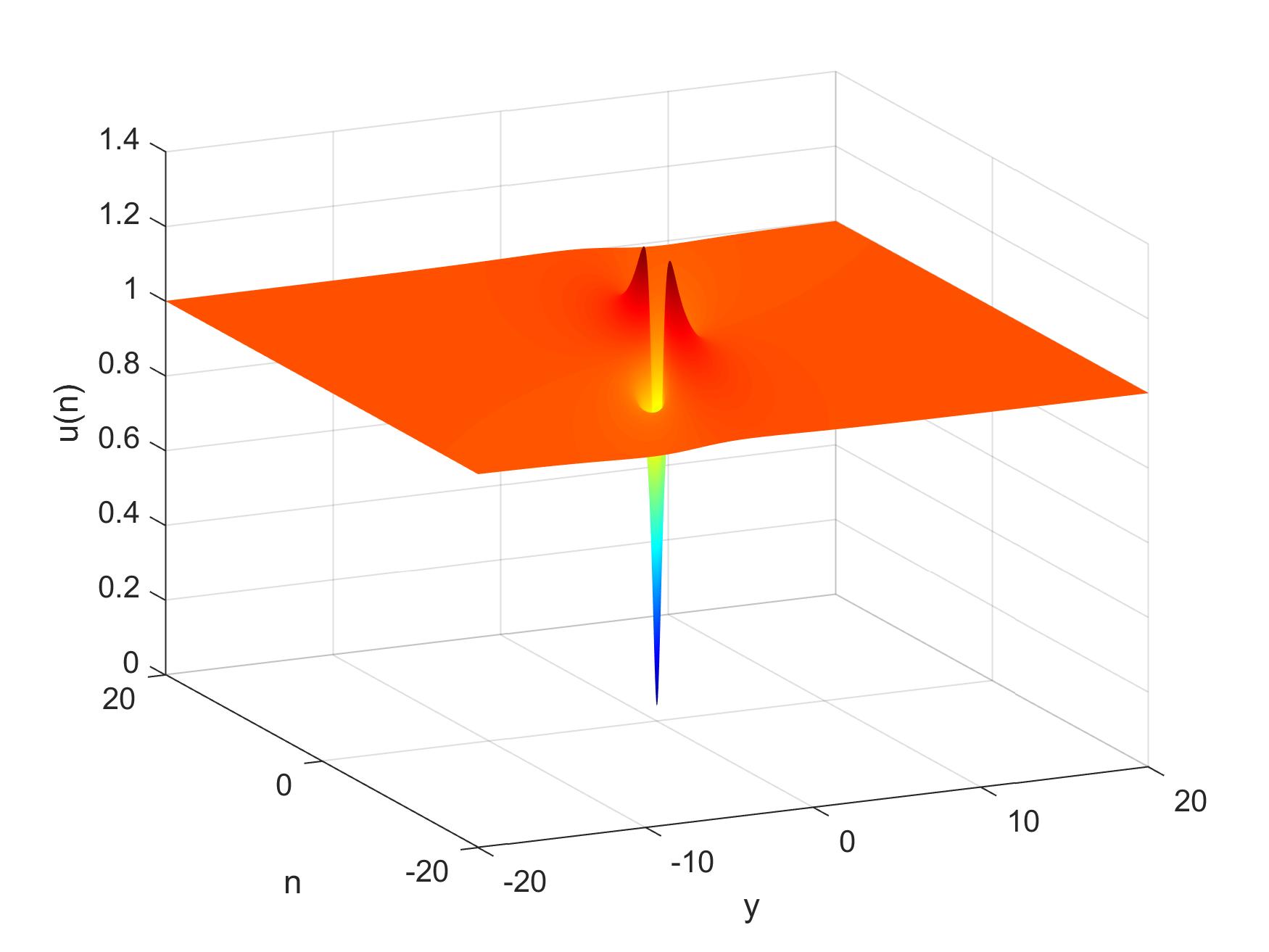} &
\includegraphics[height=0.220\textwidth,angle=0]{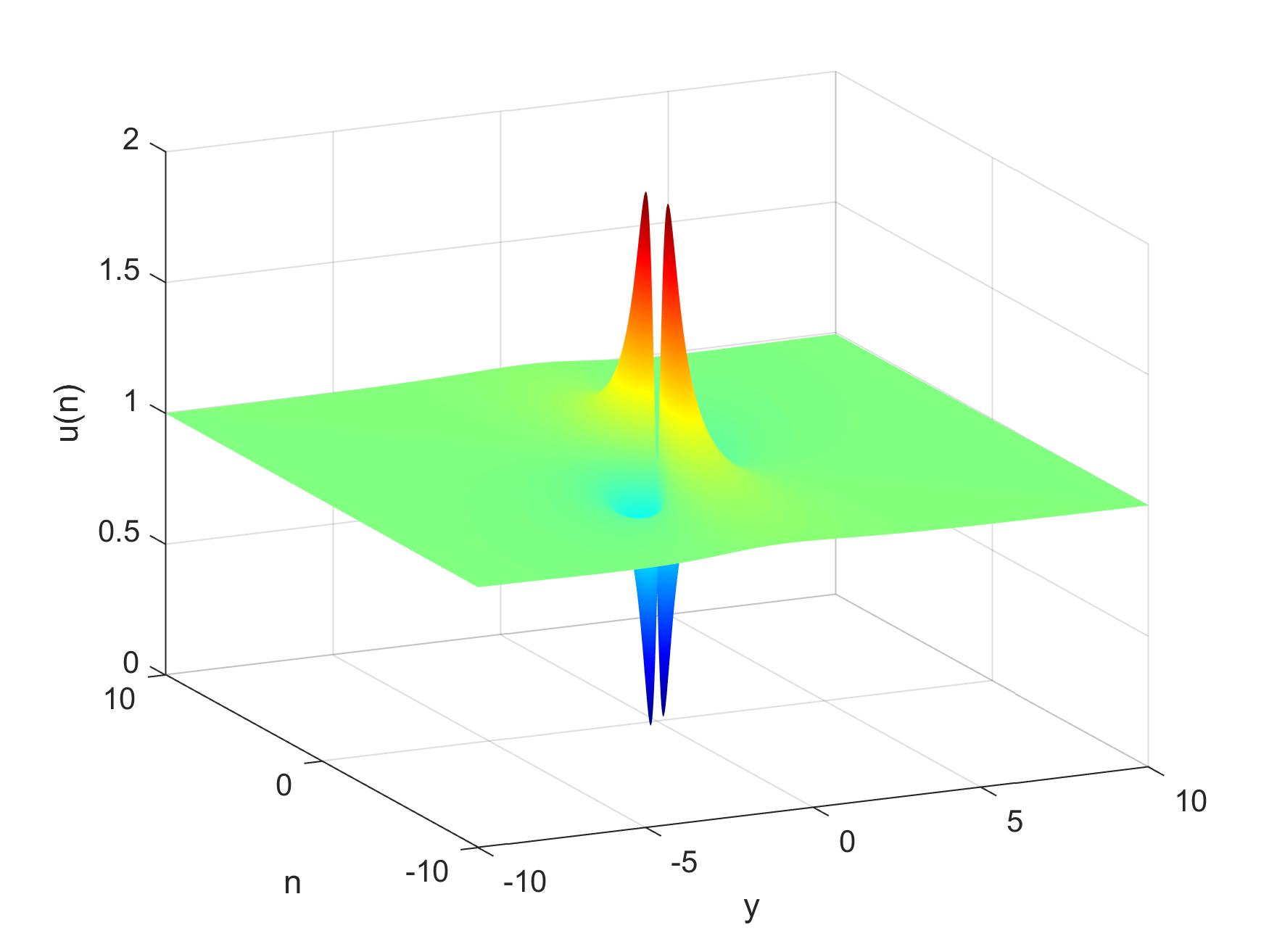}\\
(a) Bright lump $u(n)$  & \quad  (b) Dark lump $u(n)$ & \quad (c) Fundamental lump $u(n)$\\
\includegraphics[height=0.220\textwidth,angle=0]{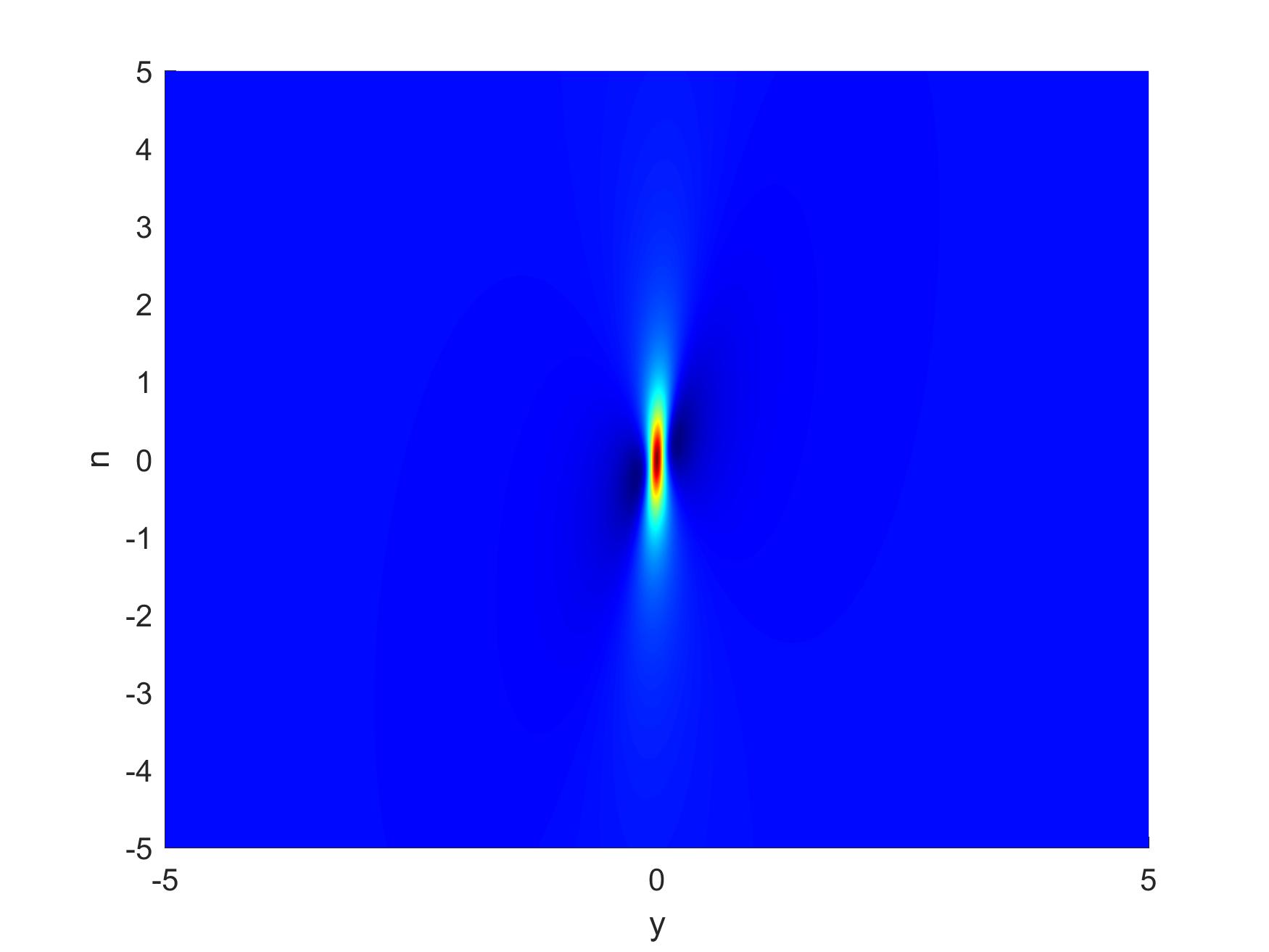} &
\includegraphics[height=0.220\textwidth,angle=0]{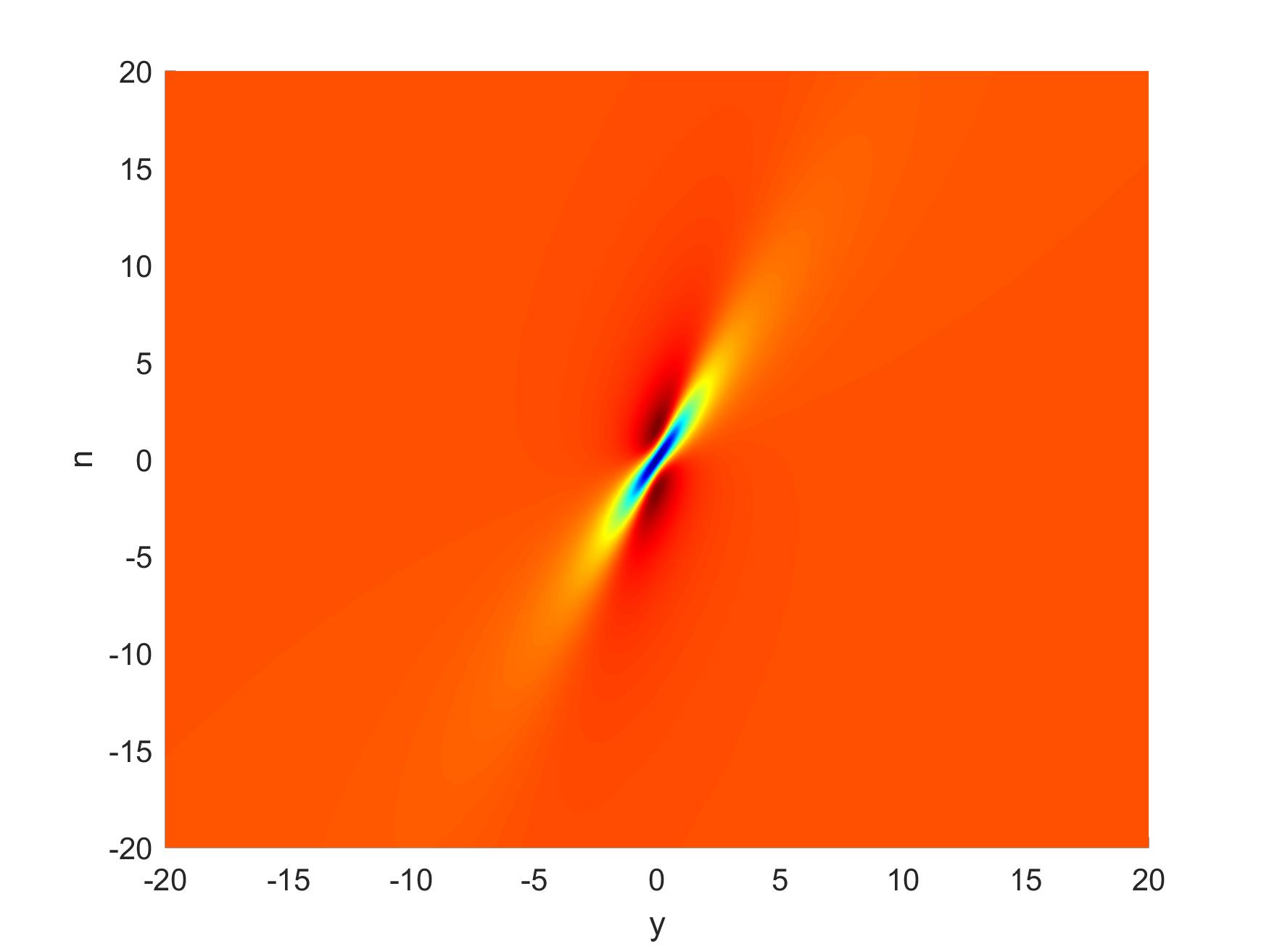} &
\includegraphics[height=0.220\textwidth,angle=0]{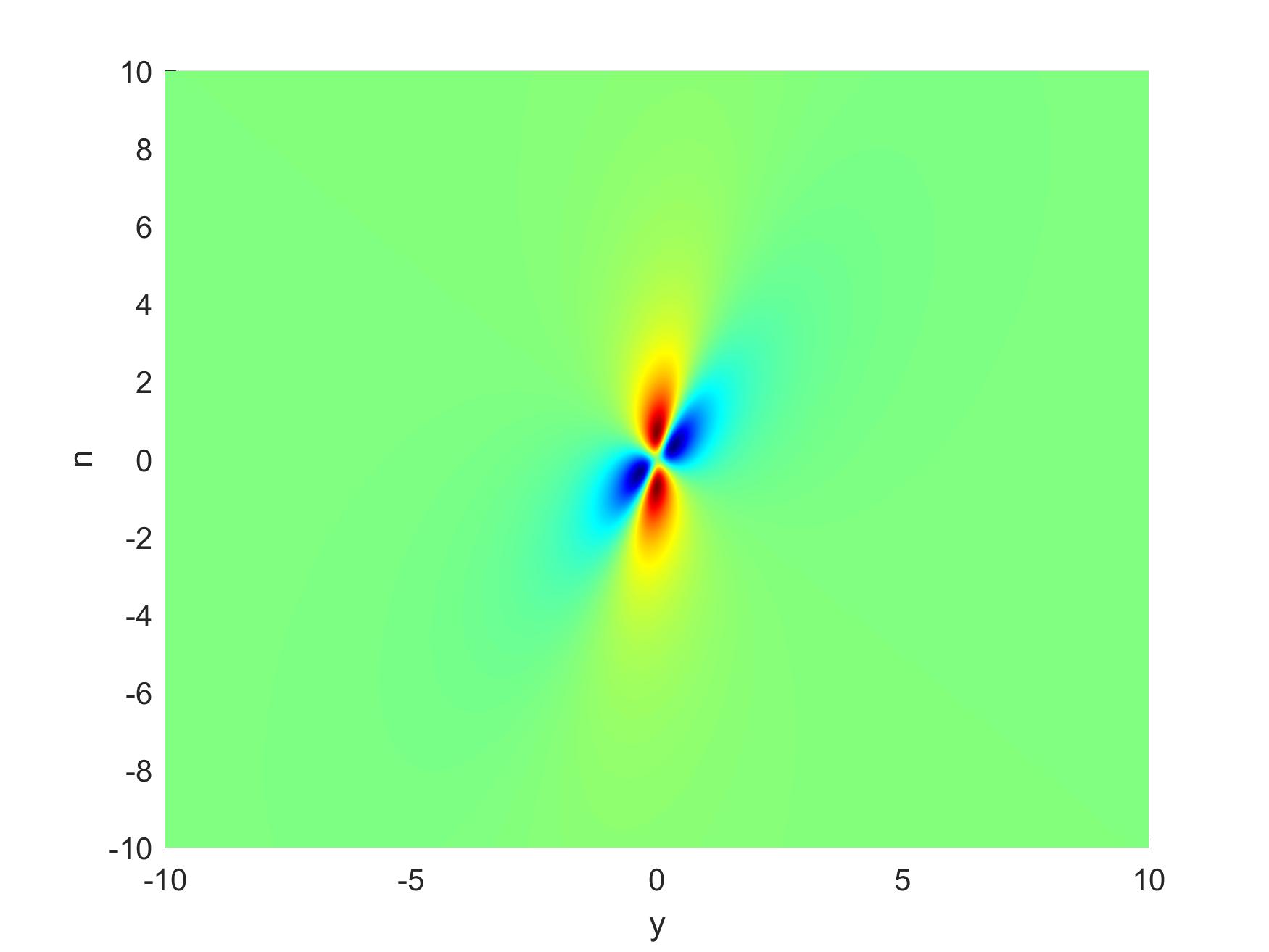}\\
(d) density plot of (a)  & \quad (e) density plot of (b) & \quad (f) density plot of (c)
\end{tabular}
\end{center}
\caption{
lump solutions with the parameters:
(a) $\lambda_1=1+4\I$; (b) $\lambda_1=2+\I$; (c) $\lambda_1=1+\I$.}\label{fig3}
\end{figure}

\begin{figure}
\begin{center}
\begin{tabular}{ccc}
\includegraphics[height=0.220\textwidth,angle=0]{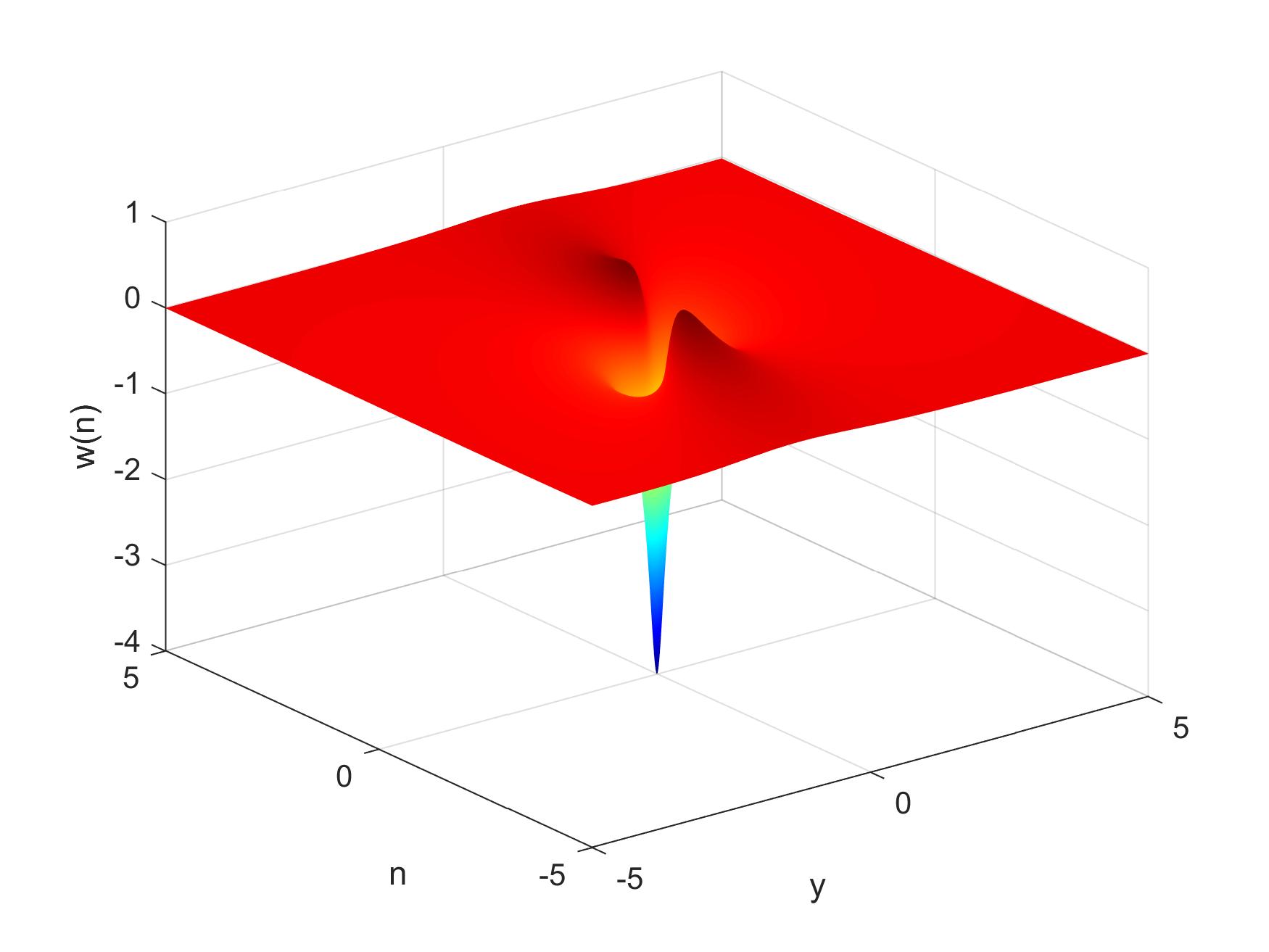} &
\includegraphics[height=0.220\textwidth,angle=0]{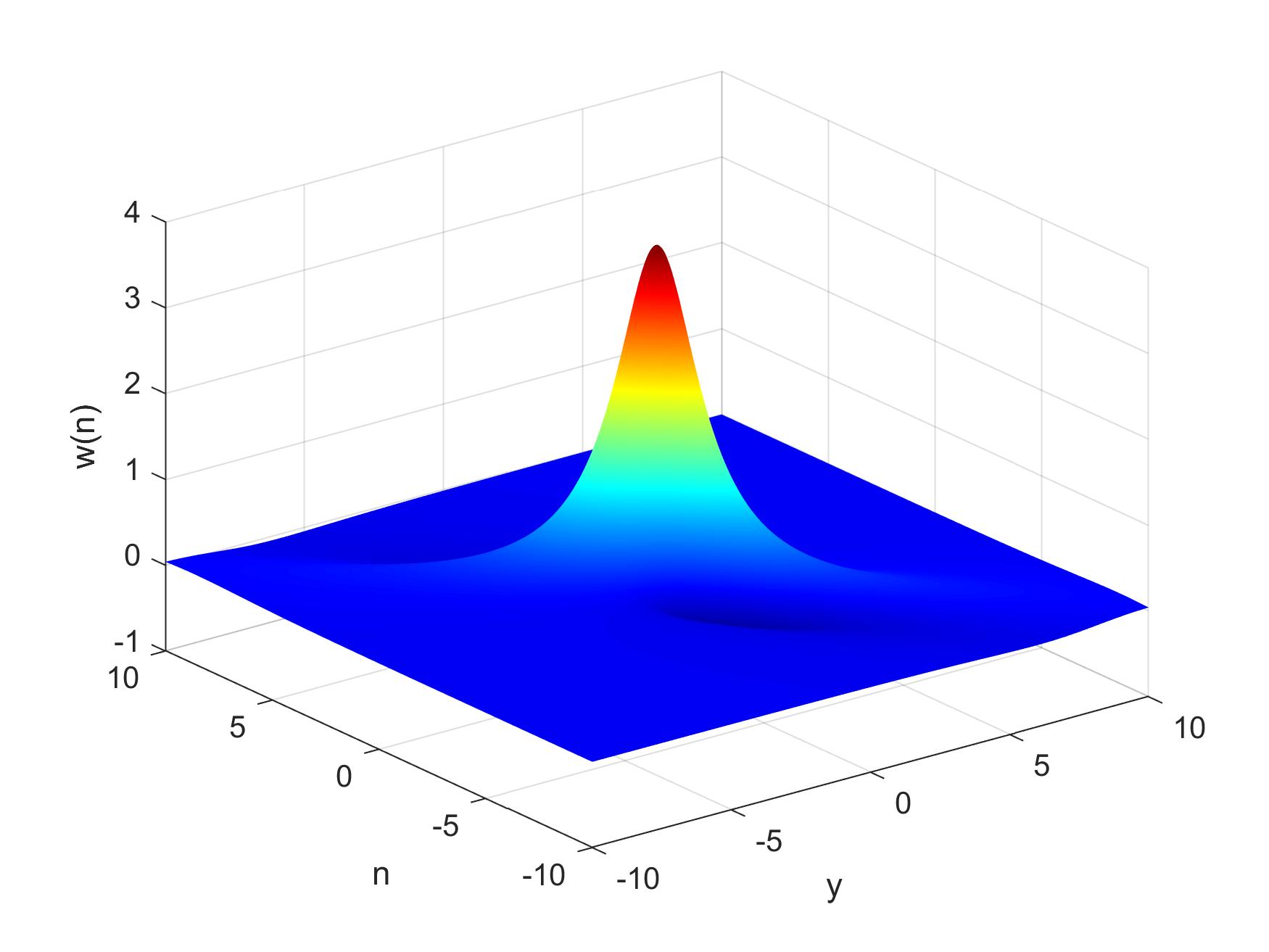} \\
(a) Dark lump $w(n)$  & \quad  (b) Bright lump $w(n)$ \\
\includegraphics[height=0.220\textwidth,angle=0]{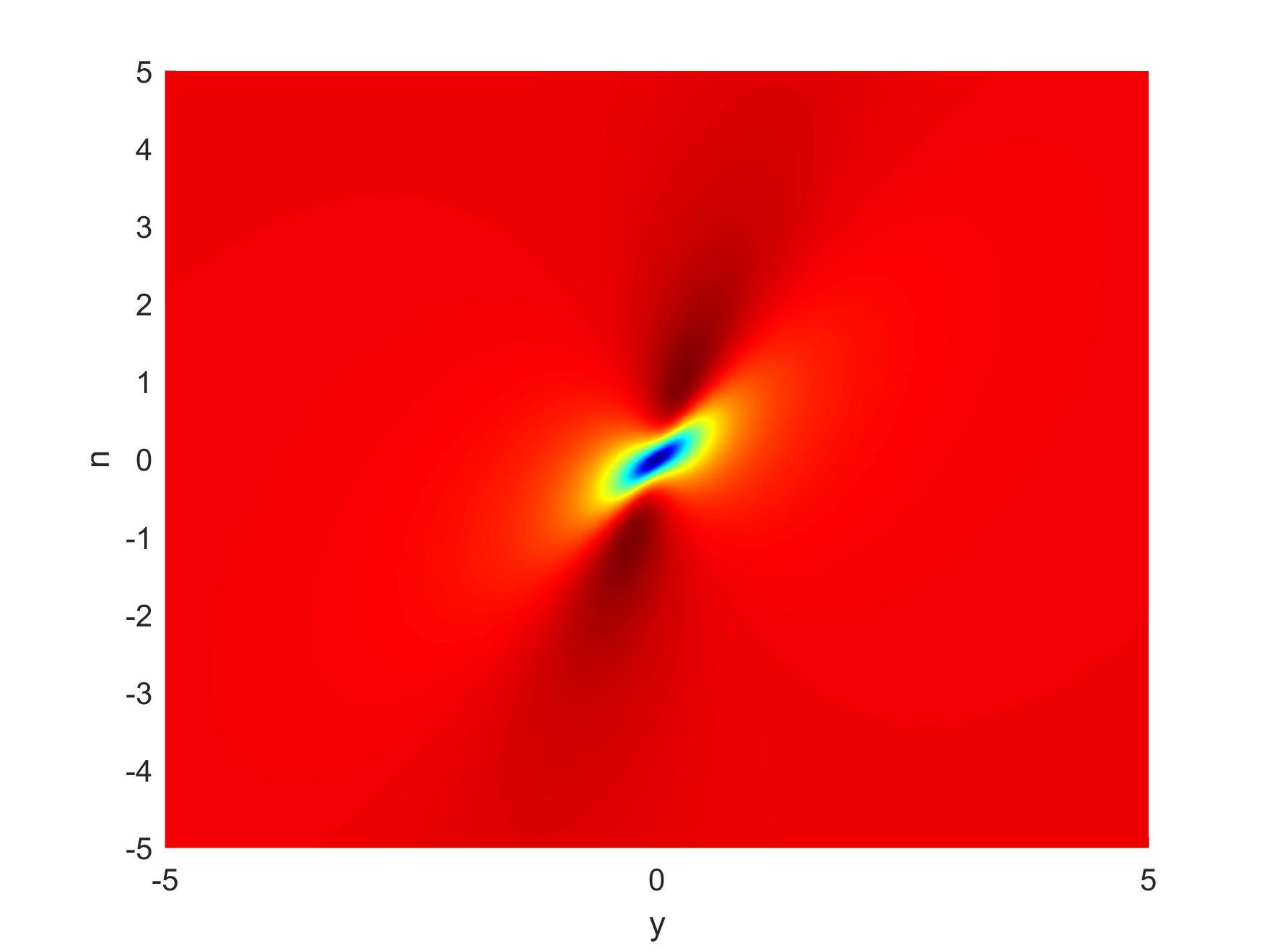} &
\includegraphics[height=0.220\textwidth,angle=0]{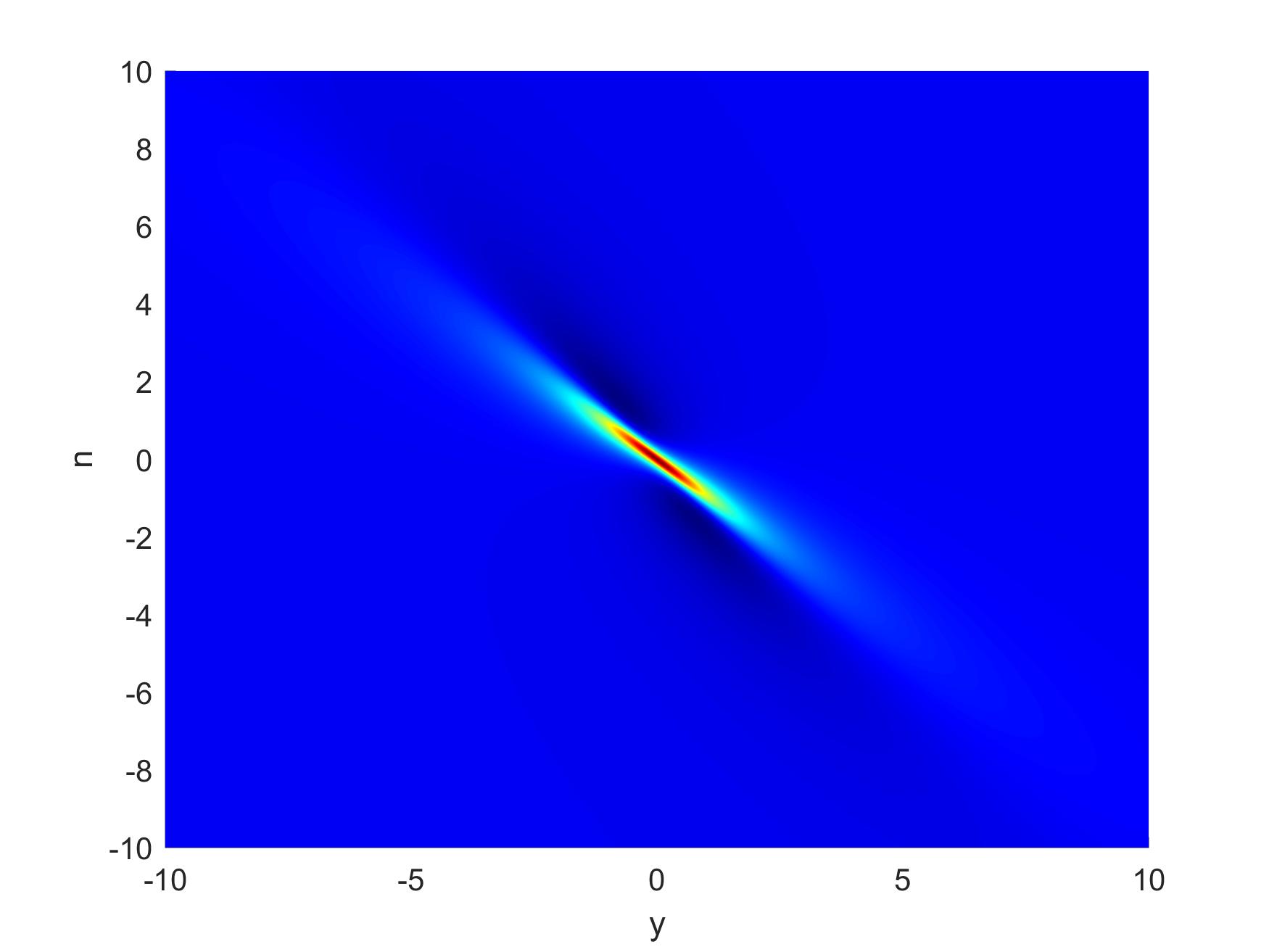} \\
(c) density plot of (a)  & \quad  (d) density plot of (b)
\end{tabular}
\end{center}
\caption{
One-lump solution $w(n)$ with the parameters:
(a) $a=1$, $b=1$; (b) $a=-1$, $b=1$.}\label{wlump}
\end{figure}

Furthermore, we can apply the nonlinear superposition formula (\ref{NSF}) repeatedly and derive $N-$order determinant solution  to the variant bilinear BS lattice equations,
\begin{align}
    \tau(n)=\left|(-\lambda_k)^{j-1}\tau_k(n-(j-1)\I)\right|_{1\leq j,k\leq N},
\end{align}
where $\tau_k(n)$ satisfy the equation (\ref{lumptau}). By varying the parameters, we can construct high-order lump solutions. For example, choosing the parameters as $N=4$, $\lambda_1=\lambda_3^*=1+\I$, $\lambda_2=\lambda_4^*=2+2\I$, and $\beta_j=\sum_{k\neq j}\frac{\I\lambda_j}{\lambda_j-\lambda_k}$, we arrive at the fundamental two-lump solution, as shown in Fig. \ref{fig4}; when taking parameters as $N=4$, $\lambda_1=\lambda_3^*=1+\I$, $\lambda_2=\lambda_4^*=1+3\I$, $\beta_j=\sum_{k\neq j}\frac{\I\lambda_j}{\lambda_j-\lambda_k}$, we obtain the fundamental-bright two-lump solution, as shown in Fig. \ref{fig4+}.
The expressions of the two-lump solutions are cumbersome and thus omitted here.
\begin{figure}
\begin{center}
\begin{tabular}{ccc}
\includegraphics[height=0.220\textwidth,angle=0]{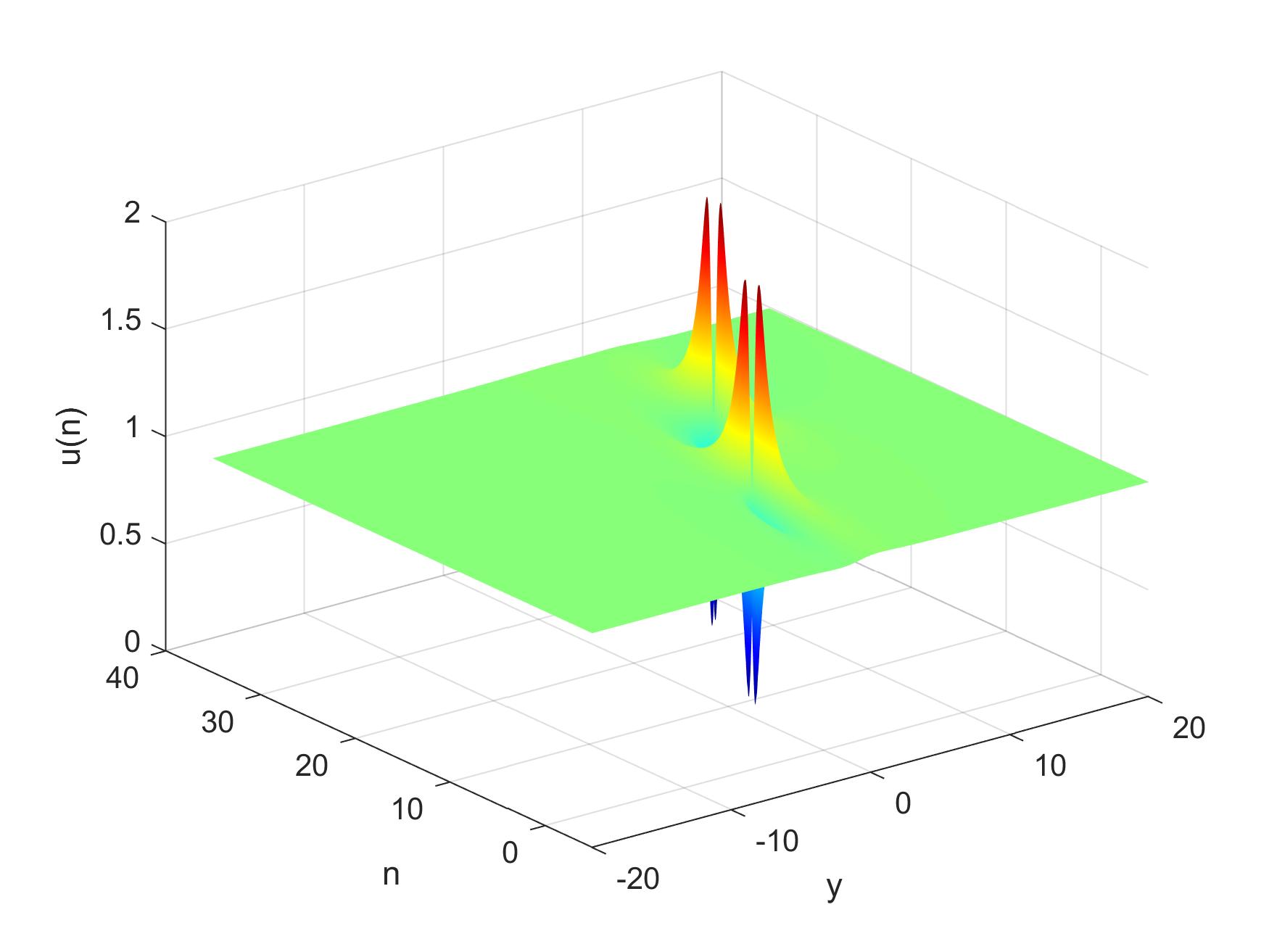} &
\includegraphics[height=0.220\textwidth,angle=0]{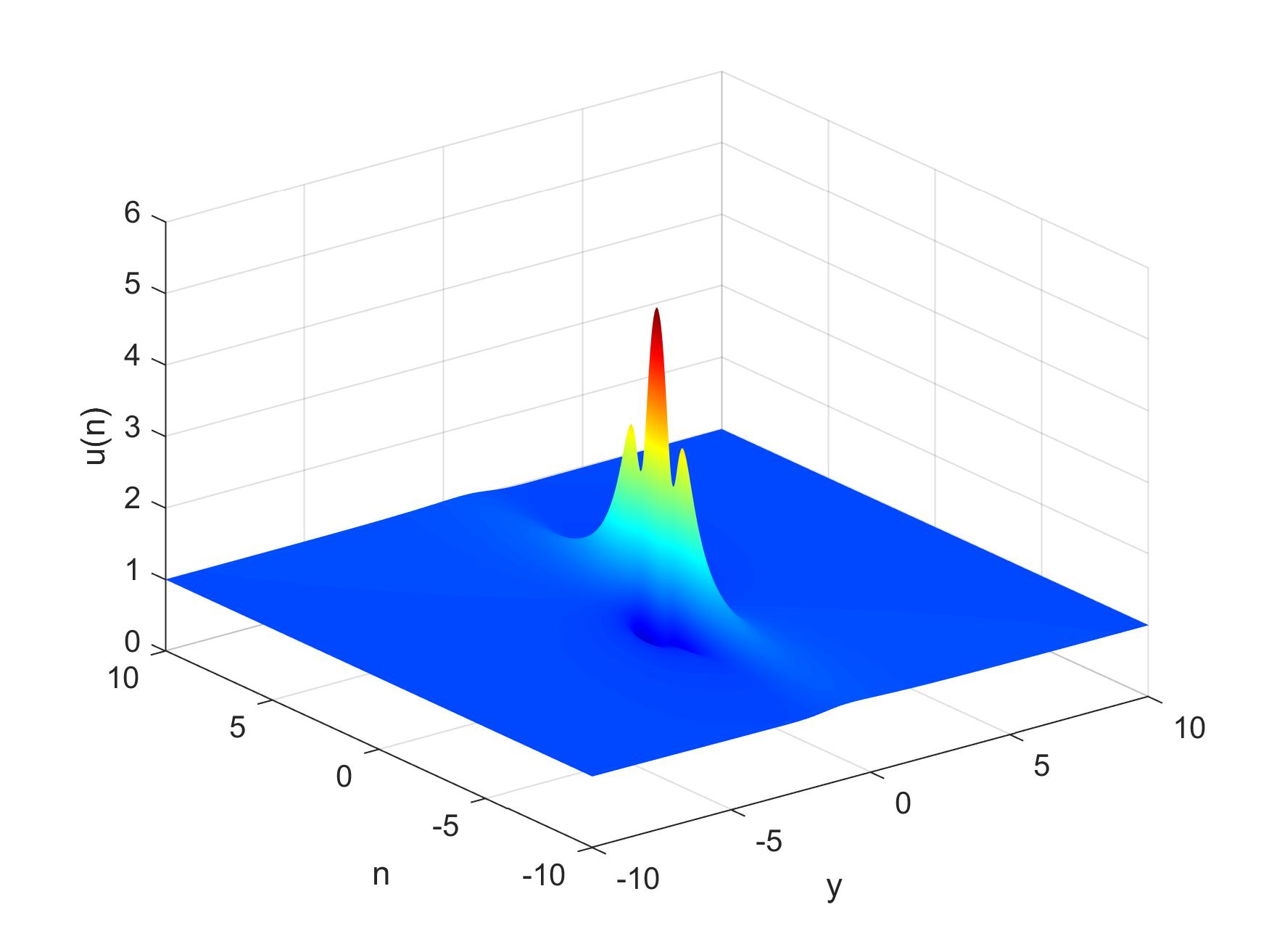} &
\includegraphics[height=0.220\textwidth,angle=0]{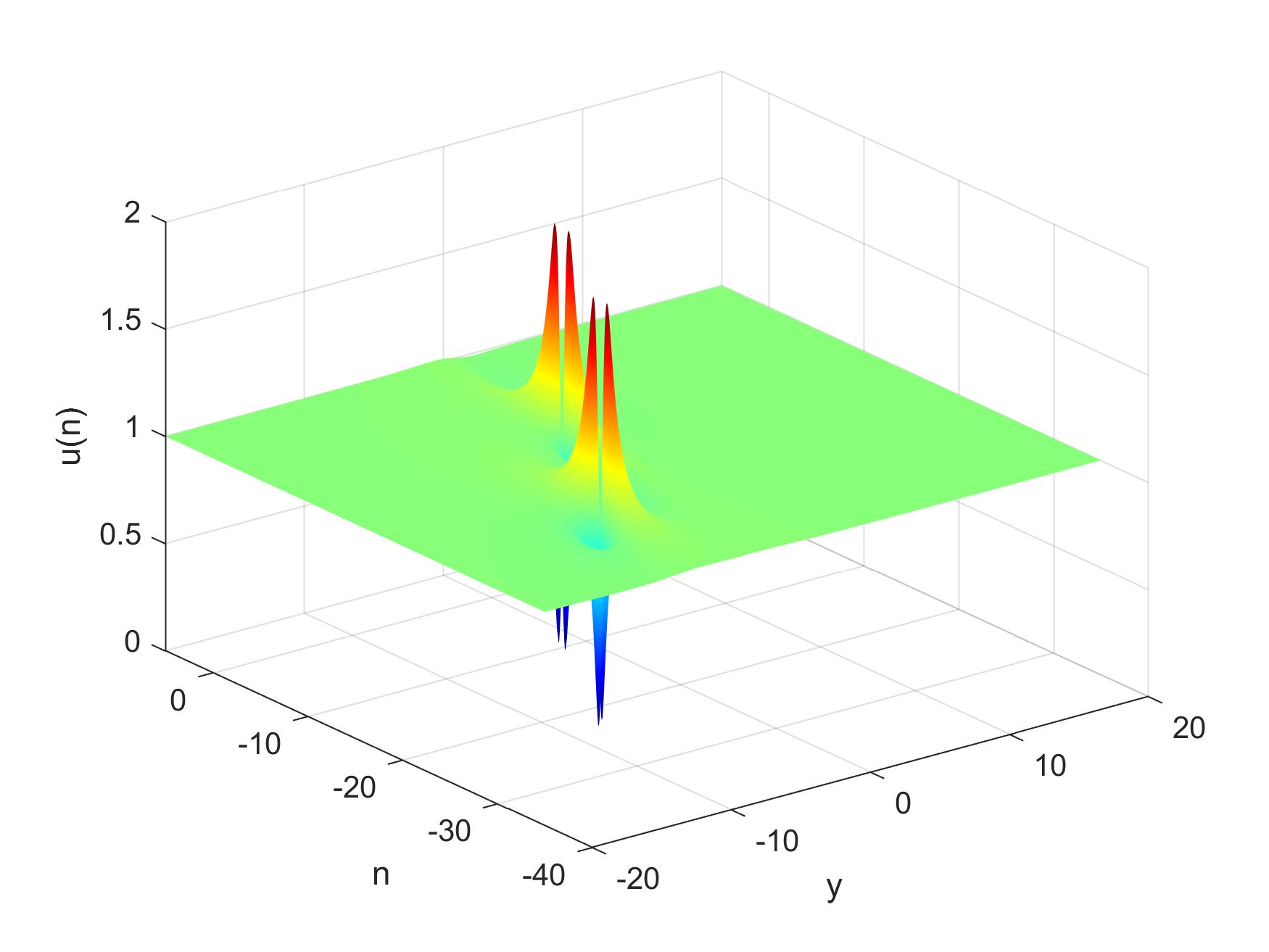} \\
(a) $t=-30$  & \quad  (b) $t=0$ & \quad (c) $t=30$ \\
\includegraphics[height=0.220\textwidth,angle=0]{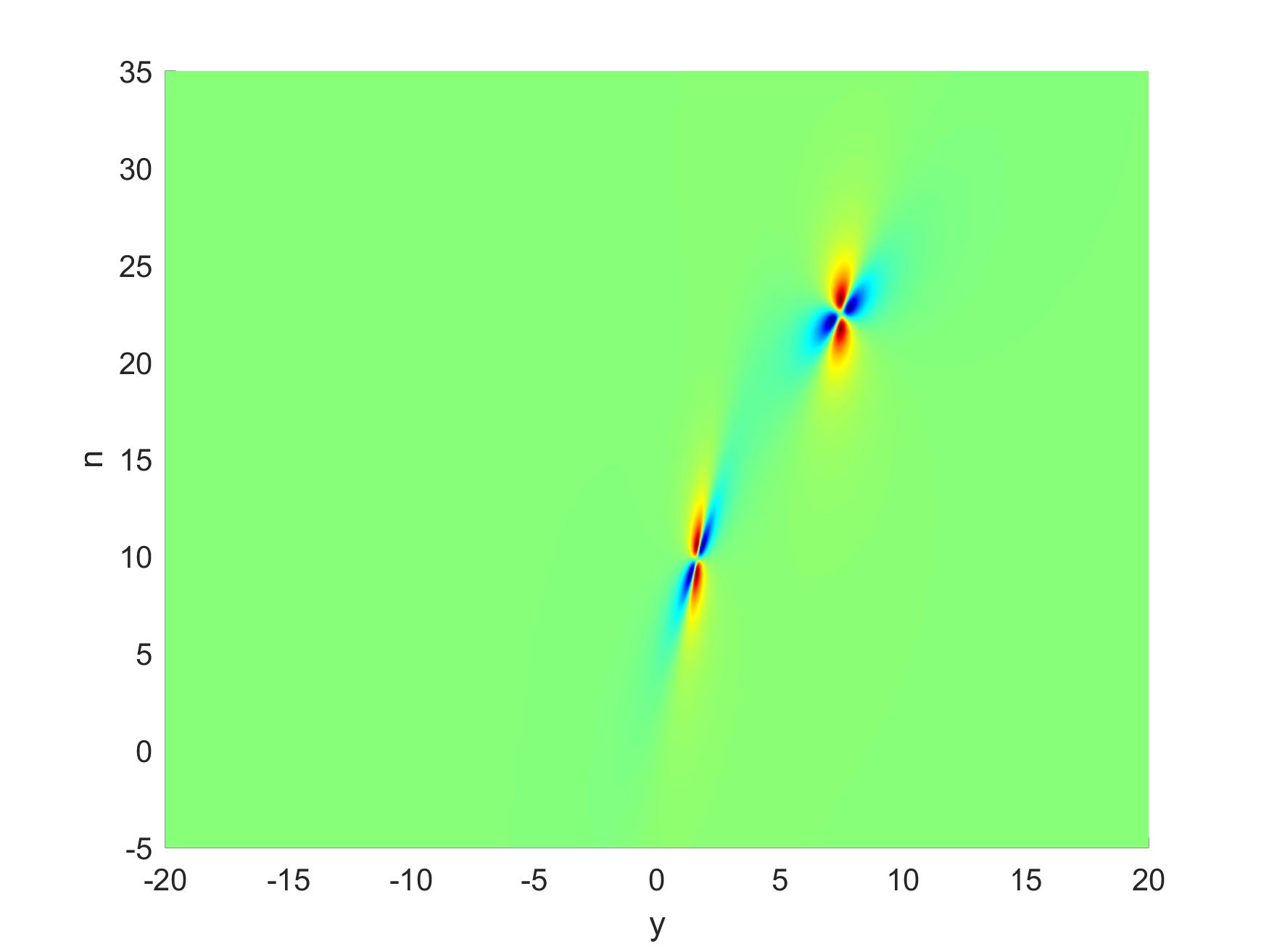} &
\includegraphics[height=0.220\textwidth,angle=0]{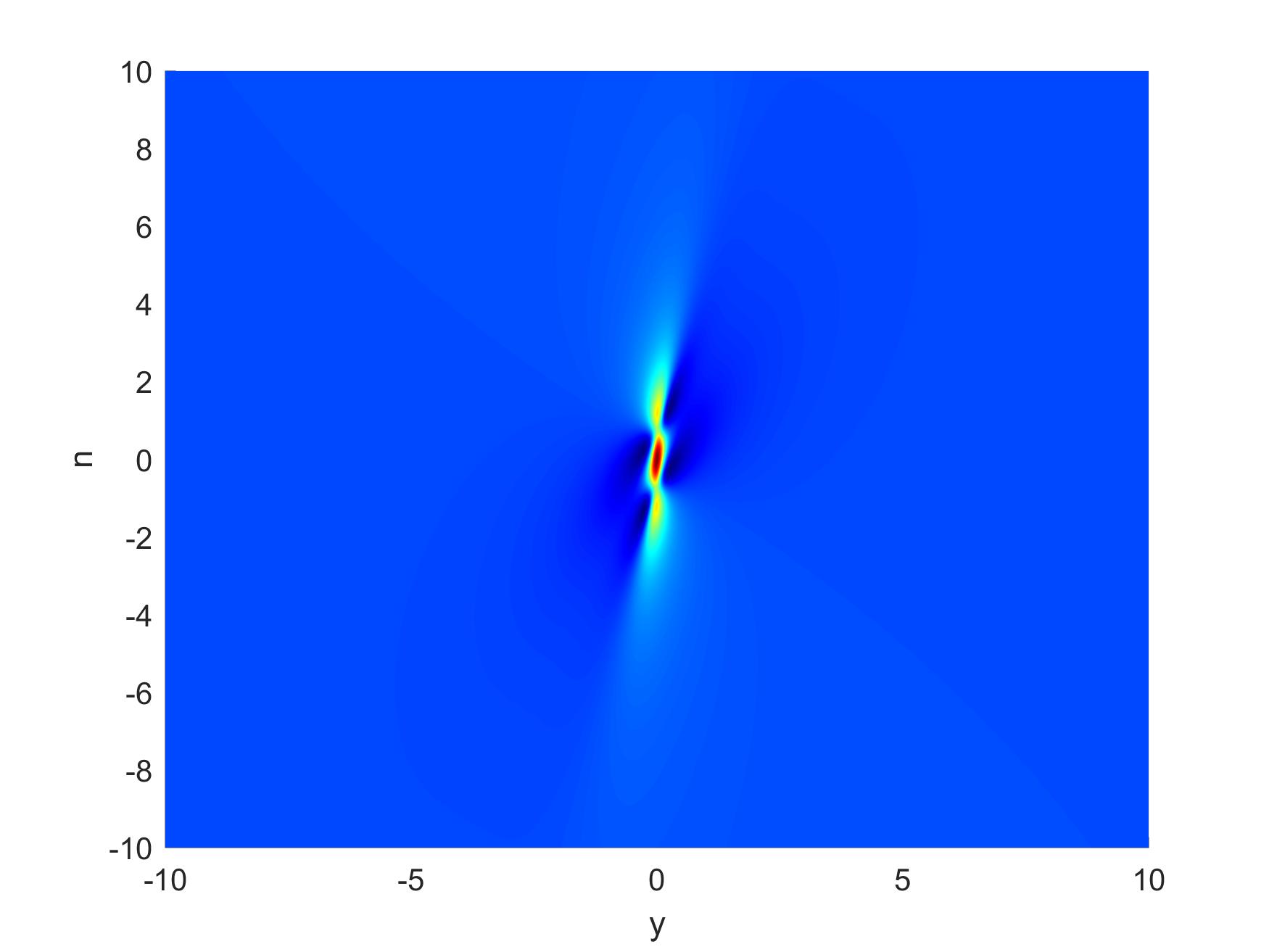} &
\includegraphics[height=0.220\textwidth,angle=0]{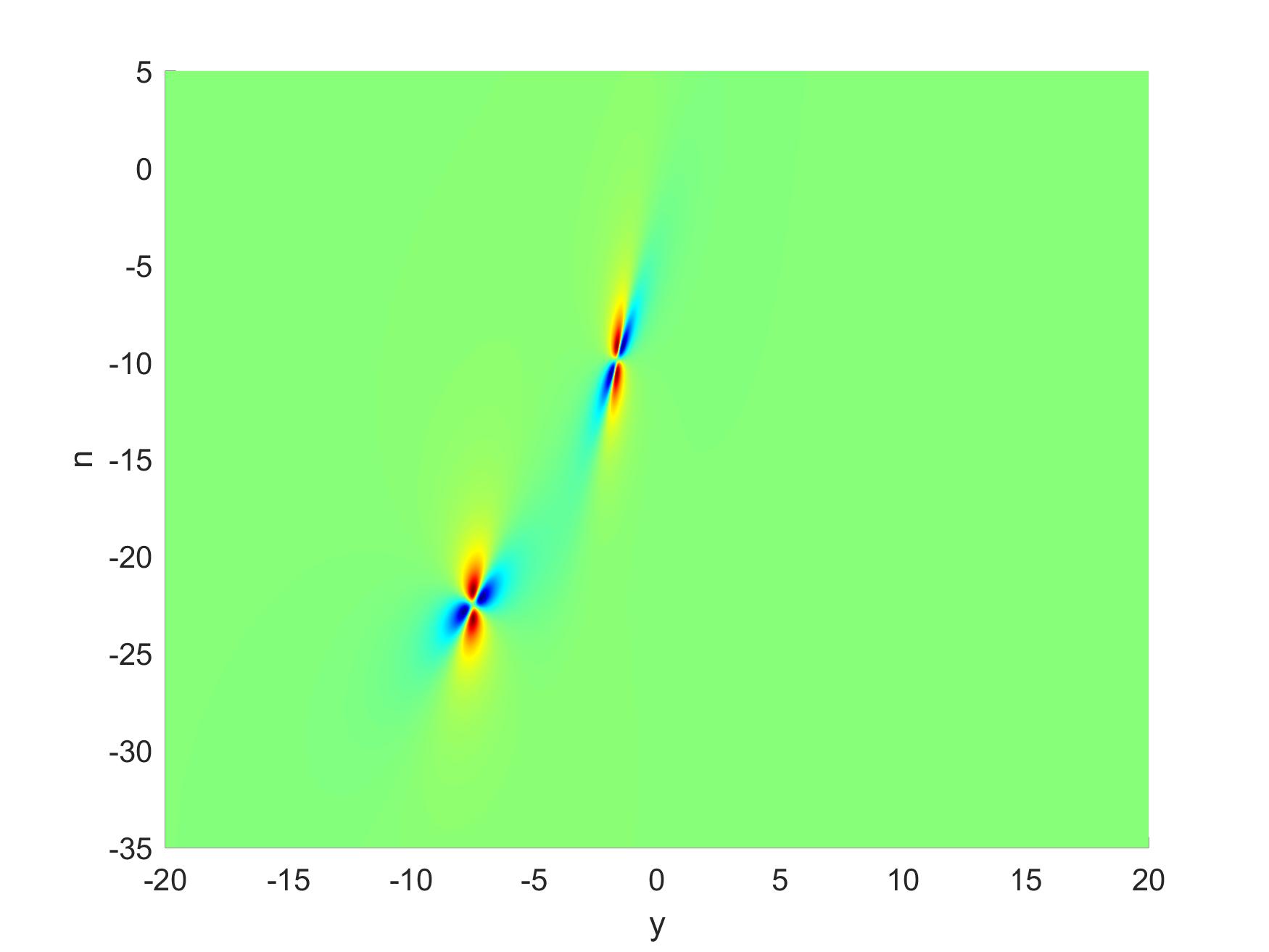} \\
(d) density plot of (a)  & \quad (e) density plot of (b) & \quad (f) density plot of (c)
\end{tabular}
\end{center}
\caption{
Fundamental two-lump solutions with the parameters:
$\lambda_1=1+\I$, $\lambda_2=2+2\I$.}\label{fig4}
\end{figure}
\begin{figure}
\begin{center}
\begin{tabular}{ccc}
\includegraphics[height=0.220\textwidth,angle=0]{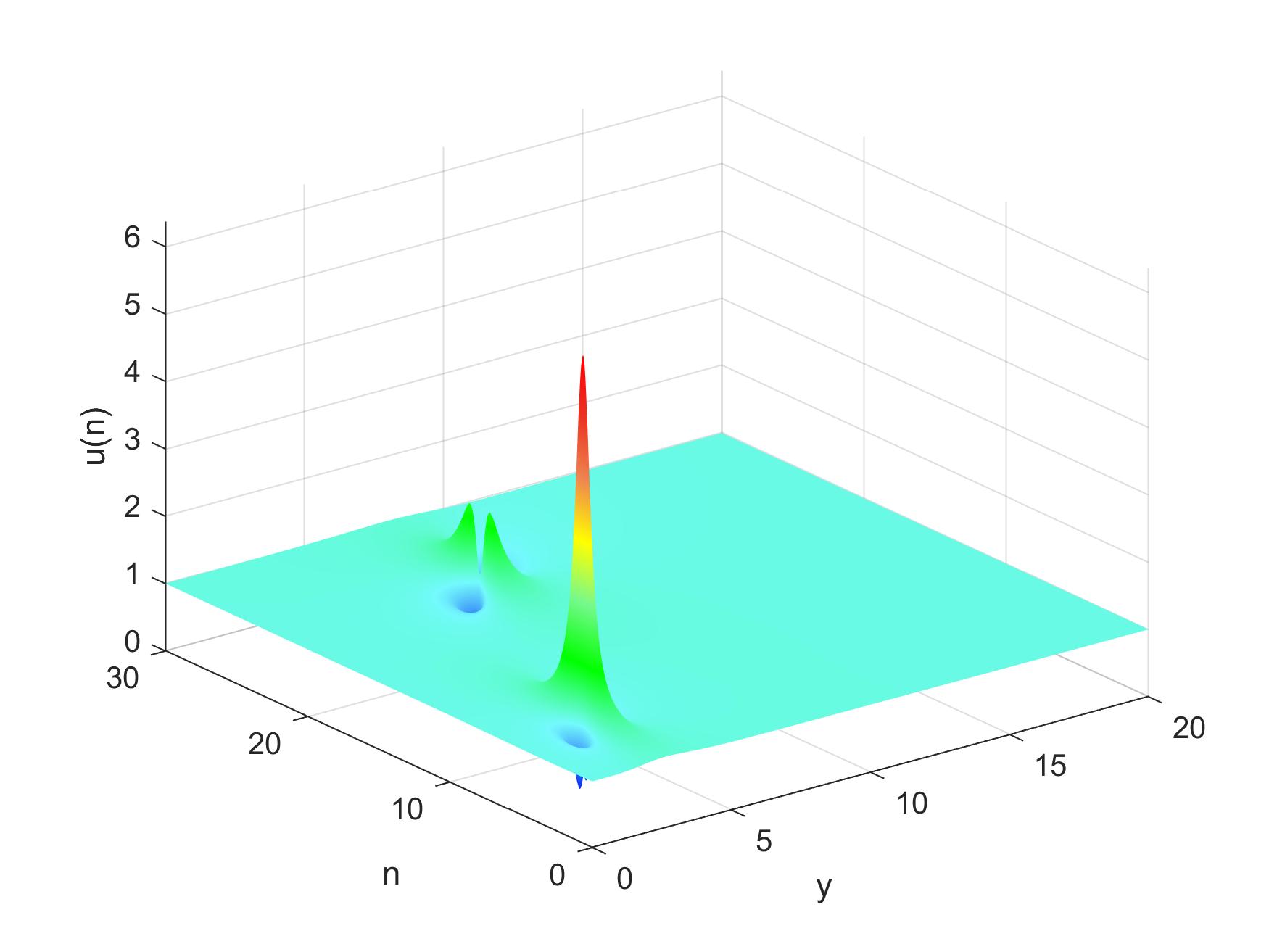} &
\includegraphics[height=0.220\textwidth,angle=0]{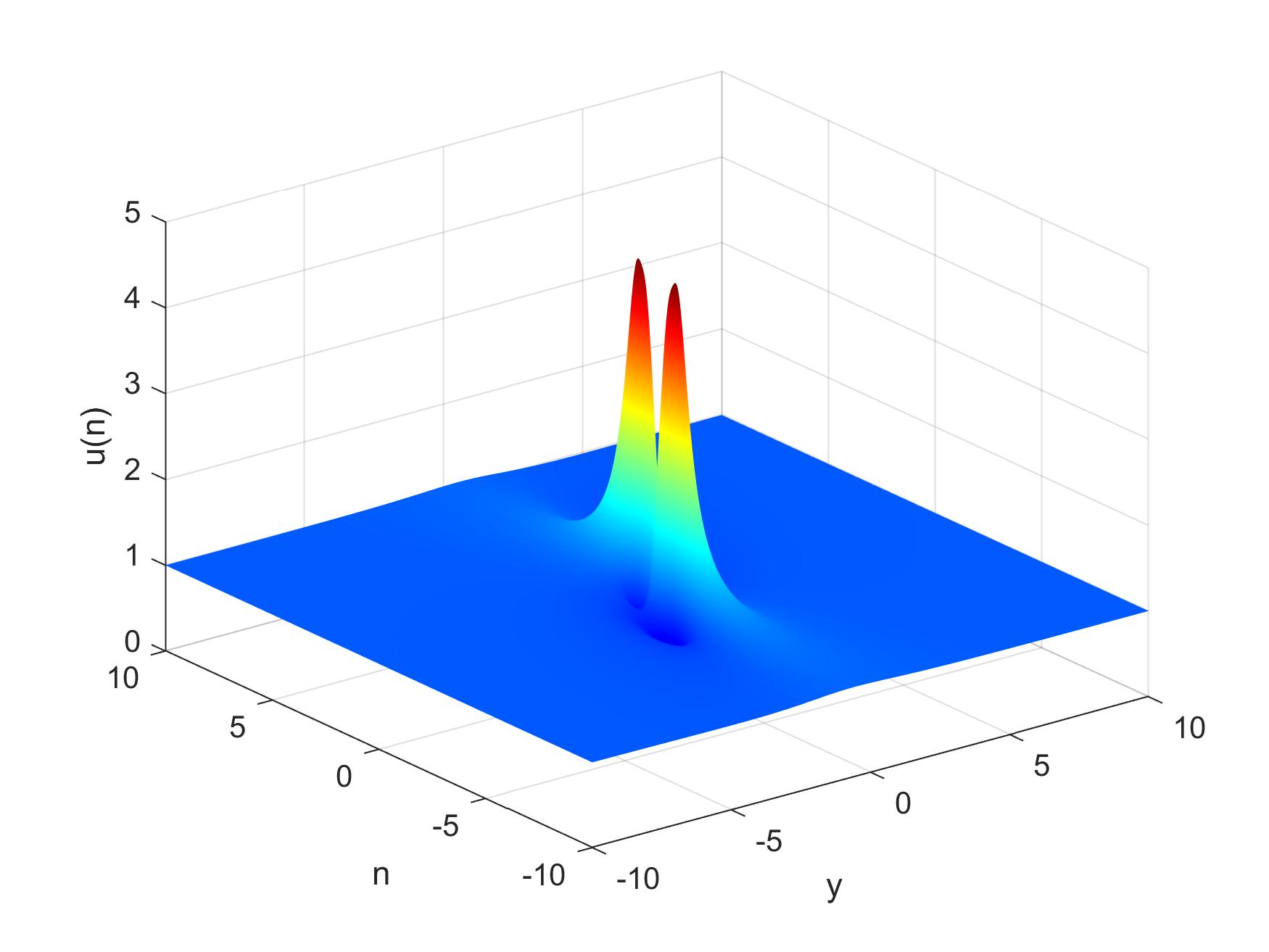} &
\includegraphics[height=0.220\textwidth,angle=0]{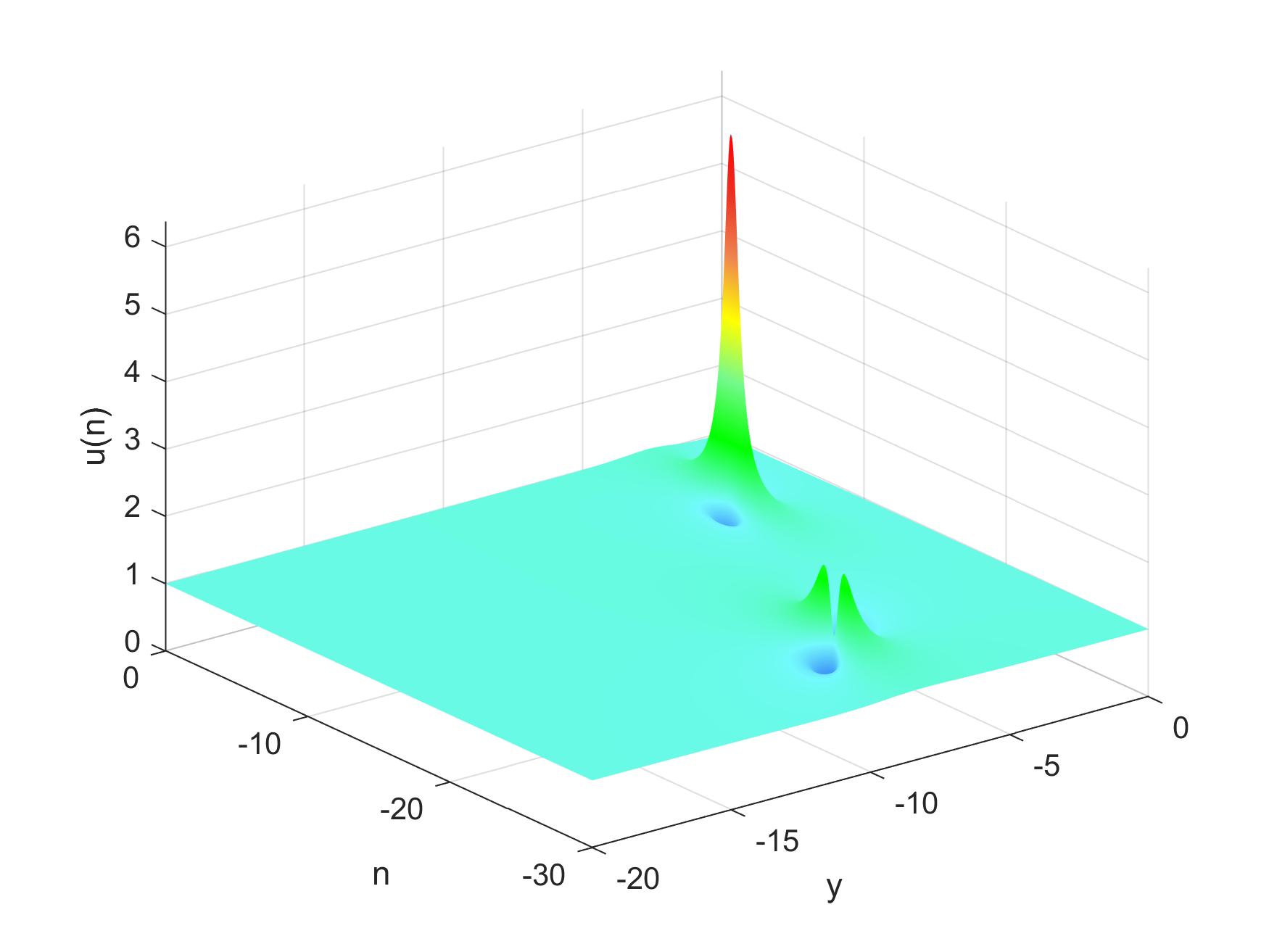} \\
(a) $t=-30$  & \quad  (b) $t=0$ & \quad (c) $t=30$ \\
\includegraphics[height=0.220\textwidth,angle=0]{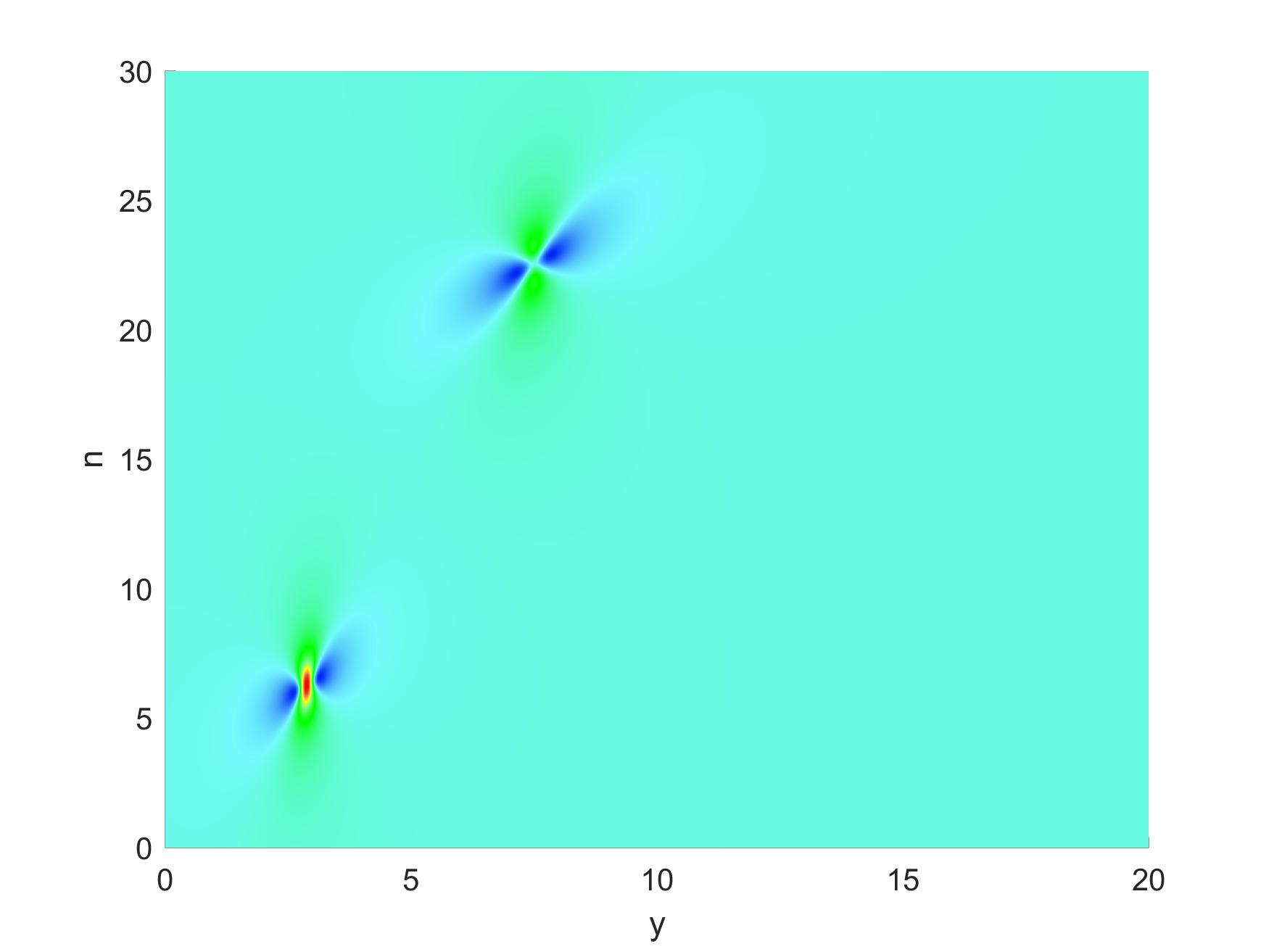} &
\includegraphics[height=0.220\textwidth,angle=0]{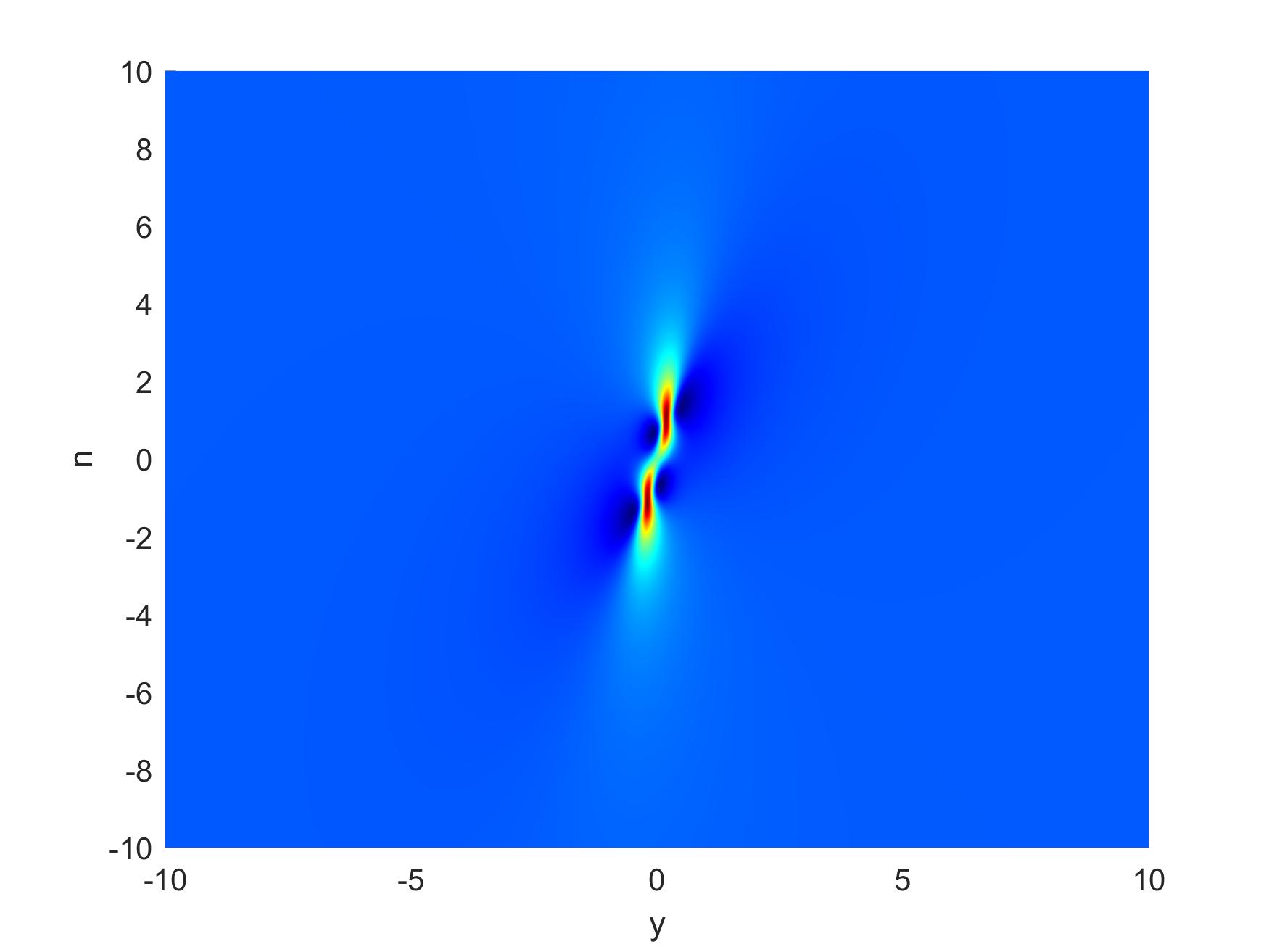} &
\includegraphics[height=0.220\textwidth,angle=0]{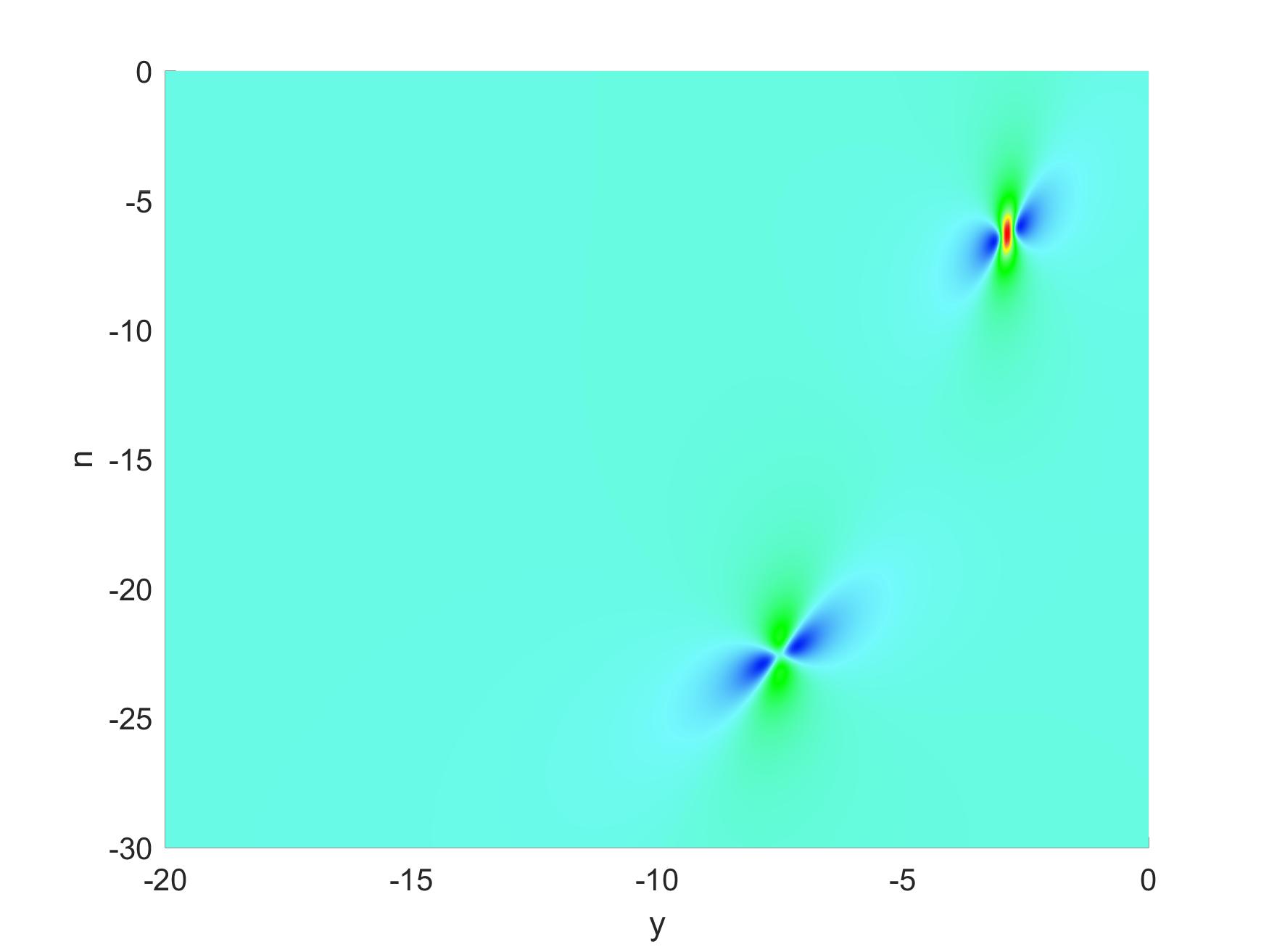} \\
(d) density plot of (a)  & \quad (e) density plot of (b) & \quad (f) density plot of (c)
\end{tabular}
\end{center}
\caption{
Fundamental-bright two-lump solutions with the parameters:
$\lambda_1=1+\I$, $\lambda_2=1+3\I$.}\label{fig4+}
\end{figure}

\subsection{Lump solutions in terms of the Schur polynomials}
In this subsection, we investigate lump solutions by the Schur polynomials. We define
\begin{align}
    \phi=e^{\eta},\quad \psi=e^\xi,\quad m=\frac{\I}{e^{p}-e^{q}}\phi\psi,
\end{align}
where
\begin{align}
    &\eta=-\I pn-\I e^{p}t-\I(e^{-p}+e^{2p})y-\I e^{2p}z,\\
    &\xi=\I qn+\I e^{q}t+\I(e^{-q}+e^{2q})y+\I e^{2q}z.
\end{align}
To construct lump solutions, we introduce two differential operators $A_j$ and $B_k$ as
\begin{align}
    A_j=\frac{1}{j!}\partial_p^j,\quad B_k=\frac{1}{k!}\partial_q^k.
\end{align}
We apply the operators  $A_j, B_k$ to $\phi$ and $\psi$, respectively, and denote
\begin{align}
    \phi_j=A_j\phi,\quad \psi_k=B_k\psi,\quad m_{jk}=A_jB_km.
\end{align}
We introduce the polynomials $P_j$ and $Q_k$ as
\begin{align}
    \frac{1}{j!}\partial_p^je^\eta=P_je^\eta,\quad \frac{1}{k!}\partial_q^k e^\xi=Q_ke^\xi,
\end{align}
where
\begin{align}
    &P_j=S_j(\pmb{\mu}(p)),\quad Q_k=S_k(\pmb{\nu}(q)),\\
    &\pmb{\mu}(p)=(\mu_1(p),\ \mu_2(p),\cdots,\ \mu_n(p),\cdots),\\
    &\pmb{\nu}(q)=(\nu_1(q),\ \nu_2(q),\cdots,\ \nu_n(q),\cdots),\\
    &\mu_1(p)=-\I n-\I e^{p}t+\I(e^{-p}-2e^{2 p})y-2\I e^{2p}z,\quad \mu_j=\frac{1}{j!}\partial_p^{j-1}\mu_1,\\
    &\nu_1(q)=\I n+\I e^{q}t-\I(e^{-q}-2e^{2q})y+2\I e^{2q}z,\quad \nu_j=\frac{1}{j!}\partial_q^{j-1}\nu_1,
\end{align}
and $S_j(\mathbf{x})$ are Schur polynomials with $\mathbf{x}=(x_1, x_2,\ \cdots)$.
\begin{rmk}
    Schur polynomials $S_j(\mathbf{x})$ are defined by
    \begin{align}
        \exp(\sum_{j=1}^\infty x_j\lambda^j)=\sum_{k=0}^{\infty}S_k(\mathbf{x})\lambda^k,
    \end{align}
    where $\mathbf{x}=(x_1, x_2,\ \cdots)$. For instance, the first ones are
    \begin{align}
        S_0(\mathbf{x})=1,\ S_1(\mathbf{x})=x_1,\ S_2(\mathbf{x})=\frac{1}{2}x_1^2+x_2,\ S_k(\mathbf{x})=\sum_{l_1+2l_2+\cdots+ml_m=k}\left( \prod_{j=1}^{m} \frac{x_j^{l_j}}{l_j!} \right).
    \end{align}
\end{rmk}

\begin{prop}
The variant BS lattice equation \eqref{variants-1}-\eqref{variants-3} admits the lump solutions $u(n)=\frac{\tau(n+\I)\tau(n-\I)}{\tau^2(n)}$ with
\begin{align}
    \tau(n)=\det_{1\leq i,j\leq N}(m_{ij}),
\end{align}
and
\begin{align}
    m_{ij}=\sum_{\sigma=0}^{k_i+k_j}\frac{(-1)^\sigma}{(-\I e^{p}+\I e^{q})^{\sigma+1}}\partial_t^\sigma\left(P_{k_i}Q_{k_j}\right).
\end{align}
\end{prop}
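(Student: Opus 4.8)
The plan is to deduce this proposition from the Gram-type determinant Theorem rather than re-verifying the bilinear equations from scratch, exploiting the fact that every entry shares a single pair of spectral parameters $(p,q)$ and differs only through the orders $k_i,k_j$ of the parameter-derivatives $A_{k_i}=\tfrac{1}{k_i!}\partial_p^{k_i}$ and $B_{k_j}=\tfrac{1}{k_j!}\partial_q^{k_j}$. Accordingly I would organize the argument in three steps: (i) show that the seed $m=\tfrac{\I}{e^{p}-e^{q}}\phi\psi$ and its $p,q$-derivatives satisfy the hypotheses of the Theorem; (ii) convert $A_{k_i}B_{k_j}m$ into the stated Schur-polynomial closed form; and (iii) strip off the exponential factor common to all entries by a gauge transformation.

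For step (i), I would first check that $\phi=e^{\eta}$ and $\psi=e^{\xi}$ obey the differential-difference relations \eqref{dde-1}-\eqref{dde-2}. Since $\phi(n\pm\I)=e^{\pm p}\phi$ and $\phi(n+2\I)=e^{2p}\phi$, one has $\partial_t\phi=-\I e^{p}\phi=-\I\phi(n+\I)$, and likewise for $\partial_y,\partial_z$ and for $\psi$; these are all one-line checks. Because $p,q$ are parameters independent of $t,y,z,n$, the operators $A_j,B_k$ commute with $\partial_t,\partial_y,\partial_z$ and with the lattice shifts $n\mapsto n\pm\I,\,n\pm2\I$, so $\phi_j=A_j\phi$ and $\psi_k=B_k\psi$ inherit \eqref{dde-1}-\eqref{dde-2}. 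Next I would verify that $m$ itself reproduces the Gram relations used in the Theorem's proof, namely $\partial_t m=\phi\psi$, $\partial_z m=\phi(n+\I)\psi(n)+\phi(n)\psi(n-\I)$, $\partial_y m=-\phi(n-\I)\psi(n+\I)+\partial_z m$, together with the second-order ones; each follows immediately from $\partial_t(\phi\psi)=-\I(e^{p}-e^{q})\phi\psi$ and the analogous $\partial_y,\partial_z$ identities. Applying $A_{k_i}B_{k_j}$ and using commutativity, $m_{ij}=A_{k_i}B_{k_j}m$ satisfies exactly these relations with $\phi\to\phi_{k_i}$, $\psi\to\psi_{k_j}$, so the Jacobi-identity computation of the Theorem carries over verbatim and $\det(A_{k_i}B_{k_j}m)$ solves \eqref{bilinear-1}-\eqref{bilinear-2}.

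For step (ii), the generating-function definition of the Schur polynomials gives $\tfrac{1}{j!}\partial_p^{j}e^{\eta}=S_j(\pmb{\mu}(p))e^{\eta}=P_j\phi$, since $\eta(p+\epsilon)-\eta(p)=\sum_{l\ge1}\mu_l(p)\epsilon^{l}$ with $\mu_l=\tfrac{1}{l!}\partial_p^{l}\eta$, and similarly $B_k\psi=Q_k\psi$. Hence $A_{k_i}B_{k_j}m=\int^{t}\phi_{k_i}\psi_{k_j}\,\dif t=\int^{t}P_{k_i}Q_{k_j}\,\phi\psi\,\dif t$. Writing $\kappa=-\I e^{p}+\I e^{q}$, the sole $t$-dependence of $\phi\psi$ is the factor $e^{\kappa t}$, so $\partial_t(\phi\psi)=\kappa\phi\psi$, while $P_{k_i}Q_{k_j}$ is a polynomial in $t$ of degree $k_i+k_j$ (each $\mu_l,\nu_l$ being linear in $t$). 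Differentiating the candidate antiderivative $\phi\psi\sum_{\sigma=0}^{k_i+k_j}\tfrac{(-1)^{\sigma}}{\kappa^{\sigma+1}}\partial_t^{\sigma}(P_{k_i}Q_{k_j})$ in $t$ and telescoping — with the would-be boundary term at $\sigma=k_i+k_j+1$ vanishing because $\partial_t^{k_i+k_j+1}(P_{k_i}Q_{k_j})=0$ — shows this finite sum times $\phi\psi$ equals $\int^{t}P_{k_i}Q_{k_j}\phi\psi\,\dif t$. Thus $A_{k_i}B_{k_j}m=\phi\psi\cdot\sum_{\sigma=0}^{k_i+k_j}\tfrac{(-1)^{\sigma}}{\kappa^{\sigma+1}}\partial_t^{\sigma}(P_{k_i}Q_{k_j})$, the bracketed sum being precisely the $m_{ij}$ of the statement. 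For step (iii), since $\phi\psi=e^{\eta+\xi}$ is common to every entry, $\det(A_{k_i}B_{k_j}m)=(\phi\psi)^{N}\det(m_{ij})$; as $\eta+\xi$ is linear in $(n,t,y,z)$, the factor $(\phi\psi)^{N}$ is an admissible gauge for the Hirota form, and equivalently it cancels in $u(n)=\tau(n+\I)\tau(n-\I)/\tau^{2}(n)$ because $\phi(n+\I)\psi(n+\I)\cdot\phi(n-\I)\psi(n-\I)=(\phi\psi)^{2}$. Hence $\tau(n)=\det(m_{ij})$ is a legitimate tau-function producing the same $u(n)$, which is rational in $(n,t,y,z)$, i.e.\ a lump.

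The main obstacle I anticipate lies in step (ii): fixing the Schur-polynomial identity with the correct argument vector $\pmb{\mu}(p)$, and then getting the integration-by-parts bookkeeping exactly right — the sign $(-1)^{\sigma}$, the powers $\kappa^{-(\sigma+1)}$, and in particular the termination of the sum at $\sigma=k_i+k_j$, which hinges on the degree count $\deg_t(P_{k_i}Q_{k_j})=k_i+k_j$. By contrast, steps (i) and (iii) reduce to the commutativity of $A_j,B_k$ with the lattice shifts and to the gauge-invariance of the bilinear operators, and are essentially routine once those observations are in place.
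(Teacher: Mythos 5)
Your proposal is correct, and it is exactly the argument the paper intends: the paper states this proposition without an explicit proof, but the section's scaffolding (the operators $A_j,B_k$ applied to the Gram seed $m=\tfrac{\I}{e^p-e^q}\phi\psi$, and the Schur polynomials $P_j,Q_k$) is precisely what you complete — parameter-differentiation to inherit the relations \eqref{dde-1}--\eqref{dde-2} and the Gram entry identities, integration by parts in $t$ terminating at $\sigma=k_i+k_j$ to get the stated $m_{ij}$, and removal of the common gauge factor $(\phi\psi)^N$, which also reconciles the paper's two conflicting uses of the symbol $m_{ij}$ (with and without the exponential). No gaps worth flagging; your handling of the integration constant via the exponential-times-polynomial uniqueness argument is the one point the paper leaves entirely tacit.
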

Choosing $N=1,k_1=1$, $p=q^*=\frac{\pi}{2}\I$ and $z=0$, we have
\begin{align}
    \tau(n)=\frac{1}{2}\left((n-2y)^2+\left(t+y-\frac{1}{2}\right)^2+\frac{1}{4}\right).
\end{align}
The lump solution $u(n)$ is expressed as
\begin{align}\label{onelump}
    u(n)=1+\frac{2(n-2y)^2-2\left(t+y-\frac{1}{2}\right)^2+\frac{1}{2}}{\left((n-2y)^2+\left(t+y-\frac{1}{2}\right)^2+\frac{1}{4}\right)^2},
\end{align}
which is illustrated in Fig. \ref{fig5}. Furthermore, the expression (\ref{onelump}) indicates that the peak location moves along $n=2y$. Compared with the one-lump (\ref{u-1lump}) obtained by BT, (\ref{onelump}) coincides with (\ref{u-1lump}) if we choose the parameters as $a=0$, $b=1$ with $t\rightarrow t-\frac{1}{2}$ in (\ref{u-1lump}). 
\begin{figure}
\begin{center}
\begin{tabular}{ccc}
\includegraphics[height=0.220\textwidth,angle=0]{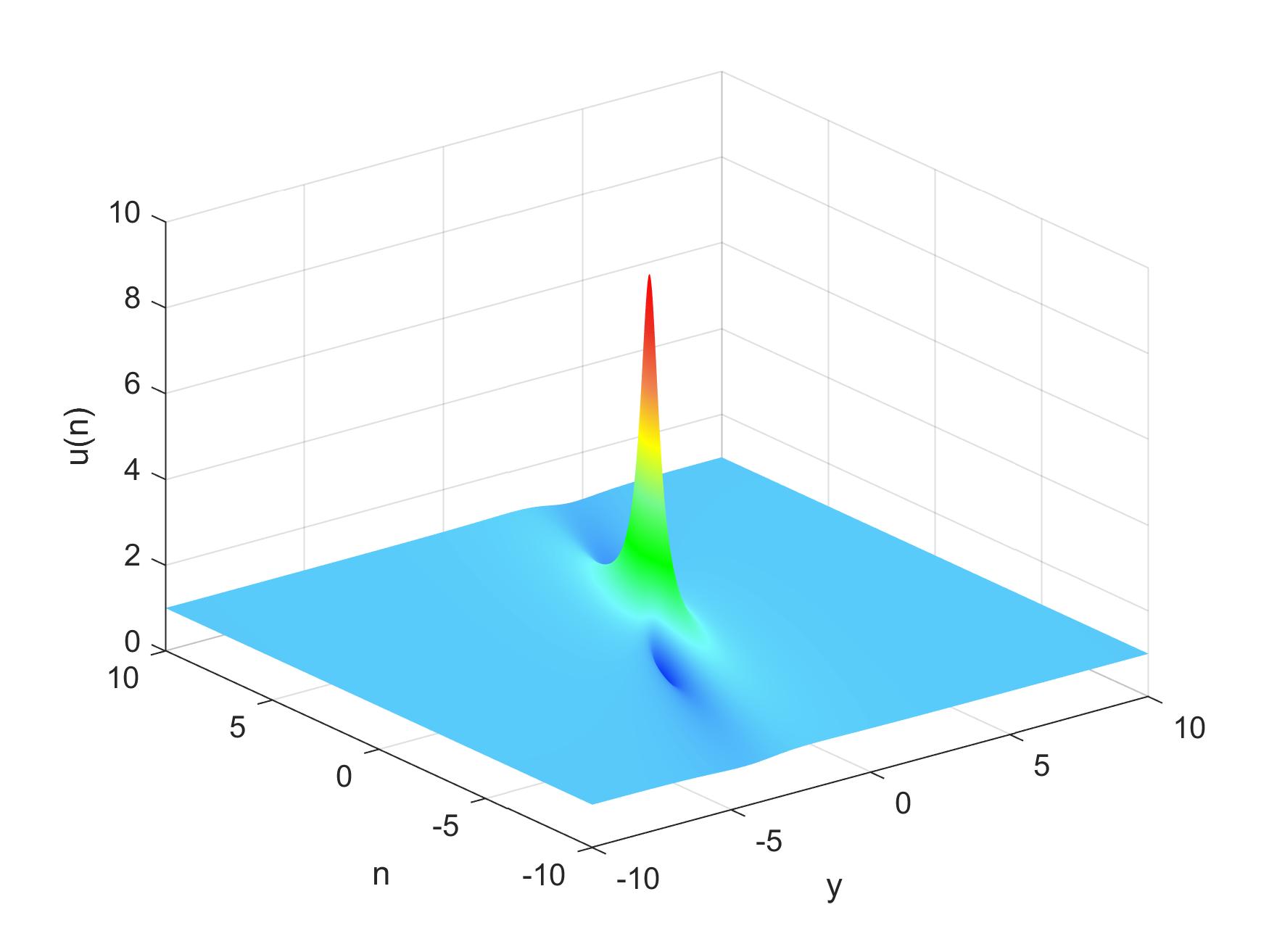} &
\includegraphics[height=0.220\textwidth,angle=0]{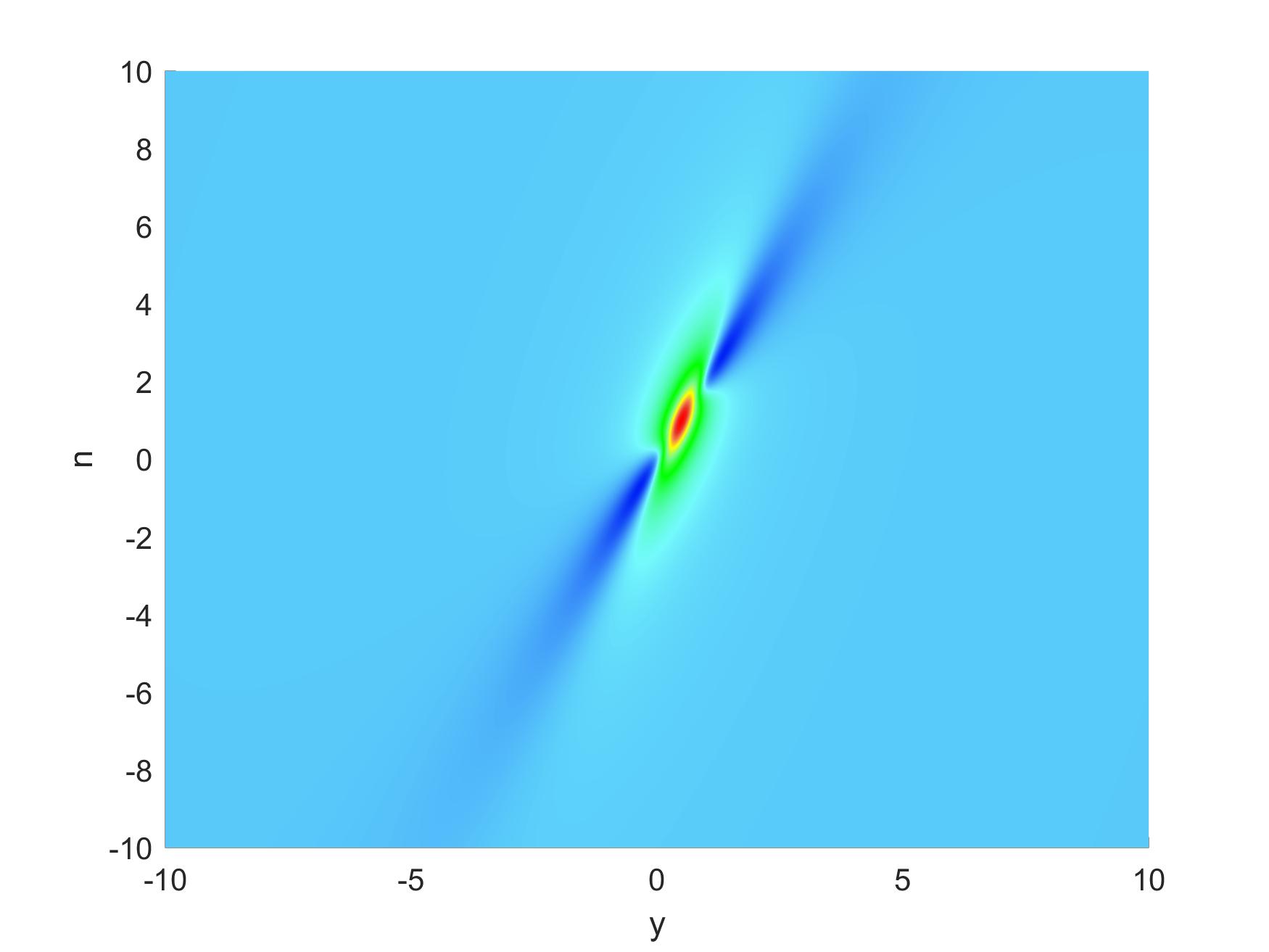} \\
(a) $u(n)$ with $t=0$  & \quad  (b) density plot of (a)
\end{tabular}
\end{center}
\caption{
Lump solution with parameters:
$p=q^*=\frac{\pi}{2}\I$.}\label{fig5}
\end{figure}

With the parameters assigned as  $N=1$, $k_1=2$ and $p=q^*=\frac{1}{2}\ln 2+\I\frac{3\pi}{4}$,  the two-lump solution is derived,
\begin{equation}
    \begin{split}
        \tau(n)=&\frac{1}{128}\left((2t-7y)^2-(2n-2t+y+1)^2-20y-1\right)^2+\frac{1}{32}\left(8y+1-(2n-2t+y)(2t-7y-1)\right)^2\\
        &+\frac{1}{16}(2n-2t+y+2)^2+\frac{1}{16}(2t-7y-1)^2+\frac{1}{8}.
    \end{split}
\end{equation}
The plot of the two-lump solution $u(n)$ is displayed in Fig. \ref{diff2lump}.

\begin{rmk}\label{remark2}
When compared with the BT in Subsection \ref{sec4.1}, it is evident that constructing lump solutions using Schur polynomials necessitates only two parameters $(p$  and $q$), which are responsible for defining the shape of the lump solution. This restriction means that the Schur polynomials method lacks the ability to yield interaction solutions—including the fundamental-bright lump solutions shown in Fig. \ref{fig4+}—a result that the BT can readily achieve.
\end{rmk}

\begin{figure}
\begin{center}
\begin{tabular}{ccc}
\includegraphics[height=0.220\textwidth,angle=0]{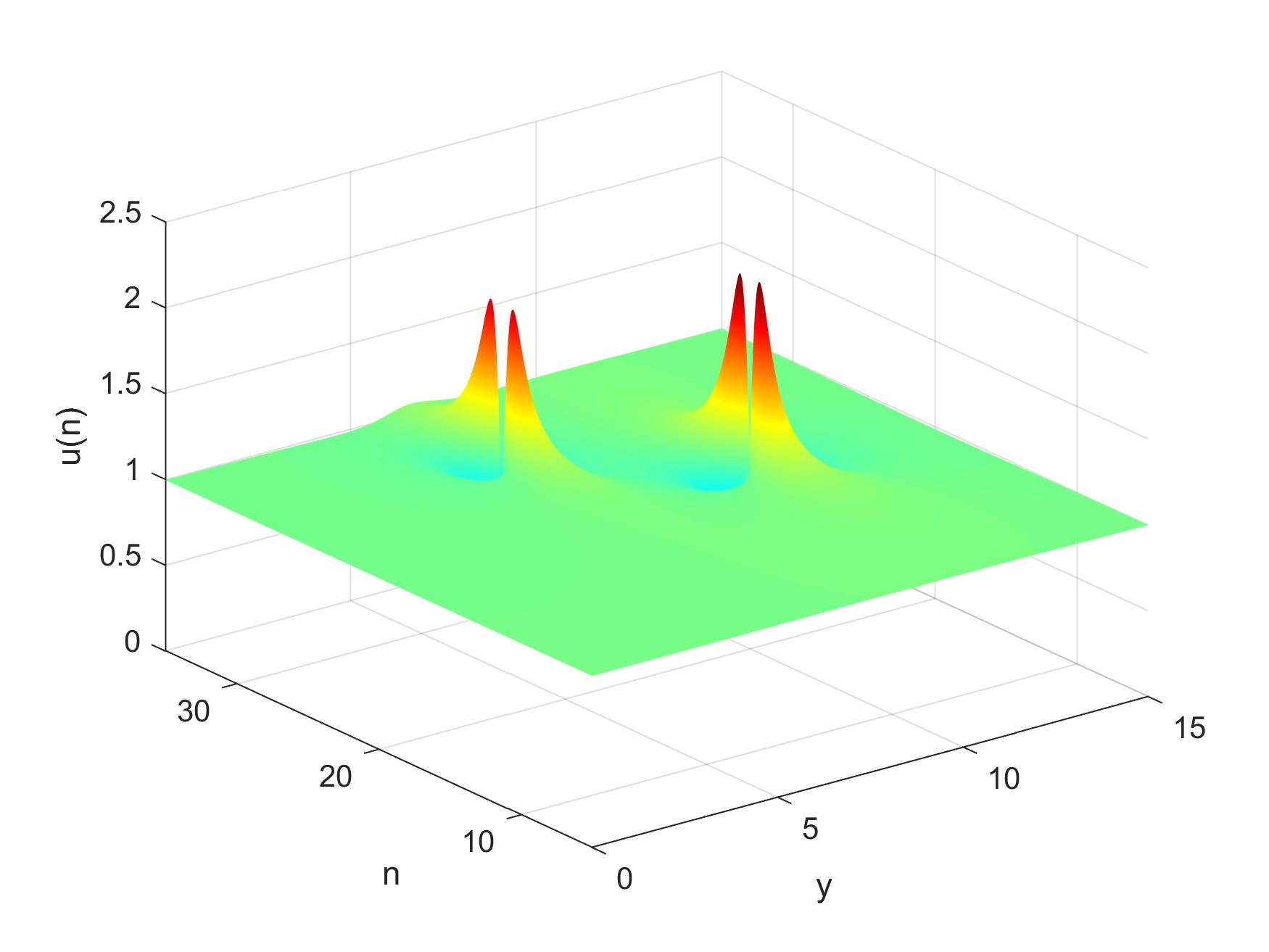} &
\includegraphics[height=0.220\textwidth,angle=0]{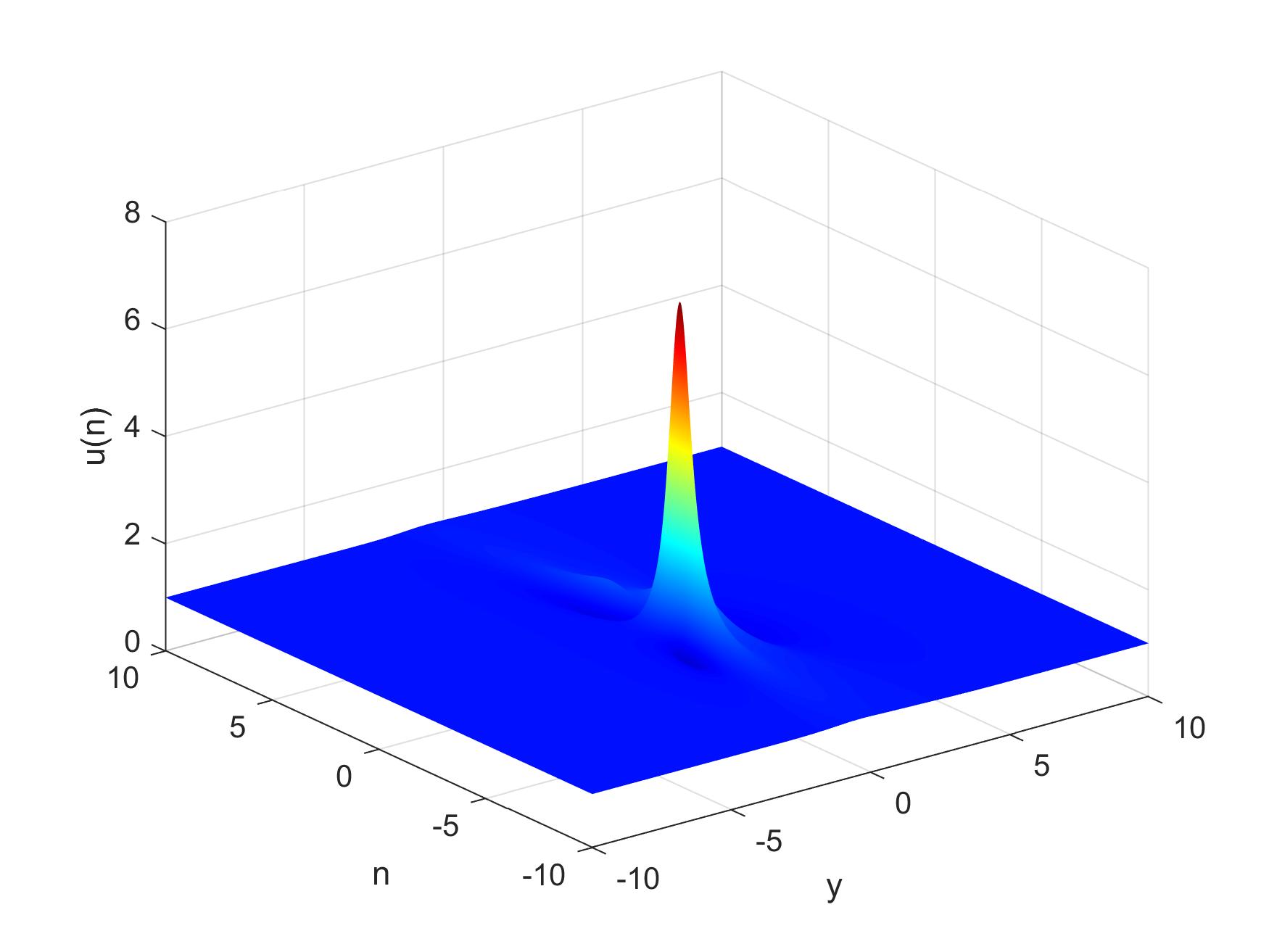} &
\includegraphics[height=0.220\textwidth,angle=0]{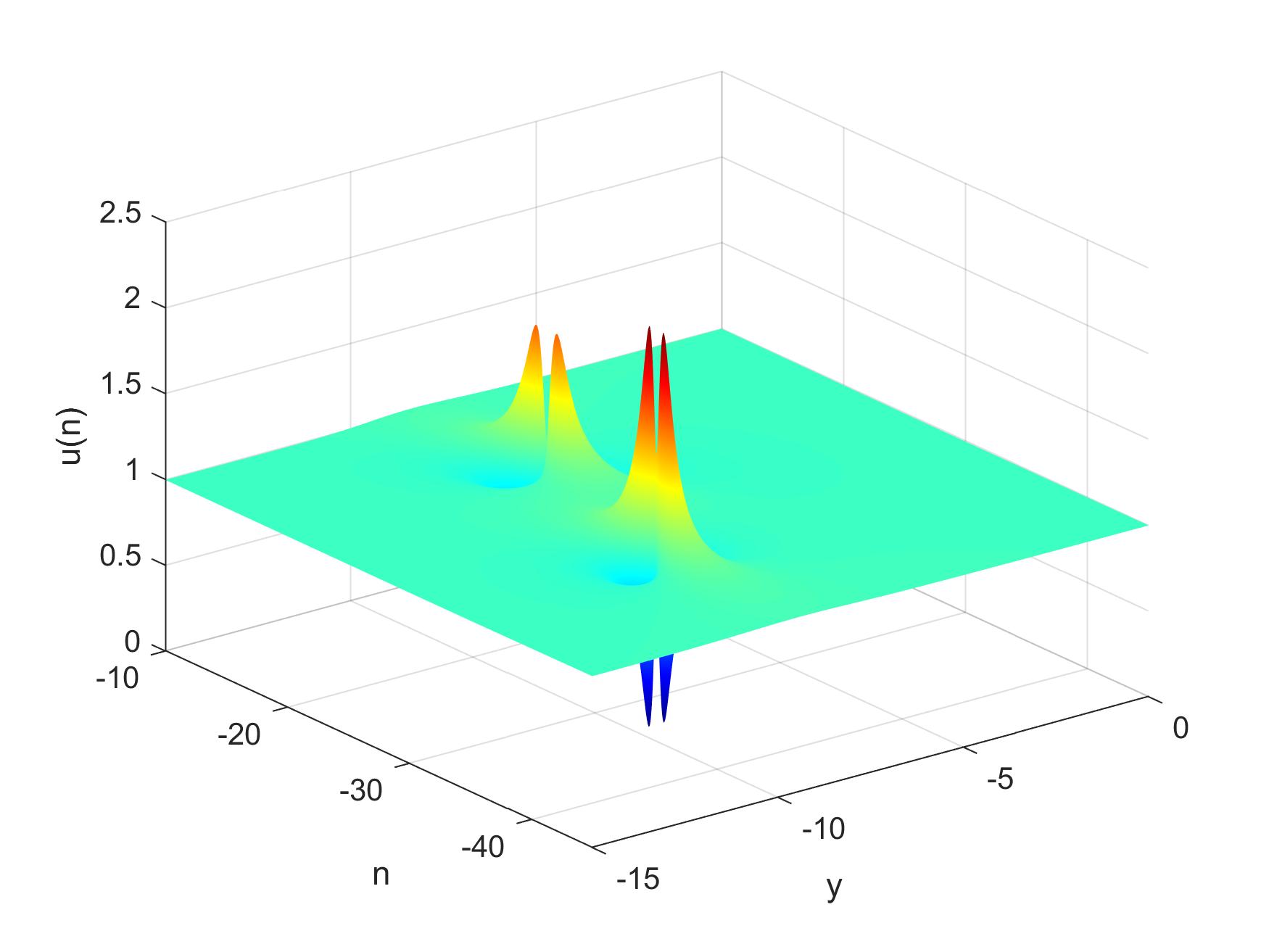} \\
(a) $t=30$  & \quad  (b) $t=0$ & \quad (c) $t=-30$ \\
\includegraphics[height=0.220\textwidth,angle=0]{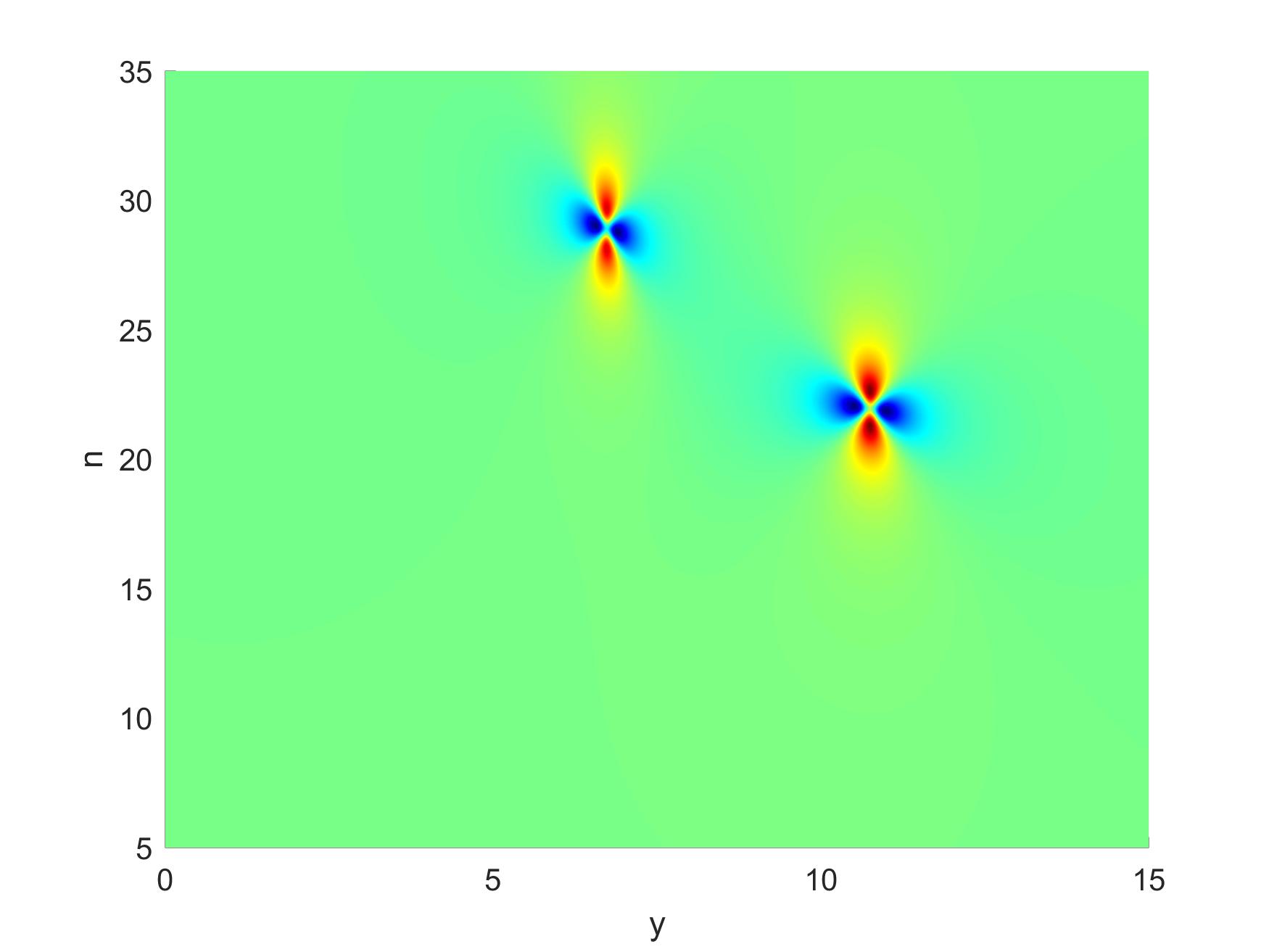} &
\includegraphics[height=0.220\textwidth,angle=0]{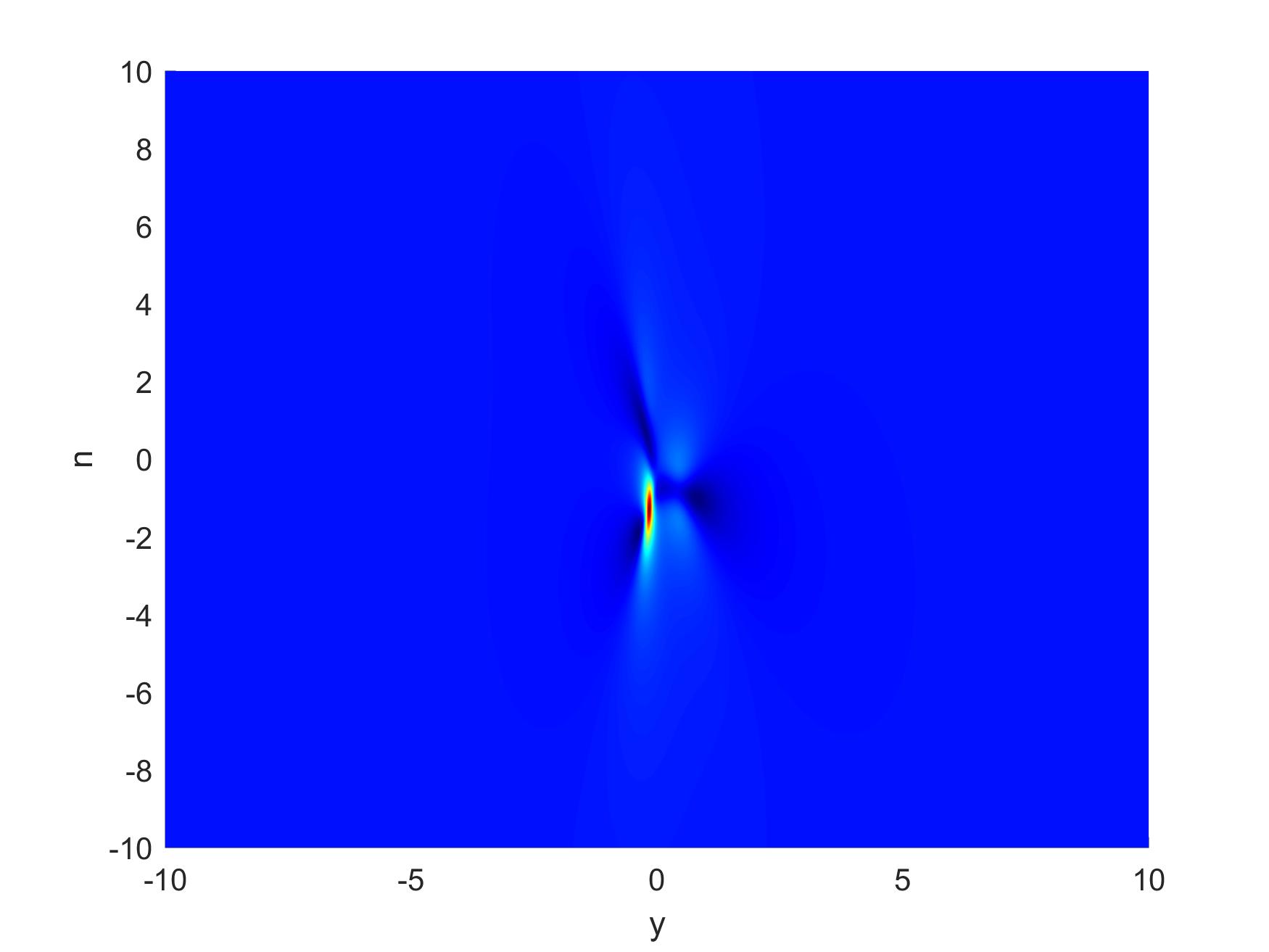} &
\includegraphics[height=0.220\textwidth,angle=0]{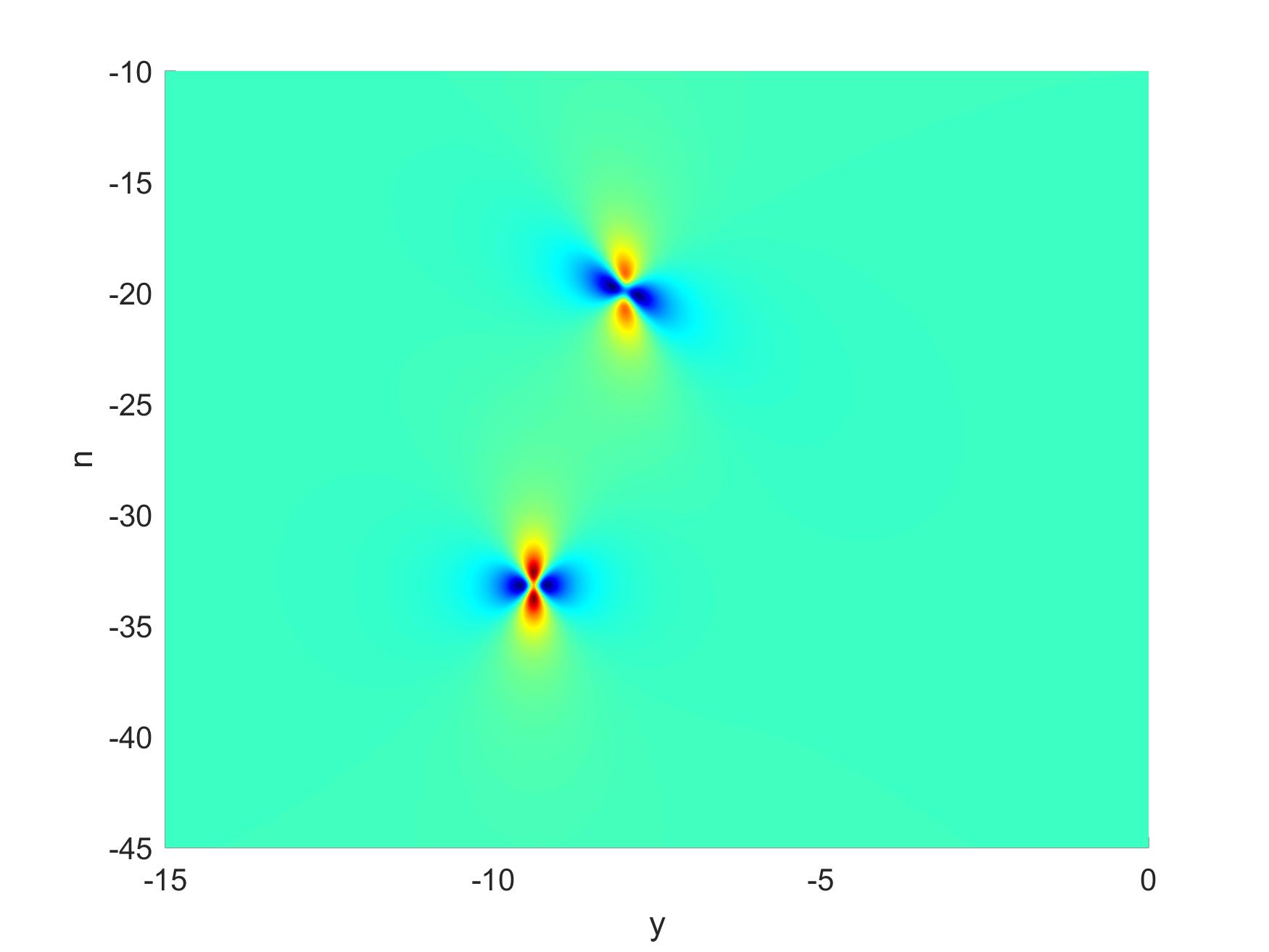} \\
(d) density plot of (a)  & \quad (e) density plot of (b) & \quad (f) density plot of (c)
\end{tabular}
\end{center}
\caption{
Two-lump solutions with parameters
$N=1$, $k_1=2$, $p=\frac{1}{2}\ln 2+\I\frac{3\pi}{4}$.}\label{diff2lump}
\end{figure}

In order to construct higher-order lump solutions, we set $e^{p}=a+\I b$. Then we have
\begin{align}
    m_{ij}=\frac{1}{2b}\sum_{r=0}^{k_N}\sum_{s=0}^{k_N}\frac{1}{(-2b)^{r+s}}\left(\begin{array}{c}
        r+s \\
        s
    \end{array}\right)\partial_t^rP_{k_i}\partial_t^sP^*_{k_j}.
\end{align}
By using the judicious technique in \cite{Chakravarty}, we can rewrite the tau function $\tau(n)$ as
\begin{align}
    \tau(n)=\frac{1}{(2b)^{N}}\sum_{0\leq l_1<\cdots<l_N\leq k_N}\frac{|Q(l_1\cdots l_N)|^2}{(-2b)^{2(l_1+\cdots+l_N)}},
\end{align}
where
\begin{align}
    &Q(l_1\ \cdots\ l_N)=\sum_{0\leq r_1<\cdots<r_n\leq k_N}U\left(
    \begin{array}{c}
        l_1\ \cdots\ l_N \\
        r_1\ \cdots\ r_N
    \end{array}\right)\mathcal{P}(r_1\ \cdots\ r_N),\\
    &\mathcal{P}=\left(\begin{array}{cccc}
        P_{k_1} & P_{k_2} &\cdots & P_{k_N} \\
        \partial_tP_{k_1} & \partial_tP_{k_2} &\cdots & \partial_tP_{k_N} \\
        \vdots & \vdots & \vdots & \vdots\\
        \partial_t^{k_N}P_{k_1} & \partial_t^{k_N}P_{k_2} &\cdots &\partial_t^{k_N}P_{k_N}
    \end{array}\right),
\end{align}
and $U\left(
    \begin{array}{c}
        l_1\ \cdots\ l_N \\
        r_1\ \cdots\ r_N
\end{array}\right)$ denotes  the $N\times N$ minor of $U$ corresponding to the rows  $(l_1\ \cdots\ l_N)$ and columns $(r_1\ \cdots\ r_N)$. $U$ and $D$ are $(k_N+1)\times(k_N+1)$ matrix, whose elements are given by
\begin{align}
    U_{rs}&=\left\{
    \begin{array}{cc}
        \frac{1}{(-2b)^{s-r}}\left(
        \begin{array}{c}
            s \\
            r
        \end{array}\right),& r\leq s \\
        0, & r>s
    \end{array}\right.,\\
    D_{rr}&=(-2b)^{-2r},\quad r,s=0,1,\cdots,k_n .
\end{align}
By virtue of the properties of Schur polynomials, we have
\begin{align}
    \partial_{\mu_j}P_{k_i}=P_{k_i-j},\quad \partial_tP_{k_i}=\sum_{j=1}^{k_i}\frac{\partial \mu_j}{\partial t}P_{k_i-j}=-\I e^{p}\sum_{j=1}^{k_i}\frac{1}{j!}P_{k_i-j}.
\end{align}
Therefore, $\mathcal{P}$ admits the expression
\begin{align}
    \mathcal{P}=\left(\begin{array}{c}
        e_1^{(0)} \\
        e_1^{(1)}A \\
        \vdots \\
        e_1^{(k_N)}A^{k_N}
    \end{array}\right)\left(\begin{array}{cccc}
        P_{k_1} & P_{k_2} &\cdots &P_{k_N} \\
        P_{k_1-1} & P_{k_2-1} &\cdots &P_{k_N-1} \\
        \vdots & \vdots & \vdots & \vdots \\
        P_{k_1-k_N} & P_{k_2-k_N} &\cdots &P_{k_N-k_N}
    \end{array}\right),
\end{align}
where $A$ is a $(k_N+1)\times(k_N+1)$ matrix in the  form
\begin{align}
    A=\left(\begin{array}{cccccc}
        0 & 1 & \frac{1}{2} & \frac{1}{3!} & \cdots &\frac{1}{k_N!}\\
        0 & 0 & 1 & \frac{1}{2} & \cdots &\frac{1}{(k_N-1)!}\\
        0 & 0 & 0 & 1 & \cdots &\frac{1}{(k_N-2)!}\\
        \vdots & \vdots & \vdots & \vdots & \vdots & \vdots \\
        0 & 0 & 0 & 0 & \cdots & 1 \\
        0 & 0 & 0 & 0 & \cdots & 0 \\
    \end{array}\right),
\end{align}
and $e_1^{(j)}=((-\I)^je^{jp},0,\cdots,0)$ is a $1\times(k_N+1)$ row vector.

For example, with the parameters selected as $N=2$, $k_1=1$, $k_2=3$, $p=q^*=\frac{\pi}{2}\I$, we  obtain
\begin{align}
    U=\left(\begin{array}{cccc}
        1 & -\frac{1}{2}  & \frac{1}{4} & -\frac{1}{8} \\
        0 & 1             & -1          & \frac{3}{4} \\
        0 & 0             & 1           & -\frac{3}{2} \\
        0 & 0             & 0           & 1
    \end{array}\right), \quad
    \mathcal{P}=\left(\begin{array}{cc}
        P_1 & P_3 \\
        1   & P_2+\frac{1}{2}P_1+\frac{1}{3!} \\
        0   & P_1+1 \\
        0   & 1
    \end{array}\right),
\end{align}
\begin{align}
    \tau(n)=\frac{1}{4}\left(\frac{\left|\frac{1}{3}\mu_1^3-\frac{1}{2}\mu_1^2+\frac{1}{6}\mu_1-\mu_3\right|^2}{4}+\frac{\left(\left|\mu_1-\frac{1}{2}\right|^2+\frac{1}{4}\right)^2}{16}\right).
\end{align}
Under the variable transformation (\ref{solution}), the three-lump solutions can be constructed, as displayed in Fig. \ref{fig6}. The expression of the three-lump solution is rather complicated and is therefore omitted herein.

\begin{figure}
\begin{center}
\begin{tabular}{ccc}
\includegraphics[height=0.220\textwidth,angle=0]{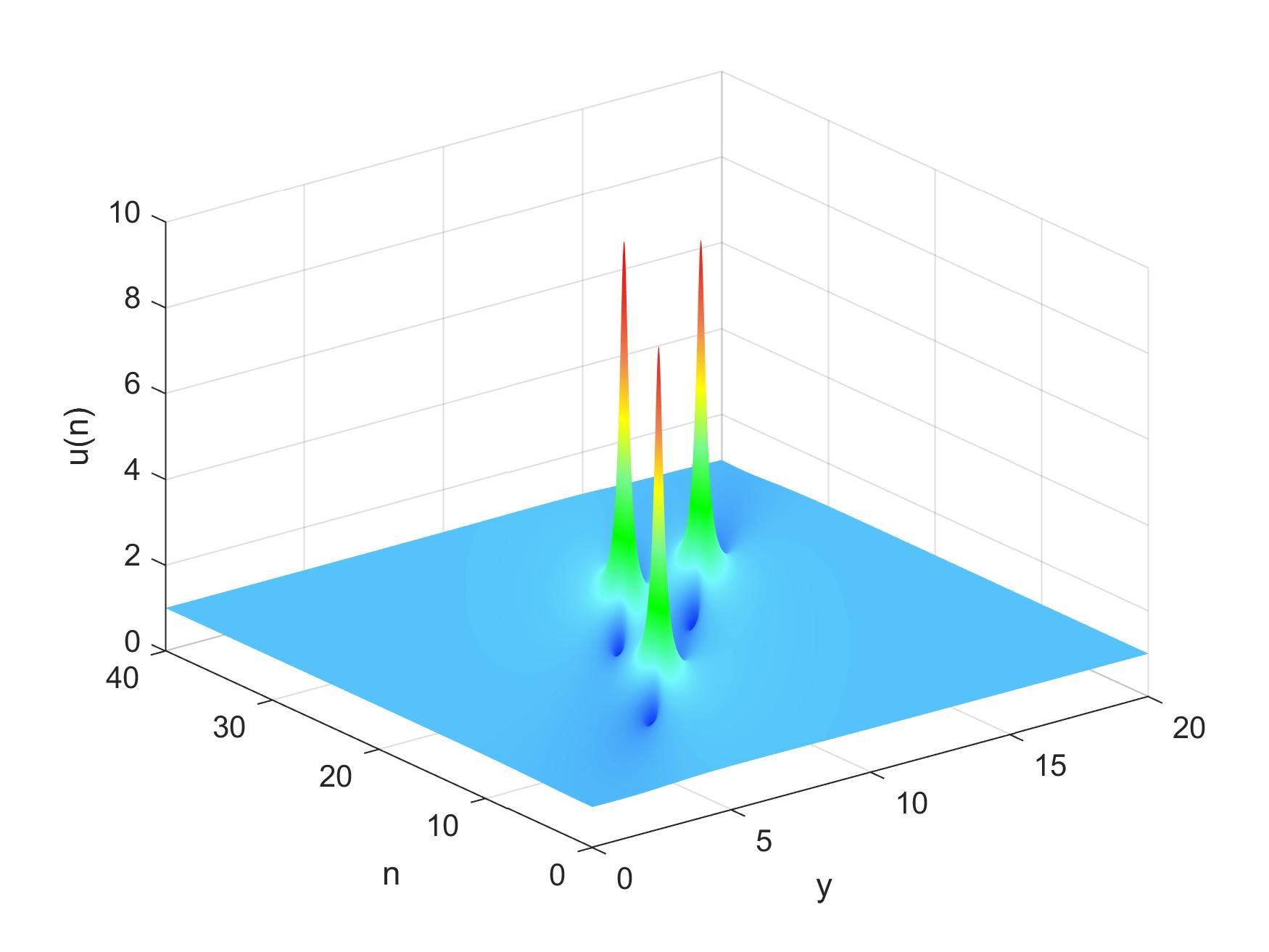} &
\includegraphics[height=0.220\textwidth,angle=0]{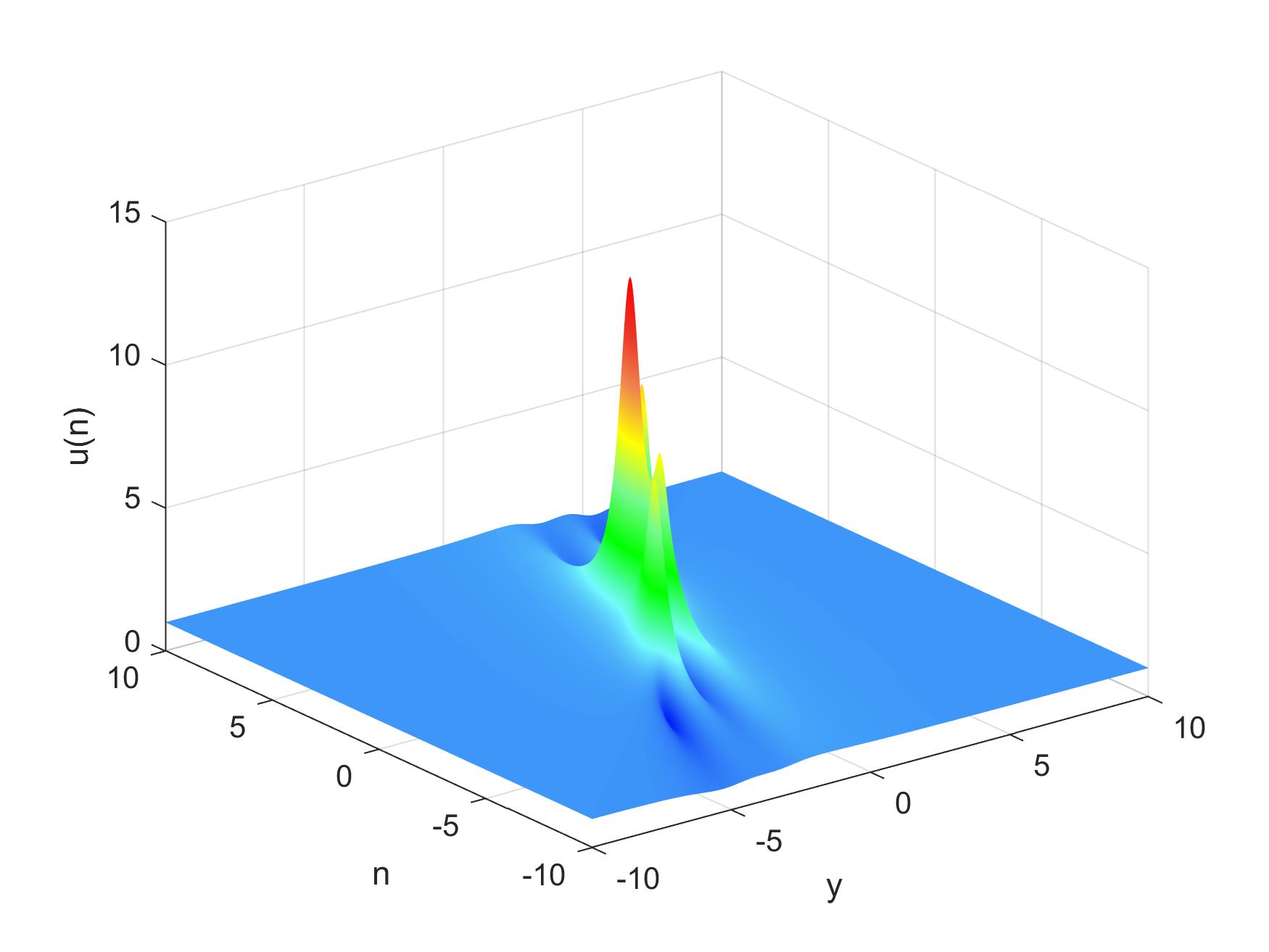} &
\includegraphics[height=0.220\textwidth,angle=0]{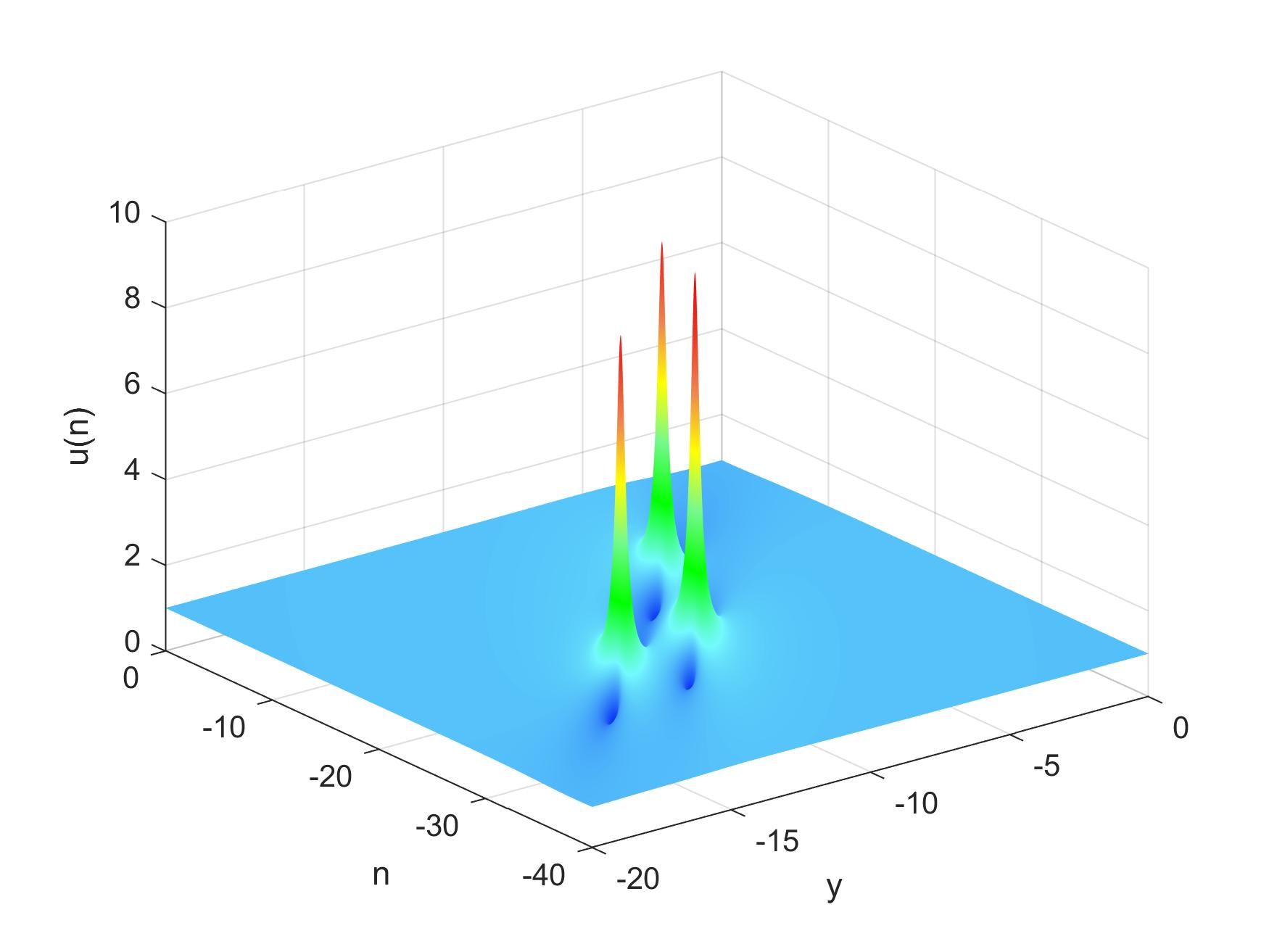} \\
(a) $t=-10$  & \quad  (b) $t=0$ & \quad (c) $t=10$ \\
\includegraphics[height=0.220\textwidth,angle=0]{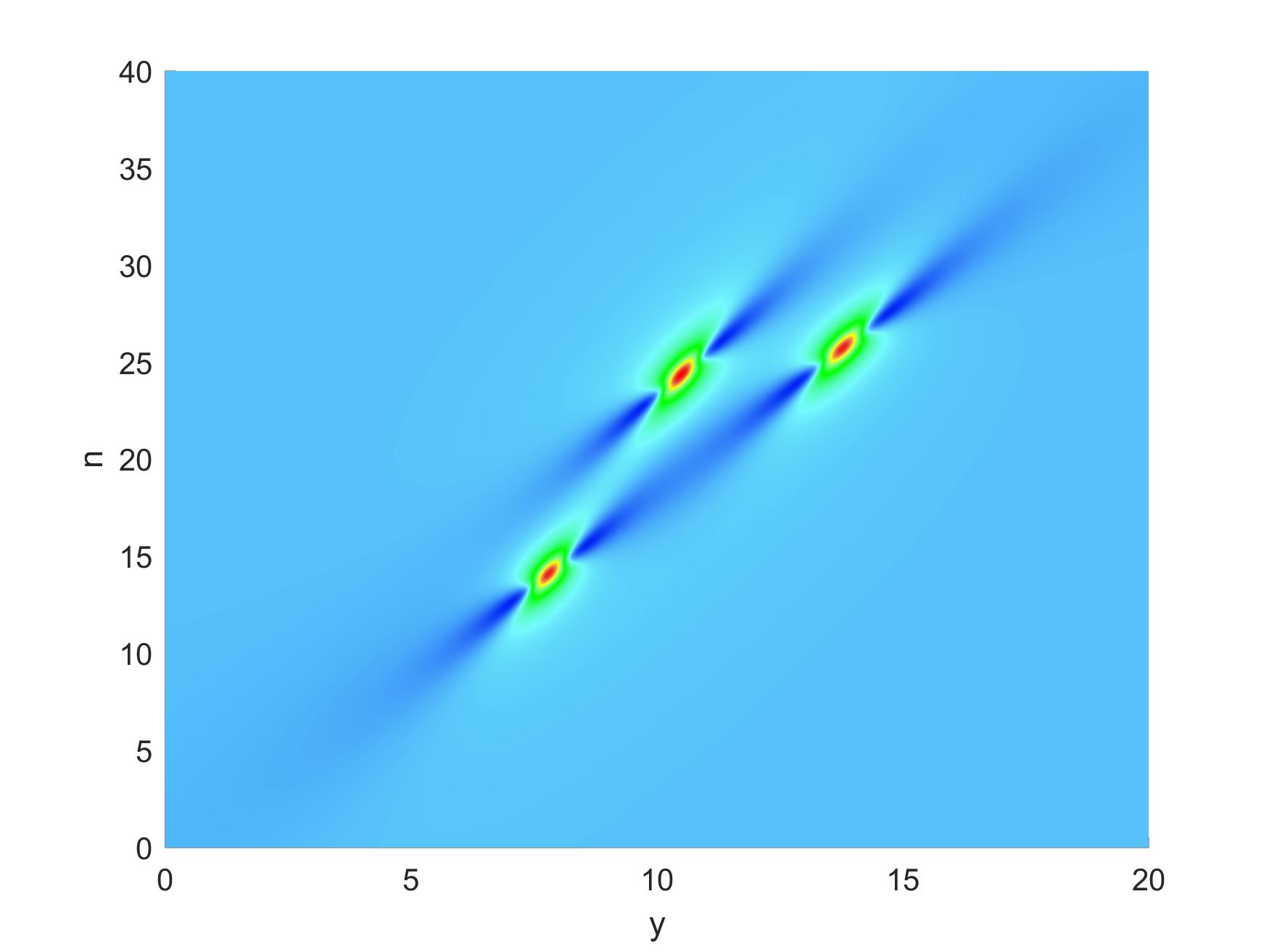} &
\includegraphics[height=0.220\textwidth,angle=0]{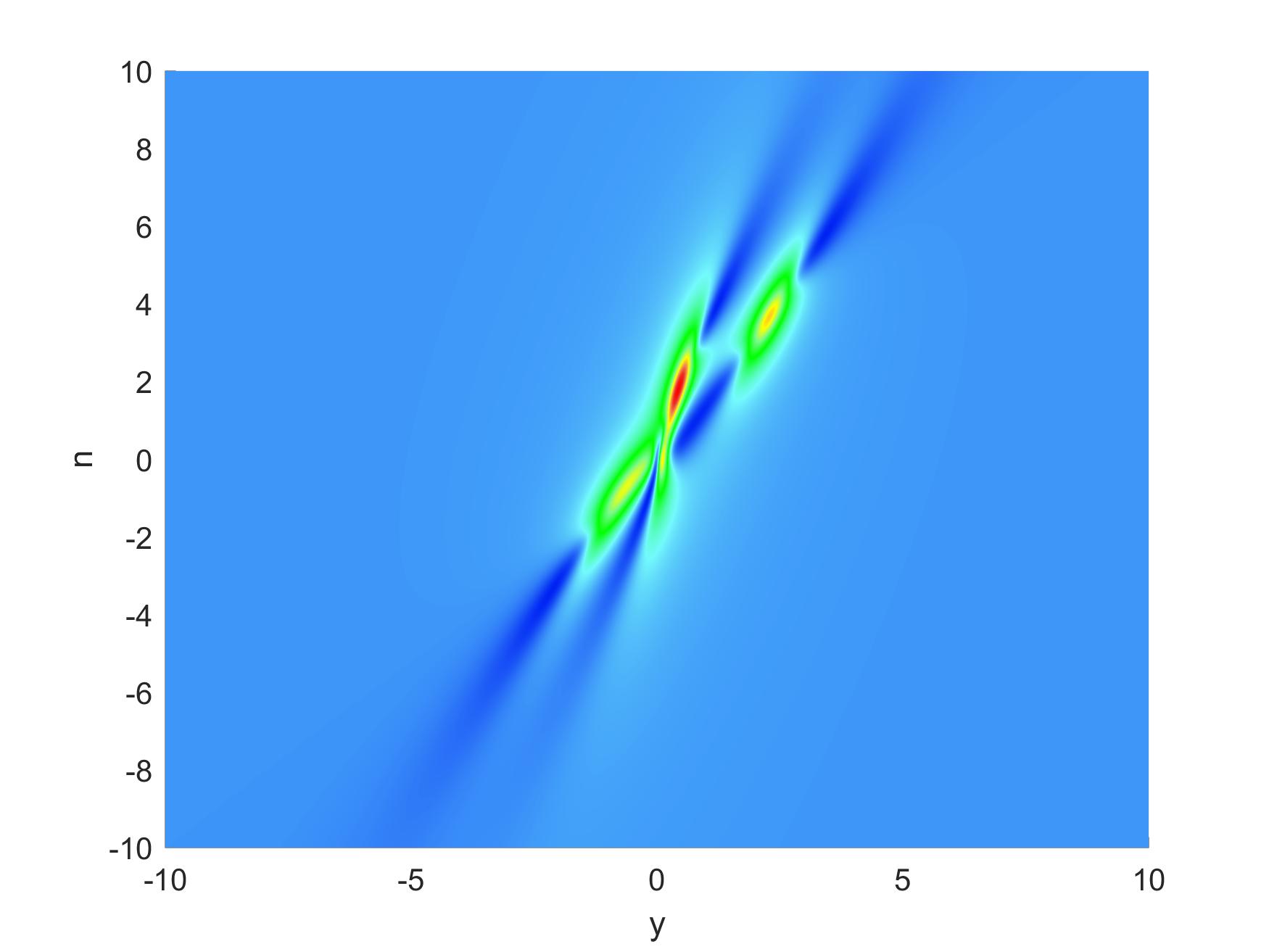} &
\includegraphics[height=0.220\textwidth,angle=0]{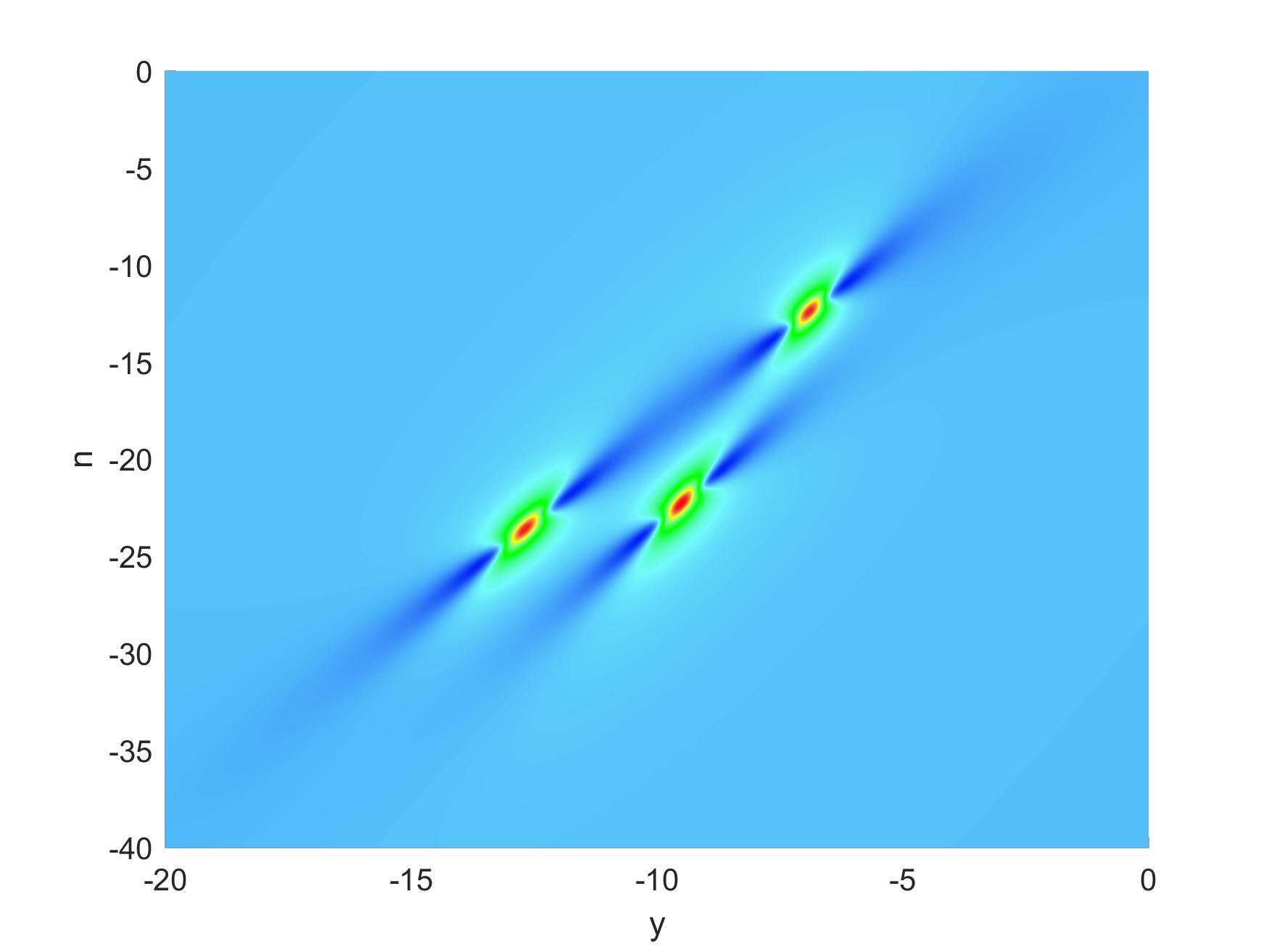} \\
(d) density plot of (a)  & \quad (e) density plot of (b) & \quad (f) density plot of (c)
\end{tabular}
\end{center}
\caption{
Three-lump solutions with parameters:
$N=2$, $p=q^*=\frac{\pi}{2}\I$, $k_1=1$, $k_2=3$.}\label{fig6}
\end{figure}

\section{Breather solutions to the variant BS lattice equation}\label{sec5}
In this section, we derive breather solutions by setting
\begin{align}
    \phi_j(n)=\sum_{r=1}^me^{\eta_{rj}},\quad \psi_k(n)=\sum_{s=1}^me^{\xi_{sk}},
   \end{align}
where
\begin{align}
    &\eta_{rj}=p_{rj}n-\I e^{\I p_{rj}}t-\I(e^{-\I p_{rj}}+e^{2\I p_{rj}})y-\I e^{2\I p_{rj}}z+\eta_{0,rj},\\
    &\xi_{sk}=q_{sk}n+\I e^{-\I q_{sk}}t+\I(e^{\I q_{sk}}+e^{-2\I q_{sk}})y+\I e^{-2\I q_{sk}}z+\xi_{0,sk},
\end{align}
and thus  we have 
\begin{align}
m_{jk}(n)=\sum_{r,s=1}^m\frac{\I}{e^{\I p_{rj}}-e^{-\I q_{sk}}}e^{\eta_{rj}+\xi_{sk}}.
\end{align}
Here $p_{rj}$, $q_{sk}$, $\eta_{0,rj}$, $\xi_{0,sk}$ are arbitrary complex constants. Taking $m=2$, $z=0$, $p_{1j}=q_{1j}^*$, $p_{2j}=q_{2j}^*$ and $\eta_{0,ij}=\xi_{0,ij}^*$, we can construct a general breather solution. For example, we take $N=1$, $p_{11}=q_{11}^*=a_1+\I b_1$, $p_{21}=q_{21}^*=a_2+\I b_2$ and $\xi_{0,i1}=\eta_{0,i1}^*$, which leads to
\begin{align}
    \tau(n)=m_{11}(n)=\frac{e^{b_1}}{2\sin(a_1)}e^{\zeta_1}+\frac{e^{b_2}}{2\sin(a_2)}e^{\zeta_2}+\frac{2Ae^{\zeta_3}\cos(\zeta_4)}{A^2+B^2}-\frac{2 Be^{\zeta_3}\sin(\zeta_4)}{A^2+B^2},
\end{align}
where
\begin{align*}
    &\zeta_1=2a_1n+2e^{-b_1}\sin(a_1)t+2\left(-e^{b_1}\sin(a_1)+e^{-2b_1}\sin(2a_1)\right)y+\xi_{0,1}+\eta_{0,1},\\
    &\zeta_2=2a_2n+2e^{-b_2}\sin(a_2)t+2\left(-e^{b_2}\sin(a_2)+e^{-2b_2}\sin(2a_2)\right)y+\xi_{0,2}+\eta_{0,2},\\
    &\zeta_3=(a_1+a_2)n+At+\left(-e^{b_1}\sin(a_1)-e^{b_2}\sin(a_2)+e^{-2b_1}\sin(2a_1)+e^{-2b_2}\sin(2a_2)\right)y+\Re(\eta_{0,2}+\xi_{0,1}),\\
    &\zeta_4=(b_1-b_2)n-Bt+\left(-e^{b_1}\cos(a_1)+e^{b_2}\cos(a_2)-e^{-2b_1}\cos(2a_1)+e^{-2b_2}\cos(2a_2)\right)y+\Im(\eta_{0,2}+\xi_{0,1}),\\
    &A=e^{-b_1}\sin(a_1)+e^{-b_2}\sin(a_2),\quad B=e^{-b_1}\cos(a_1)-e^{-b_2}\cos(a_2).
\end{align*}
One can verify that $\zeta_1-\zeta_3=-(\zeta_2-\zeta_3)$, and then we have the expression
\begin{equation}
    \begin{split}
        e^{-\zeta_3}\tau(n)&=\frac{e^{b_1}}{2\sin(a_1)}e^{\zeta_1-\zeta_3}+\frac{e^{b_2}}{2\sin(a_2)}e^{\zeta_2-\zeta_3}+\frac{2A\cos(\zeta_4)}{A^2+B^2}-\frac{2 B\sin(\zeta_4)}{A^2+B^2}.
    \end{split}
\end{equation}
It can be readily verified that the breather solution $u(n)=\frac{\tau(n-\I)\tau(n+\I)}{\tau(n)^2}$ is non-singular provided that the parameters satisfy $\sin(a_1)\sin(a_2)>0$. Given the cumbersomeness of its expression, it is omitted herein. While different parameters yield a general one-breather solution, they are insufficient to derive the specific forms of Kuznetsov-Ma breathers and Akhmediev breathers.

For the sake of simplicity, we denote $e^{\I p_{11}}=\alpha_1+\I \beta_1$, $e^{\I p_{21}}=\alpha_2+\I \beta_2$, and suppose $q_{ij}=p_{ij}^*$ and $\xi_{0,i}=\eta_{0,i}=0$. Then  $\tau(n)$ is given by
\begin{align}
    \tau(n)=m_{11}(n)=\frac{1}{2\beta_1}e^{2\tilde{\zeta}_1}+\frac{1}{2\beta_2}e^{2\tilde{\zeta}_2}+\frac{2(\beta_1+\beta_2)e^{\tilde{\zeta}_1+\tilde{\zeta}_2}\cos(\tilde{\zeta}_3)}{(\beta_1+\beta_2)^2+(\alpha_1-\alpha_2)^2}+\frac{2(\alpha_2-\alpha_1)e^{\tilde{\zeta}_1+\tilde{\zeta}_2}\sin(\tilde{\zeta}_3)}{(\beta_1+\beta_2)^2+(\alpha_1-\alpha_2)^2},
\end{align}
where
\begin{align}
    &\tilde{\zeta}_1=\arg(\alpha_1+\I\beta_1)n+\beta_1t+\left(-\frac{\beta_1}{\alpha_1^2+\beta_1^2}+2\alpha_1\beta_1\right)y,\\
    &\tilde{\zeta}_2=\arg(\alpha_2+\I\beta_2)n+\beta_2t+\left(-\frac{\beta_2}{\alpha_2^2+\beta_2^2}+2\alpha_2\beta_2\right)y,\\
    &\tilde{\zeta}_3=\frac{1}{2}\left(\ln(\alpha_2^2+\beta_2^2)-\ln(\alpha_1^2+\beta_1^2)\right)n+(\alpha_2-\alpha_1)t+\left(-\frac{\alpha_1}{\alpha_1^2+\beta_1^2}+\frac{\alpha_2}{\alpha_2^2+\beta_2^2}+\alpha_2^2-\beta_2^2-\alpha_1^2+\beta_1^2\right)y.
\end{align}
It is easy to prove $\tau(n)\neq0$ when $\beta_1\beta_2>0$, which means $u(n)$ is non-singular. For example, by taking $n=0$, $\alpha_1=\beta_1=1$, $\alpha_2=\beta_2=\frac{1}{2}$, we derive the general one-breather solution
\begin{align}
    u(0)=\frac{\left(\sqrt{2}\cosh(\frac{t}{2}+2y-\frac{1}{2}\ln 2)+\frac{3\sqrt{2}}{2}\cos(\frac{y}{2}-\frac{t}{2}-\frac{\pi}{4})\right)^2+\frac{81}{50}\cos^2(\frac{y}{2}-\frac{t}{2}+\frac{\pi}{4})}{\left(\sqrt{2}\cosh(\frac{t}{2}+2y-\frac{1}{2}\ln 2)+\frac{6}{5}\cos(\frac{1}{2}y-\frac{1}{2}t)+\frac{6}{5}\sin(\frac{1}{2}y-\frac{1}{2}t)\right)^2}.
\end{align}
The plot is illustrated in Fig. \ref{fig7} (a) and (d). In addition, by selecting parameters such that $\alpha_1=\alpha_2$, $-\frac{\beta_1}{\alpha_1^2+\beta_1^2}+2\alpha_1\beta_1+\frac{\beta_2}{\alpha_2^2+\beta_2^2}-2\alpha_2\beta_2=0$, we obtain the Akhmediev breather. This breather is localized in the $t-$direction and periodic in the $y-$direction, as illustrated in Fig. \ref{fig7} (b) and (e). Similarly,  setting $\beta_1=\beta_2$, $\frac{\alpha_1}{\alpha_1^2+\beta_1^2}-\frac{\alpha_2}{\alpha_2^2+\beta_2^2}+\alpha_2^2-\beta_2^2-\alpha_1^2+\beta_1^2=0$, we get the Kuznetsov-Ma breather, which is localized in the $y-$direction and periodic in the $t-$direction (see Fig. \ref{fig7} (c) and (f)). Due to their considerable complexity, the explicit expressions of the Akhmediev breather and Kuznetsov-Ma breather are omitted herein. Furthermore, for $m>2$, resonantly interacting breathers can be derived.

\begin{figure}
\begin{center}
\begin{tabular}{ccc}
\includegraphics[height=0.220\textwidth,angle=0]{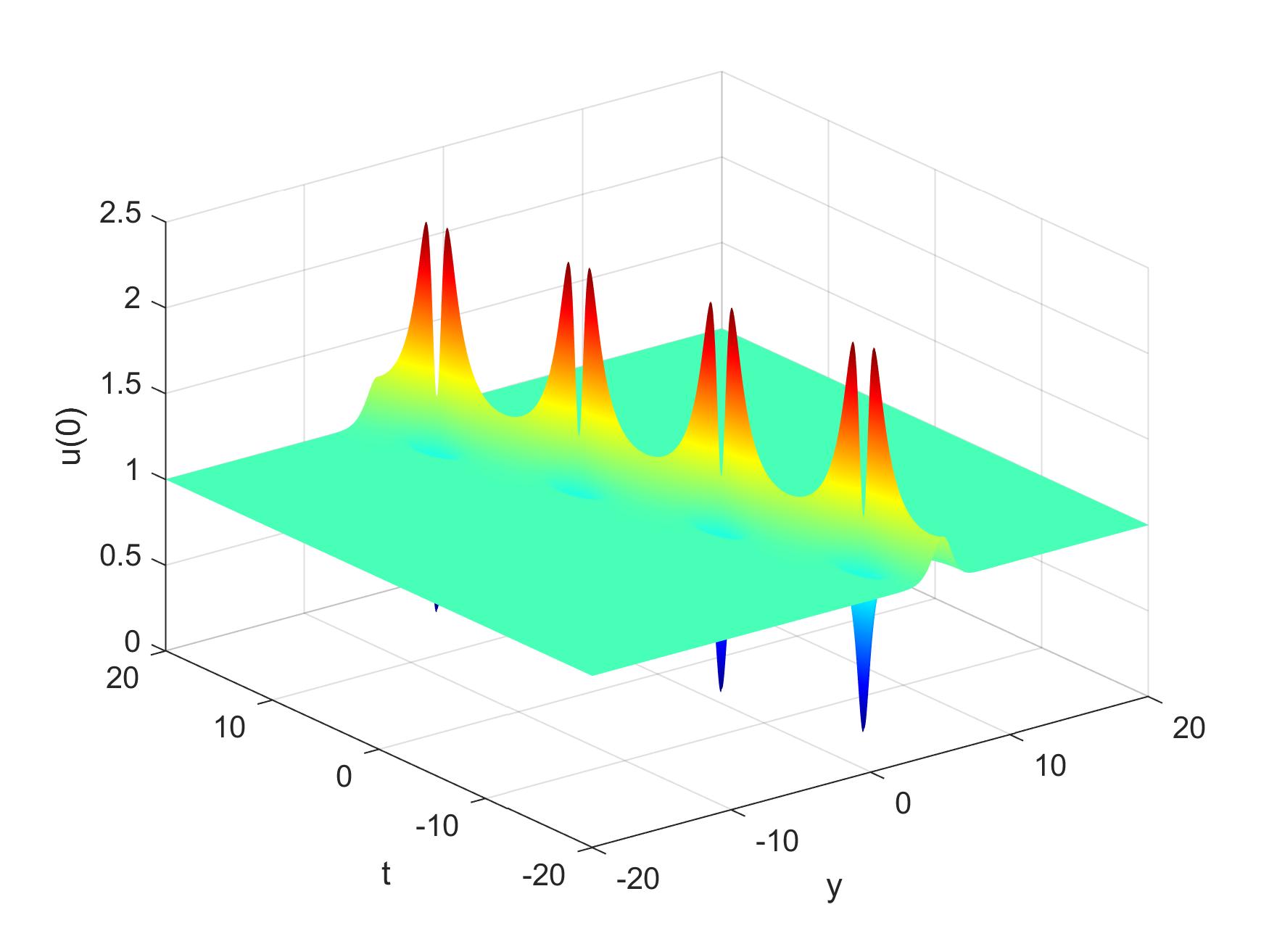} &
\includegraphics[height=0.220\textwidth,angle=0]{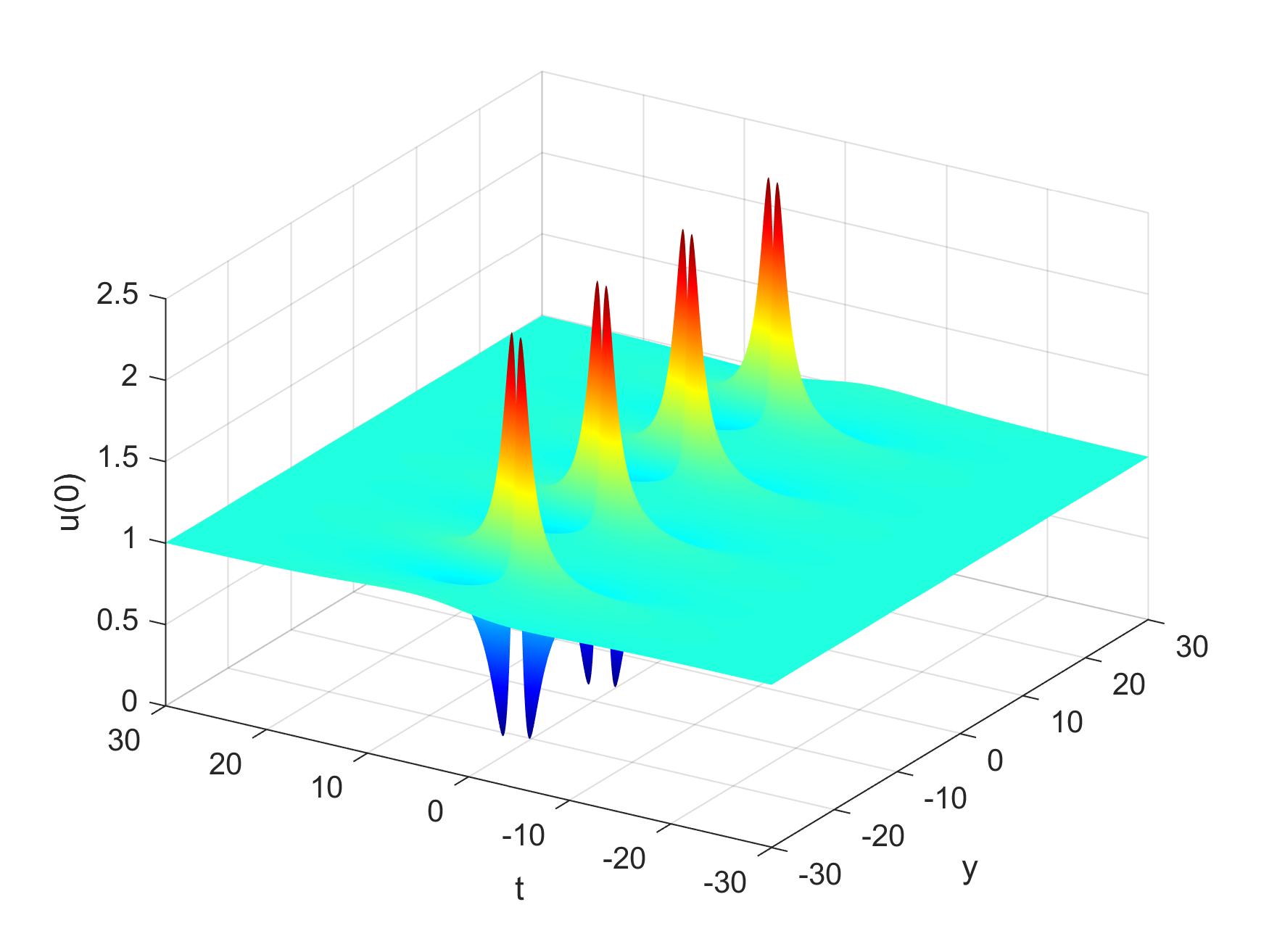} &
\includegraphics[height=0.220\textwidth,angle=0]{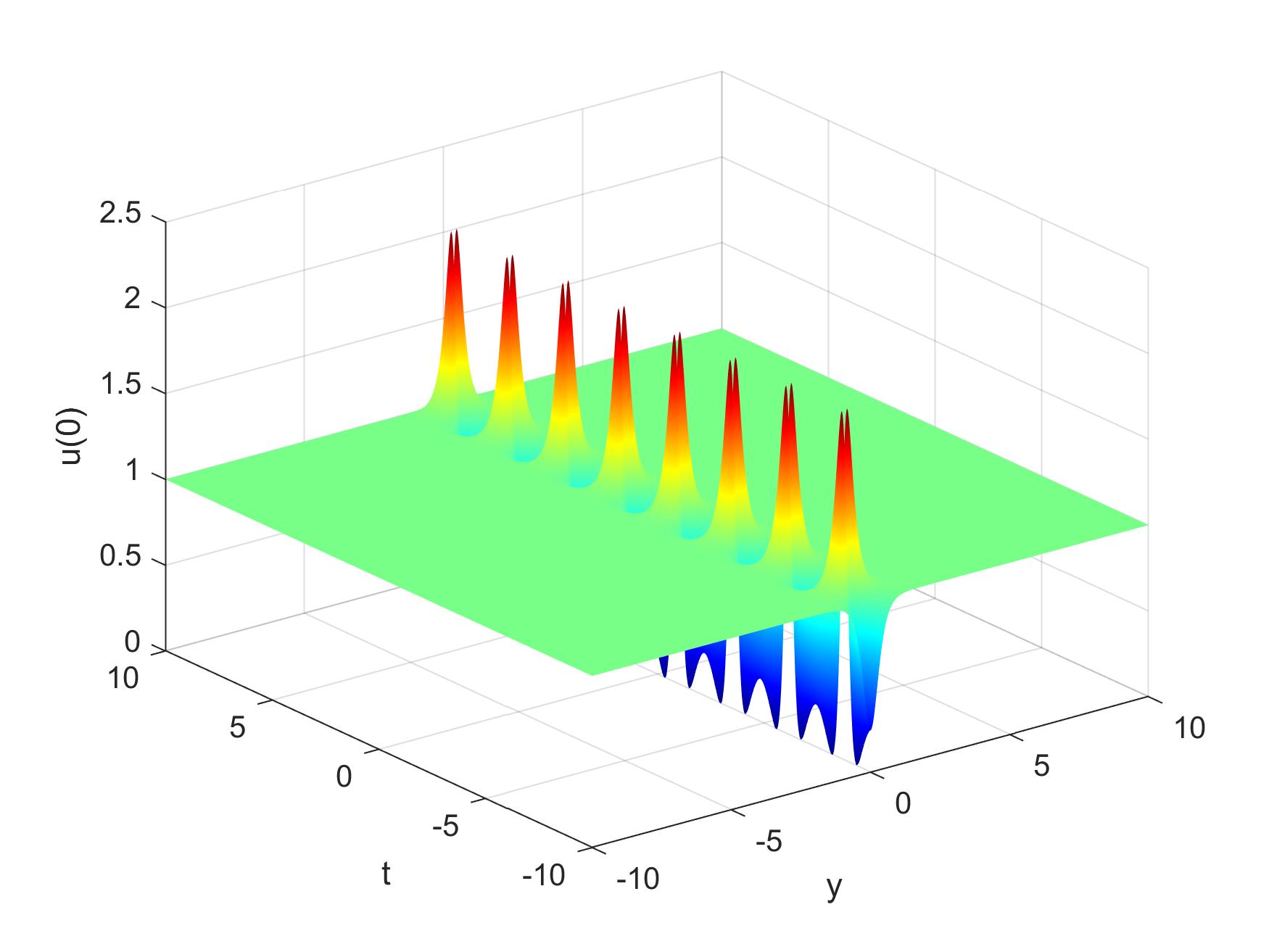} \\
(a) One-breather  & \quad  (b) Akhmediev breather & \quad (c) Kuznetsov-Ma breather \\
\includegraphics[height=0.220\textwidth,angle=0]{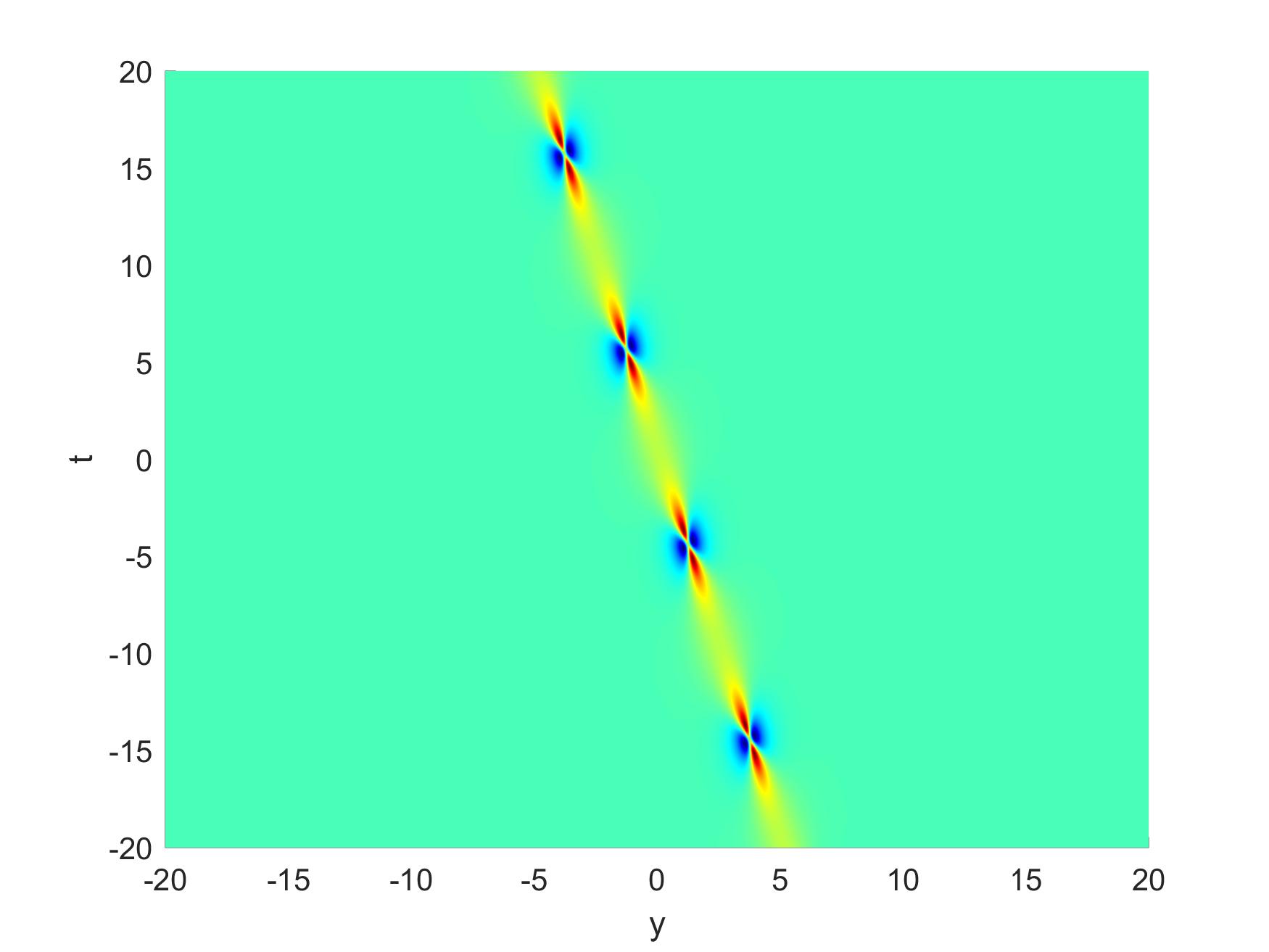} &
\includegraphics[height=0.220\textwidth,angle=0]{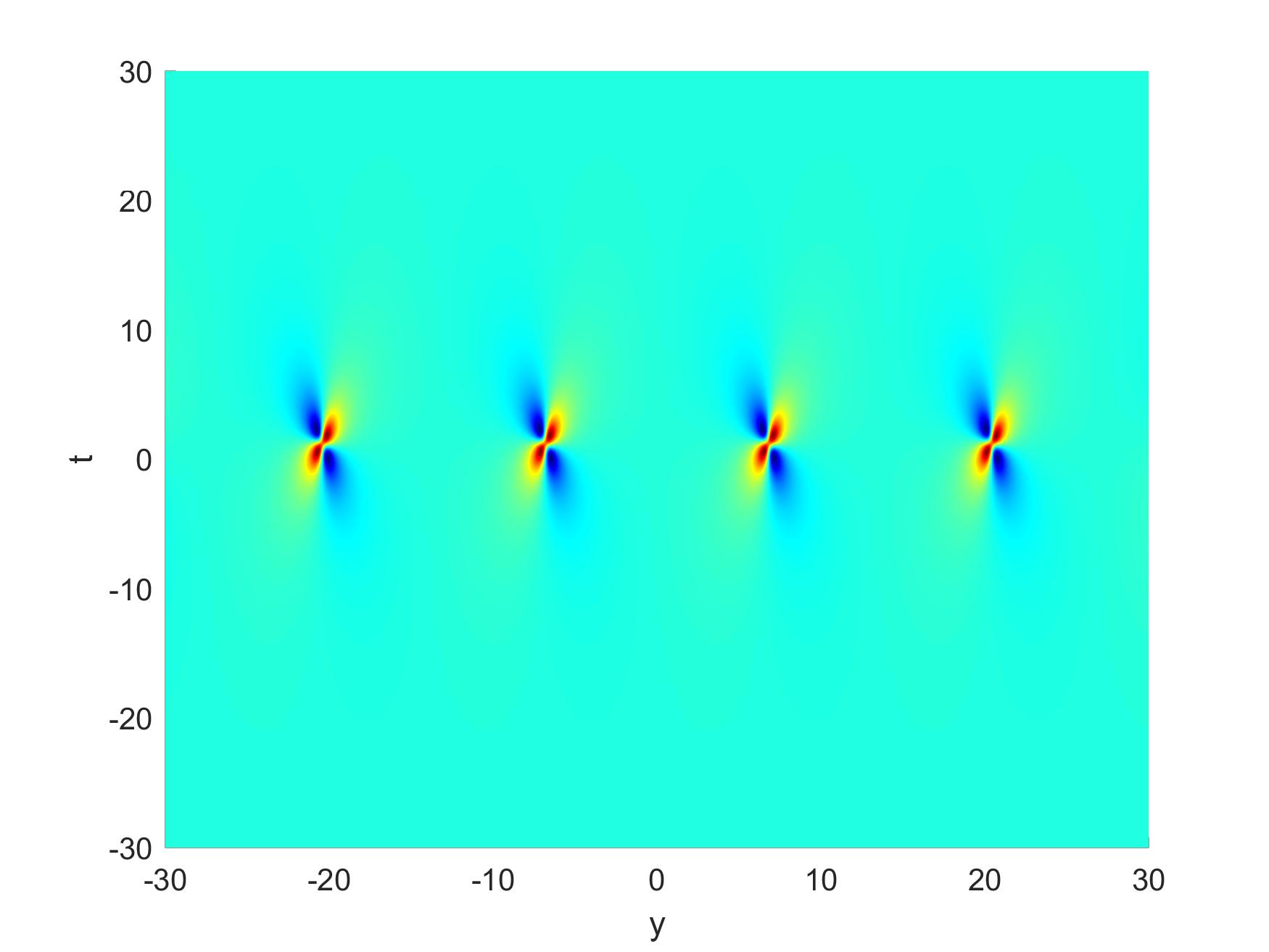} &
\includegraphics[height=0.220\textwidth,angle=0]{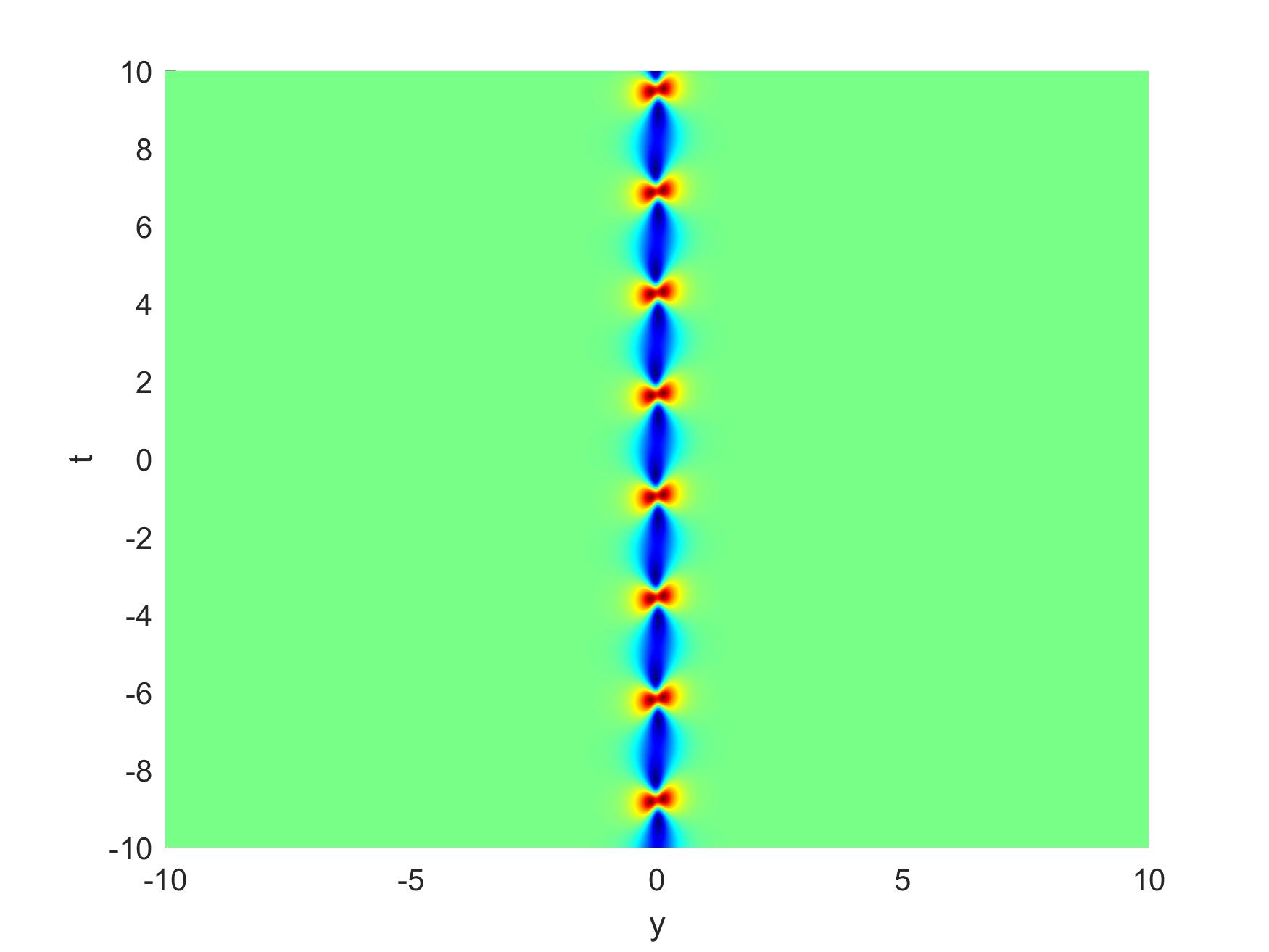} \\
(d) density plot of (a)  & \quad (e) density plot of (b) & \quad (f) density plot of (c)
\end{tabular}
\end{center}
\caption{
Breather solutions with parameters:
(a) $\alpha_1=1$, $\beta_1=1$, $\alpha_2=\frac{1}{2}$, $\beta_2=\frac{1}{2}$; (b) $\alpha_1=\alpha_2=-\frac{1}{3}$, $\beta_1=\frac{1}{2}$, $\beta_2=\frac{27}{26}-\frac{\sqrt{3077}}{78}$; (c) $\alpha_1=1$, $\alpha_2=-\frac{(152+6\sqrt{642})^{1/3}}{6}+\frac{1}{3(152+6\sqrt{642})^{1/3}}-\frac{1}{3}$, $\beta_1=\beta_2=1$.}\label{fig7}
\end{figure}

\section{Periodic wave solutions}\label{sec6}
In this section, we derive the three-periodic wave solutions by using the numerical method \cite{Nakamura1,JCP2018,Liang,ZhangYN}. We consider the following bilinear equations
\begin{align}
    &F_1\left(D_t,\ D_y,\ D_z,\ D_n,\ c_1\right)f\cdot f=0,\label{newbilinear1}\\
    &F_2\left(D_t,\ D_y,\ D_z,\ D_n,\ c_2\right)f\cdot f=0,\label{newbilinear2}
\end{align}
where $F_1$ and $F_2$ are certain unspecified functions of $D_t$, $D_y$, $D_z$, $D_n$, and $c_1$, $c_2$ are integration constants. The bilinear equations (\ref{newbilinear1}) and (\ref{newbilinear2}) have $g$-periodic wave solutions expressed by the Riemann theta function \cite{Nakamura1}
\begin{align}
    f=\sum_{m_1}\sum_{m_2}\cdots\sum_{m_g=-\infty}^{\infty}\exp\left[\I\sum_{j=1}^g(m_j+s_j)\eta_j-\frac{1}{2}\sum_{j,k=1}^g(m_j+s_j)\tau_{j,k}(m_k+s_k)\right],
\end{align}
if
\begin{equation}\label{F1}
    \begin{split}
        \sum_{m_1}\sum_{m_2}\cdots\sum_{m_g=-\infty}^{\infty}&F_1\left[2\I\sum_{j=1}^g\left(m_j-\frac{\mu_j}{2}\right)w_j,\ 2\I\sum_{j=1}^g\left(m_j-\frac{\mu_j}{2}\right)k_j,\ 2\I\sum_{j=1}^g\left(m_j-\frac{\mu_j}{2}\right)l_j,\right.\\
        &~~\left.2\I\sum_{j=1}^g\left(m_j-\frac{\mu_j}{2}\right)v_j\right]\times\exp\left[-\sum_{j,k=1}^g\left(m_j-\frac{\mu_j}{2}\right)\tau_{jk}\left(m_k-\frac{\mu_k}{2}\right)\right]=0,
    \end{split}
\end{equation}
\begin{equation}\label{F2}
    \begin{split}
        \sum_{m_1}\sum_{m_2}\cdots\sum_{m_g=-\infty}^{\infty}&F_2\left[2\I\sum_{j=1}^g\left(m_j-\frac{\mu_j}{2}\right)w_j,\ 2\I\sum_{j=1}^g\left(m_j-\frac{\mu_j}{2}\right)k_j,\ 2\I\sum_{j=1}^g\left(m_j-\frac{\mu_j}{2}\right)l_j,\right.\\
        &~~\left.2\I\sum_{j=1}^g\left(m_j-\frac{\mu_j}{2}\right)v_j\right]\times\exp\left[-\sum_{j,k=1}^g\left(m_j-\frac{\mu_j}{2}\right)\tau_{jk}\left(m_k-\frac{\mu_k}{2}\right)\right]=0,
    \end{split}
\end{equation}
where $\eta_j=w_jt+k_jy+l_jz+v_jn+\eta_j^0$, with $k_j$, $w_j$, $l_j$, $v_j$ associated with the wave numbers and the frequencies. Note that the system comprises $2^{g+1}$ equations from (\ref{F1}) and (\ref{F2}), involving a total of $4g+2+g(g+1)/2$ parameters. By fixing $k_i$, $l_i$, $\tau_{ii}$ as given parameters, we are left with $(g^2+3g+4)/2$ unknown parameters. Consequently, an over-determined nonlinear algebraic system must be solved to derive a $g$-periodic wave for $g\geq 2$.

\subsection{Three-periodic wave solutions to the variant BS lattice equation}\label{sec6.1}
To compute numerical three-periodic wave solutions of the variant BS lattice \eqref{variants-1}-\eqref{variants-3}, we introduce two constants $c_1$ and $c_2$ into the original bilinear equations (\ref{bilinear-1}) and (\ref{bilinear-2}), as shown below:
\begin{align}
    &\left(D_z\sin\left(\frac{1}{2}D_n\right)-D_t^2\cos\left(\frac{1}{2}D_n\right)+c_1\cos\left(\frac{1}{2}D_n\right)\right)\tau(n)\cdot\tau(n)=0,\\
    &\left(D_tD_z-D_tD_y-4\sin^2\left(\frac{1}{2}D_n\right)+c_2\right)\tau(n)\cdot\tau(n)=0.
\end{align}
Then we use the Gauss-Newton method to solve the over-determined system. We rewrite equations (\ref{F1}) and (\ref{F2}) as
\begin{align}
    \mathbf{H}(w_1,\ w_2,\ w_3,\ v_1,\ v_2,\ v_3,\ \tau_{12},\ \tau_{13},\ \tau_{23},\ c_1,\ c_2)=(H_1, H_2, \cdots,H_{16})^T=0,
\end{align}
where $H_i=0$, $(i=1,\ 2,\ \cdots,\ 16)$ is one of the equations in (\ref{F1})-(\ref{F2}) with $11$ unknown parameters. The objective function of the nonlinear least-square problem is
\begin{align}
    S(\mathbf{x})=\frac{1}{2}\mathbf{H}(\mathbf{x})^T\mathbf{H}(\mathbf{x}),
\end{align}
where $\mathbf{x}=(w_1,\ w_2,\ w_3,\ v_1,\ v_2,\ v_3,\ \tau_{12},\ \tau_{13},\ \tau_{23},\ c_1,\ c_2)^T$.

By utilizing the Gauss-Newton method, we have the iterative formula \cite{Liang,ZhangYN}
\begin{align}
    \mathbf{x}^{(i+1)}=\mathbf{x}^{(i)}-(\mathbf{J}^T\mathbf{J})^{-1}\mathbf{J}^T\mathbf{H}\big|_{\mathbf{x}=\mathbf{x}^{(i)}}
\end{align}
with an initial guess $\mathbf{x}^{(0)}$, where $\mathbf{x}^{(k)}$ is the $k$th output, and $\mathbf{J}$ is the Jacobian matrix of $\mathbf{H}$, i.e.
\begin{align}
    \mathbf{J}=\left[\frac{\partial H_i}{\partial x_j}\right]_{i=1,\cdots,16,\ j=1,\cdots, 11}.
\end{align}
The numerical solution, obtained through an iterative process using the Gauss-Newton method with appropriate parameters, is presented in Table \ref{tab1}. Setting $z=0$ and $t=0$, we plot the  three-periodic wave solution $u(n)$, as shown in Fig. \ref{p-wave}. Furthermore, Fig. \ref{p-wave-2} (a) displays these three-periodic waves for the variant BS lattice at different values of $n$ ($n=1, 3, 7, 15$) with $z=0$, whereas Fig. \ref{p-wave-2} (b) illustrates the solution profile $u(1)$ with different $y$ values.
\begin{rmk}
    When using the Gauss-Newton method for the iteration, we set the error tolerance to be $\epsilon=10^{-15}$, which means that the iteration ends when $\lVert\mathrm{x}^{(i+1)}-\mathrm{x}^{(i)}\rVert_2<\epsilon$ or $\lVert\mathrm{H}(\mathrm{x}^{(i)})\rVert_2<\epsilon$.
\end{rmk}
\begin{rmk}
    The Gauss-Newton method is highly sensitive to initial values. As a result, different initial guesses can lead to divergent numerical solutions or even cause the algorithm to fail entirely.
\end{rmk}
\begin{table}[htbp]
  \centering
  \caption{3-periodic waves to the variant BS lattice (\ref{variants-1})-\eqref{variants-3}.}
  \begin{tabular}{ccccccccccc}
    \hline\\[-3mm]
    $k_1$ & $k_2$ & $k_3$ & $l_1$ & $l_2$ & $l_3$ & $\tau_{11}$ & $\tau_{22}$ & $\tau_{33}$ & $c_1^{(0)}$ & $c_2^{(0)}$\\
    $\frac{2\pi}{10}$ & $2\times\frac{2\pi}{10}$ & $3\times\frac{2\pi}{10}$ & $\frac{2\pi}{8}$ & $2\times\frac{2\pi}{8}$ & $3\times\frac{2\pi}{8}$ & $0.67\times2\pi$ & $0.86\times2\pi$ & $1.02\times2\pi$ & $-1$ & $1$\\[1mm]
    \hline
    $w_1$ & $w_2$ & $w_3$ & $v_1$ & $v_2$ & $v_3$ & $\tau_{12}$ & $\tau_{13}$ & $\tau_{23}$ & $c_1$ & $c_2$\\
    $0.2887$ & $0.5713$ & $0.8422$ & $0.2127$ & $0.4209$ & $0.6204$ & $2.0438$ & $1.3937$ & $3.1119$ & $-0.0014$ & $-0.0008$\\
    \hline
  \end{tabular}
  \label{tab1}
\end{table}

\begin{figure}
\begin{center}
\begin{tabular}{ccc}
\includegraphics[height=0.220\textwidth,angle=0]{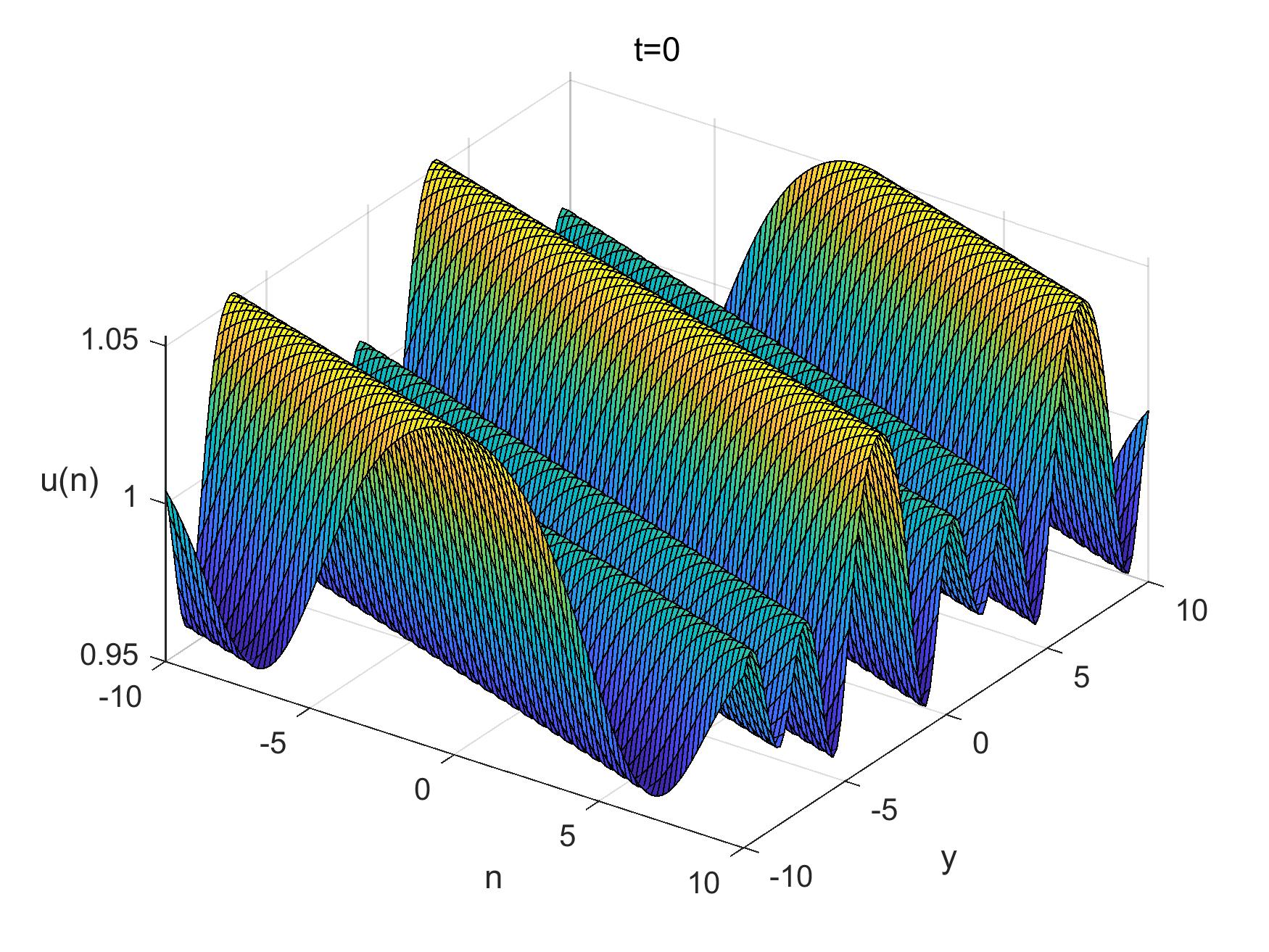} &
\includegraphics[height=0.220\textwidth,angle=0]{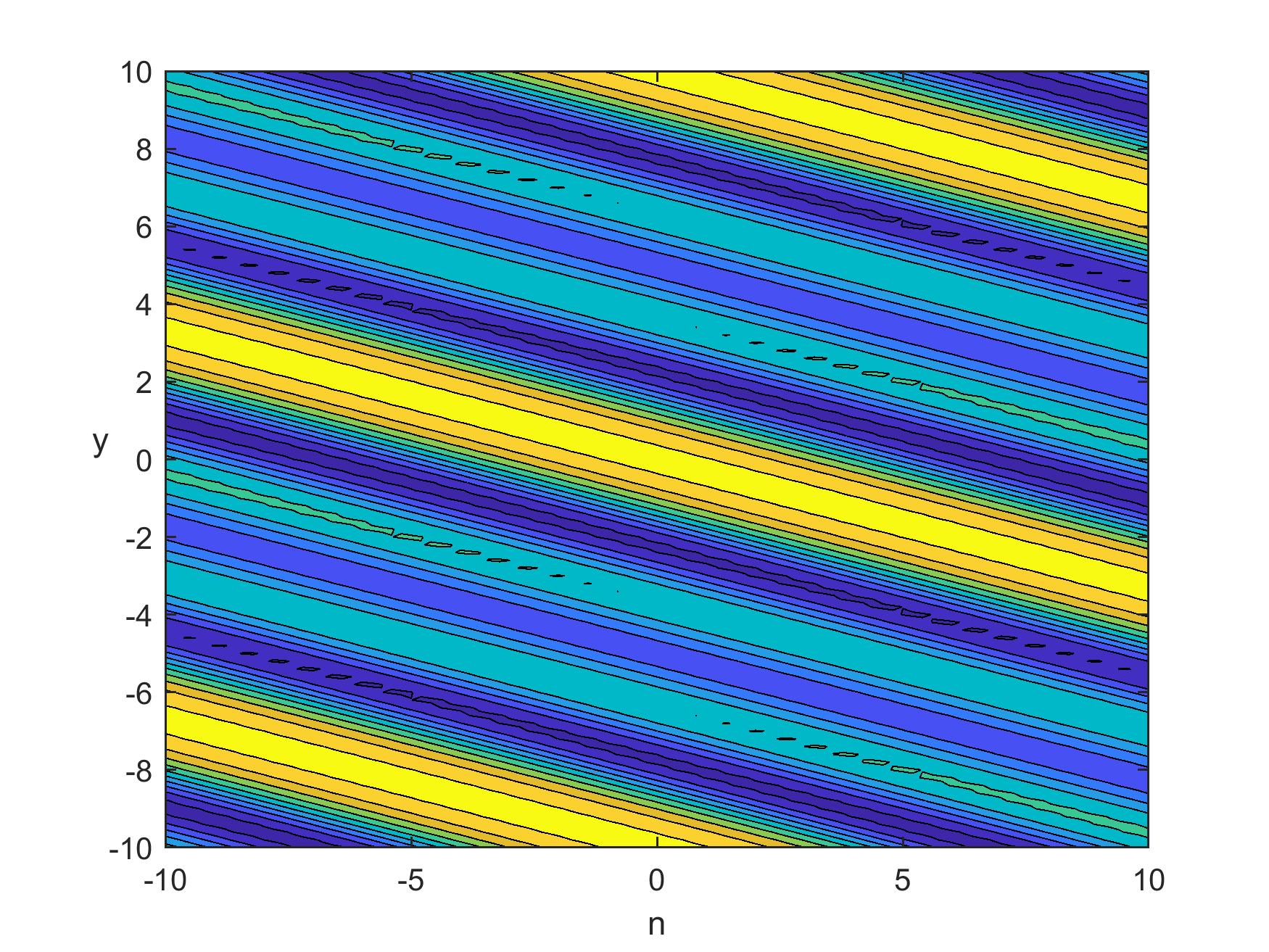}\\
(a) Three-dimensional plot of $u(n)$  & \quad  (b) Contour plot of $u(n)$
\end{tabular}
\end{center}
\caption{
Three-dimensional of $u(n)$ in Table \ref{tab1}.}\label{p-wave}
\end{figure}
\begin{figure}
\begin{center}
\begin{tabular}{ccc}
\includegraphics[height=0.220\textwidth,angle=0]{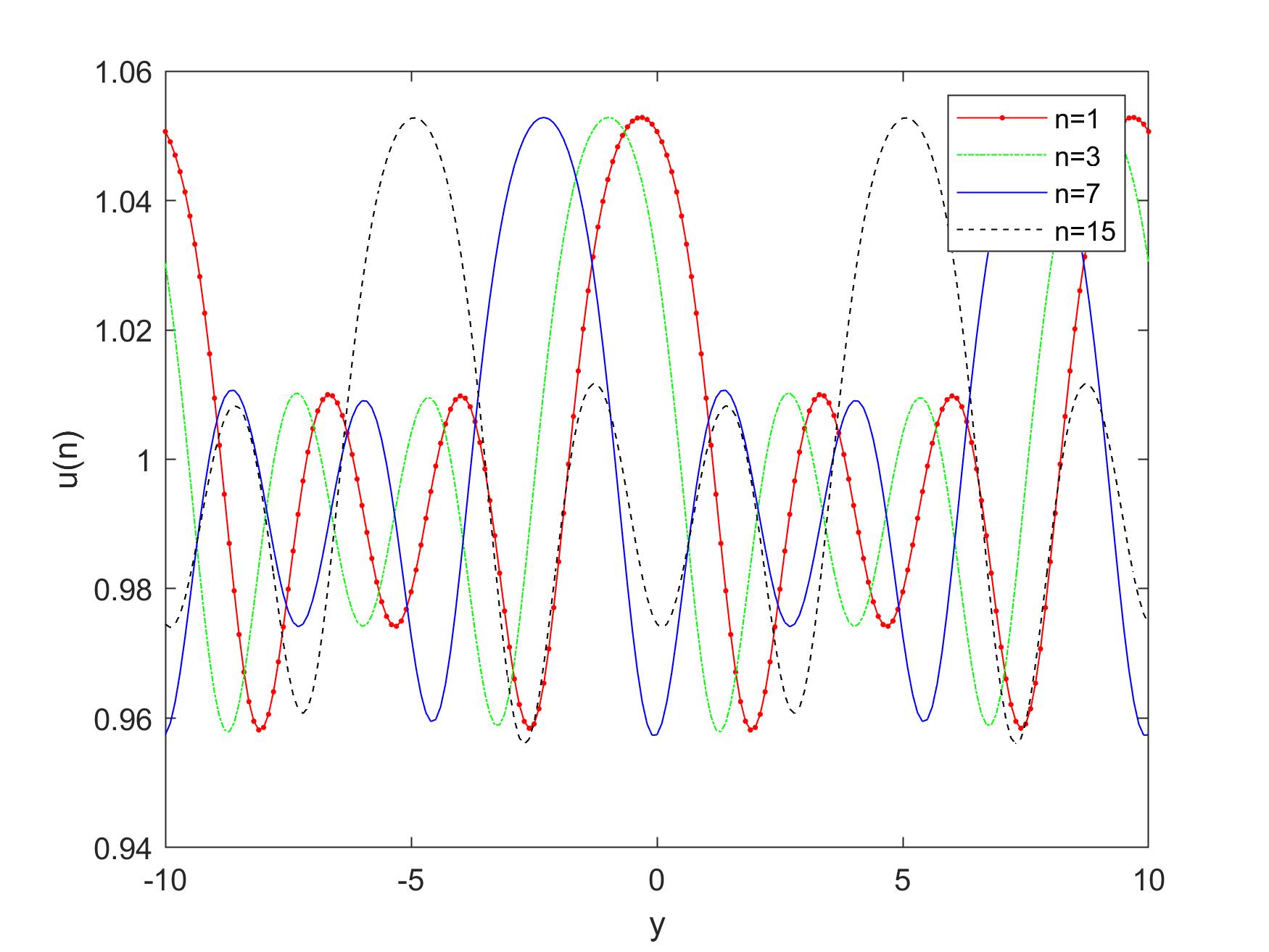} &
\includegraphics[height=0.220\textwidth,angle=0]{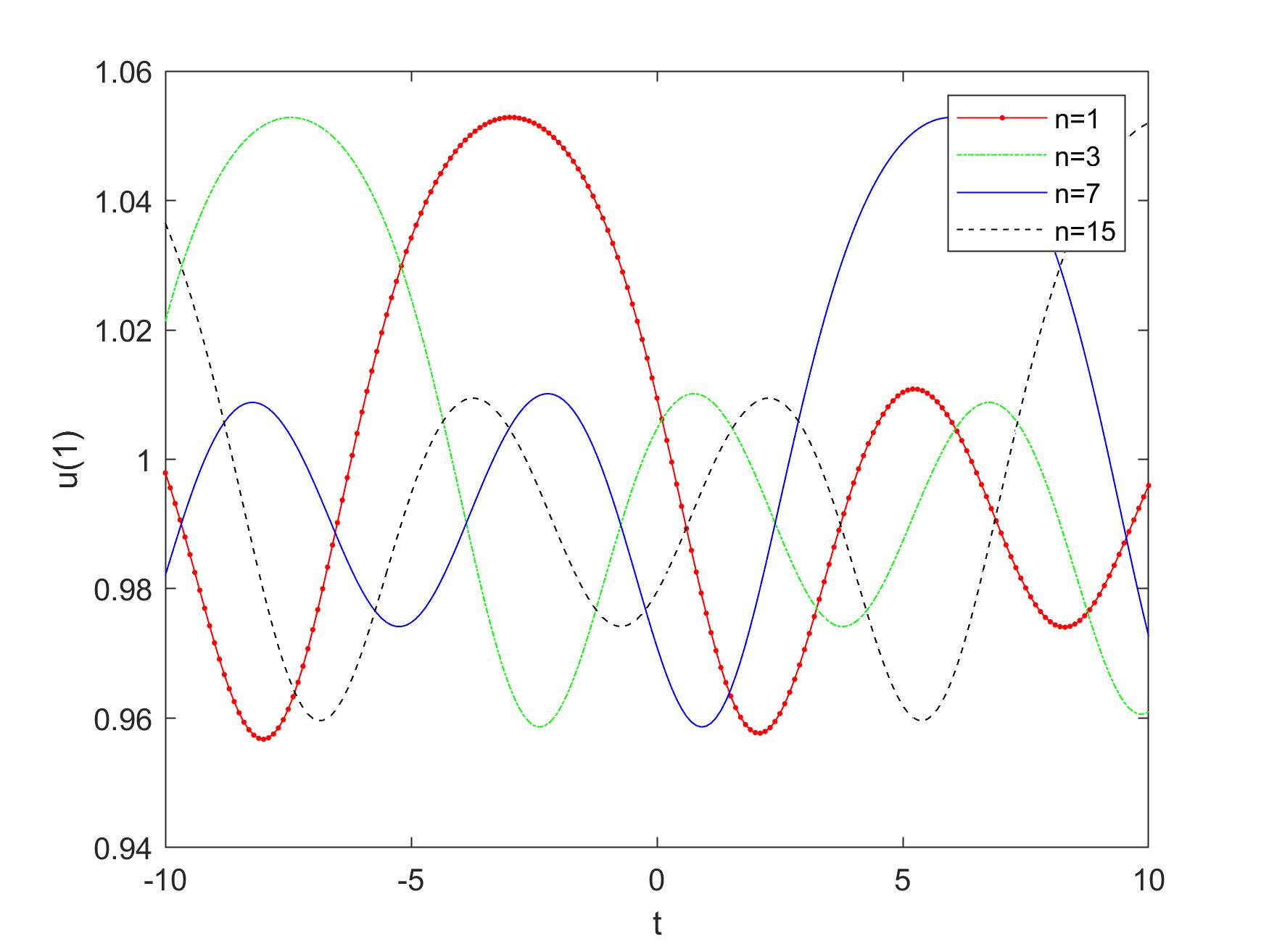}\\
(a) Three-periodic wave $u(n)$ with $t=0$  & \quad  (b) Three-periodic wave $u(n)$ with $n=1$
\end{tabular}
\end{center}
\caption{
Figures of 3-periodic waves to the variant BS lattice \eqref{variants-1}-\eqref{variants-3} with $z=0$ and parameters in Table \ref{tab1}.}\label{p-wave-2}
\end{figure}

\subsection{Three-periodic wave solutions to the BS lattice equations}
To compute the numerical $3$-periodic wave solutions of the BS lattice (\ref{gBS-1})-(\ref{gBS-3}), we introduce two integration constants, $d_1$ and $d_2$, into the bilinear equations (\ref{BSb-1}) and (\ref{BSb-2}),
\begin{align}
    &\left(D_z\sinh\left(\frac{1}{2}D_n\right)-D_t^2\cosh\left(\frac{1}{2}D_n\right)+d_1\cosh\left(\frac{1}{2}D_n\right)\right)\tau(n)\cdot\tau(n)=0,\\
    &\left(D_tD_z-D_tD_y-4\sinh^2\left(\frac{1}{2}D_n\right)+d_2\right)\tau(n)\cdot\tau(n)=0.
\end{align}
Using the same procedure described in subsection \ref{sec6.1}, we get the numerical 3-periodic wave solution with appropriate parameters, as shown in Table \ref{tab2}. With $z=0$ and $t=0$, Fig. \ref{p-wave-bs} shows the $3$-periodic wave solution $u(n)$. Fig. \ref{p-wave-bs-2} (a) depicts these 3-periodic waves to the BS lattice (\ref{gBS-1})-(\ref{gBS-3}) at different values of $n$ ($n=1,\ 3,\ 7,\ 15$), under the condition $z=0$. Meanwhile, Fig. \ref{p-wave-bs-2} (b) illustrates the solution profile of $u(n)$ at $n=1$ for different $y$ values. Under the same initial conditions, when compared with the results in Table \ref{tab1} and Table \ref{tab2}, the Gauss-Newton method can converge to the 3-periodic wave solutions of both the BS lattice and the variant BS lattice.

\begin{table}[htbp]
  \centering
  \caption{3-periodic waves to the BS lattice (\ref{gBS-1})-(\ref{gBS-3}).}
  \begin{tabular}{ccccccccccc}
    \hline\\[-3mm]
    $k_1$ & $k_2$ & $k_3$ & $l_1$ & $l_2$ & $l_3$ & $\tau_{11}$ & $\tau_{22}$ & $\tau_{33}$ & $d_1^{(0)}$ & $d_2^{(0)}$\\
    $\frac{2\pi}{10}$ & $2\times\frac{2\pi}{10}$ & $3\times\frac{2\pi}{10}$ & $\frac{2\pi}{8}$ & $2\times\frac{2\pi}{8}$ & $3\times\frac{2\pi}{8}$ & $0.67\times2\pi$ & $0.86\times2\pi$ & $1.02\times2\pi$ & $-1$ & $1$\\[1mm]
    \hline
    $w_1$ & $w_2$ & $w_3$ & $v_1$ & $v_2$ & $v_3$ & $\tau_{12}$ & $\tau_{13}$ & $\tau_{23}$ & $d_1$ & $d_2$\\
    $0.2900$ & $0.5868$ & $0.8997$ & $0.2137$ & $0.4323$ & $0.6627$ & $1.9790$ & $1.2802$ & $2.9081$ & $0.0017$ & $0.0009$\\
    \hline
  \end{tabular}
  \label{tab2}
\end{table}

\begin{figure}
\begin{center}
\begin{tabular}{ccc}
\includegraphics[height=0.220\textwidth,angle=0]{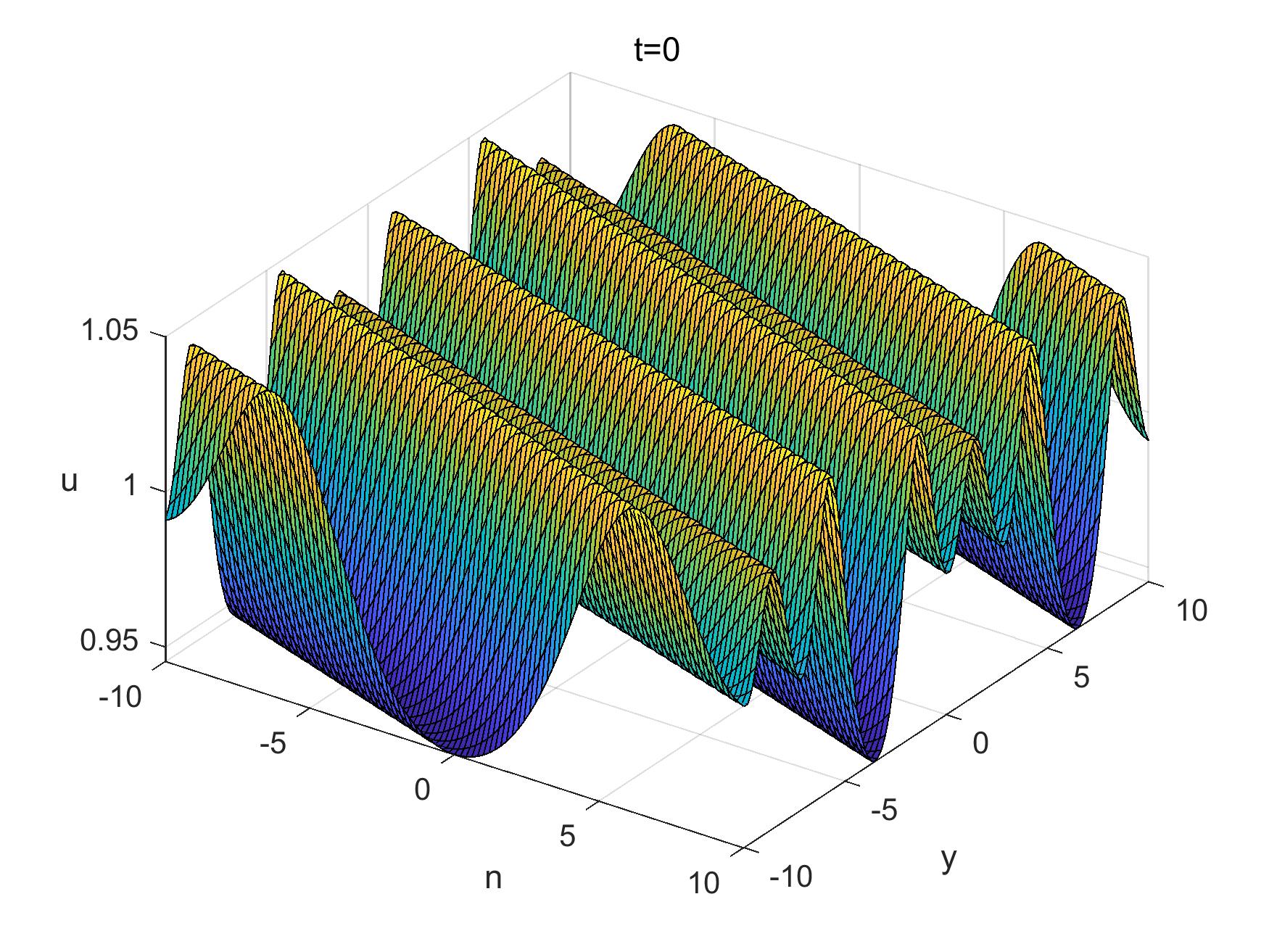} &
\includegraphics[height=0.220\textwidth,angle=0]{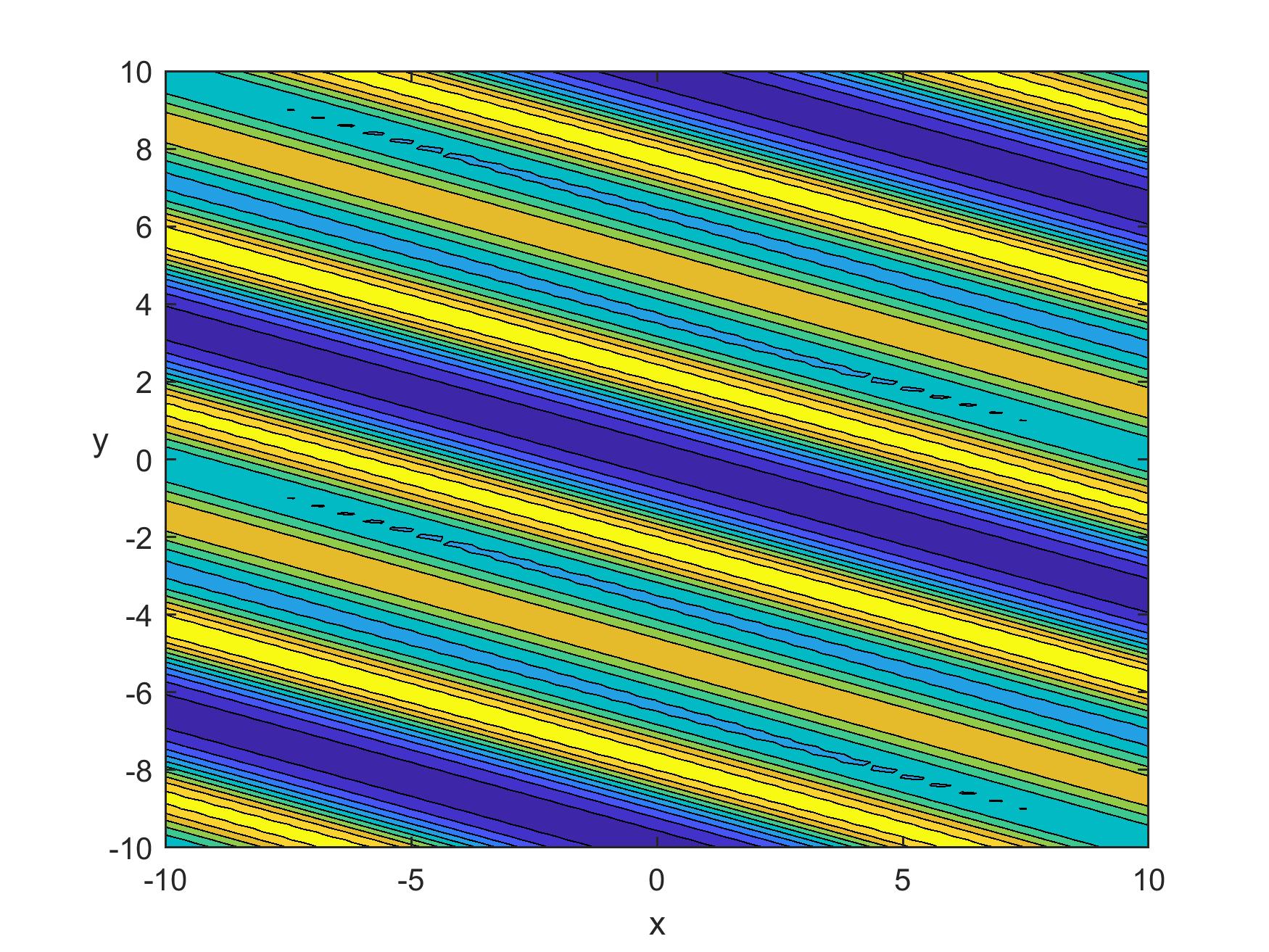}\\
(a) Three-dimensional plot of $u(n)$  & \quad  (b) Contour plot of $u(n)$
\end{tabular}
\end{center}
\caption{
Three-periodic waves of $u(n)$ with parameters in Table \ref{tab2}.}\label{p-wave-bs}
\end{figure}
\begin{figure}
\begin{center}
\begin{tabular}{ccc}
\includegraphics[height=0.220\textwidth,angle=0]{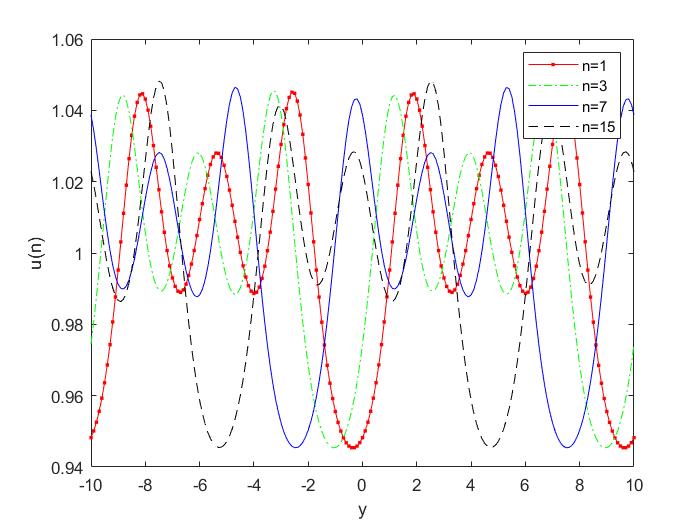} &
\includegraphics[height=0.220\textwidth,angle=0]{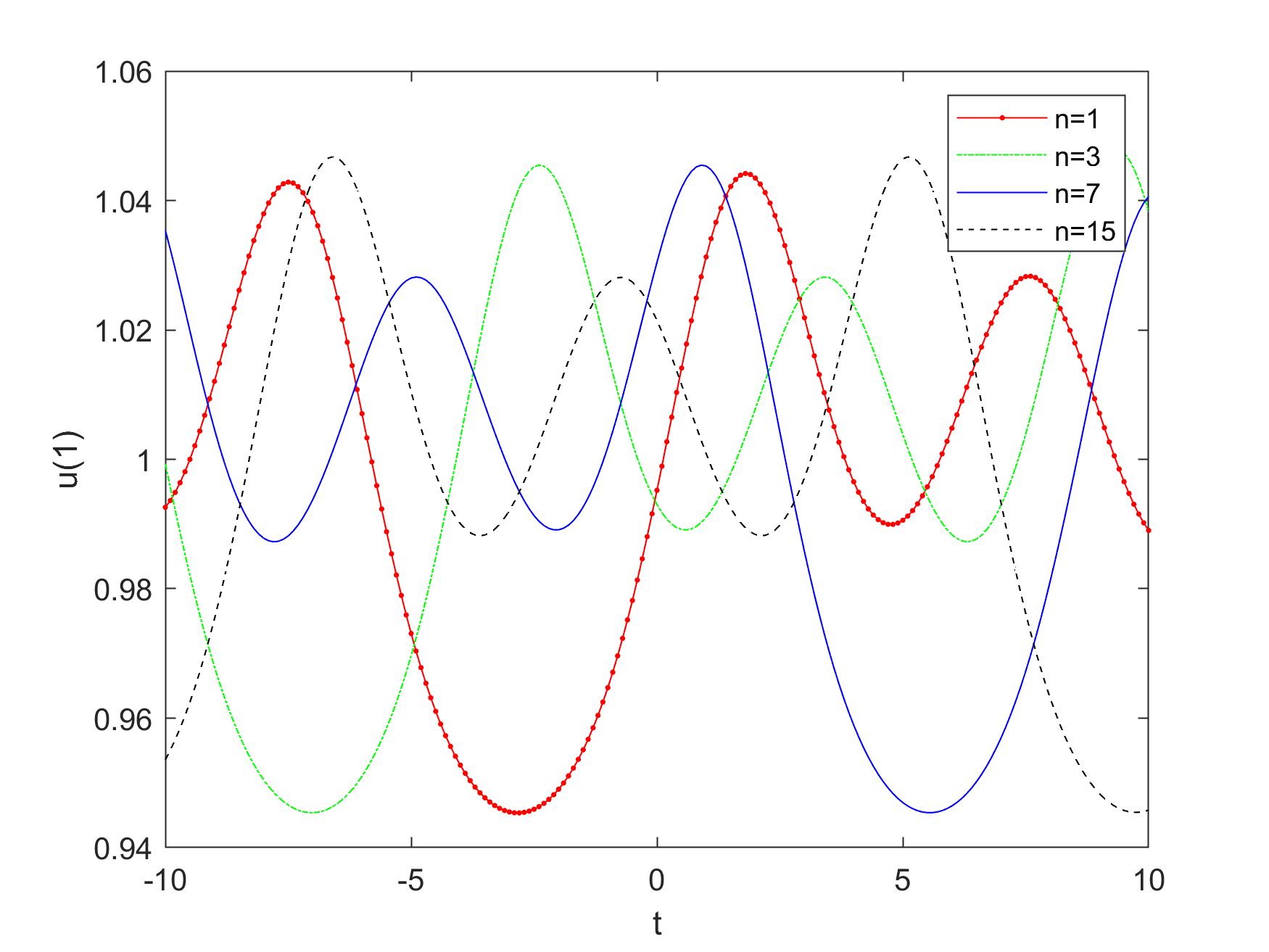}\\
(a) Three-periodic wave $u(n)$ with $t=0$  & \quad  (b) Three-periodic wave $u(n)$ with $n=1$
\end{tabular}
\end{center}
\caption{
Figures of 3-periodic waves to the BS lattice (\ref{gBS-1})-(\ref{gBS-3}) with $z=0$ and parameters in Table \ref{tab2}.}\label{p-wave-bs-2}
\end{figure}

\section{Conclusion and discussion}\label{sec7}
In this paper, we derive a novel BS lattice equations by introducing a new class of trigonometric-type bilinear operators $\sin(\delta D_z)$ and $\cos(\delta D_z)$. We present a unified Gram-type determinant expression of the solution to the variant BS lattice equation by using Hirota's bilinear method.  We investigate the interaction of two solitons and prove the elastic collision. Lump solutions are obtained via two approaches, the B\"acklund transformation and applying the differential operators. As to the second method, the lump solutions are represented in terms of Schur polynomials. We demonstrate three types of breather solutions, including Akhmediev breathers, Kuznetsov-Ma breathers, and generalized breathers. The numerical three-periodic wave solutions to the BS lattice and the variant BS lattice equation are derived by the Gauss-Newton method.

The geometric patterns of the rational solutions represented by Schur polynomials can be analyzed. The distribution of rogue waves or lumps is associated with the roots of the classic polynomials, such as Yablonskii-Vorobev polynomials and Wronskian-Hermite polynomials. In future work, we will investigate the geometric patterns exhibited by the variant BS lattice equations.

	\section*{\bf Acknowledgements}
   The work is supported by National Natural Science Foundation of
	China (Grant no. 12371251, 12175155) and Shanghai Frontier Research Institute for Modern Analysis and the Fundamental Research Funds for the Central Universities. \vskip .5cm

\section*{\bf  Declarations }

\section*{\bf Competing interests }
The authors declare that they have no competing financial interests. Additionally, there are no personal relationships or affiliations that could have influenced the work presented in this manuscript.

\section*{\bf Author contributions }
Two authors equally contributed to writing the paper.

\section*{\bf Data availability } 	No datasets were generated or utilized in this study.

\section*{\bf Conflict of interest }
The authors declare that they have no conflict of interest.

\section*{\bf Ethics approval and consent to participate }
There are no human or animal participants in this article and
informed consent is not applicable.

\section*{\bf Consent for publication}
The authors confirm the work described has not been published before.

\end{document}